\documentclass[a4paper]{article}

\usepackage[margin=1in]{geometry}

\usepackage[T1]{fontenc}
\usepackage[utf8]{inputenc}
\usepackage{lmodern}

\usepackage{microtype}

\usepackage{amsmath,amssymb,amsthm}

\usepackage{tikz}
\usetikzlibrary{arrows,automata,positioning}

\tikzset{
  position/.style args={#1:#2 from #3}{
    at=(#3.#1),
    anchor=#1+180,
    shift=(#1:#2)
  }
}

\usepackage{enumitem}
\usepackage{tabularx}

\usepackage{lineno}

\usepackage{natbib}

\usepackage{xcolor}
\usepackage{hyperref}

\hypersetup{
  colorlinks=true,
  linkcolor=blue,
  citecolor=blue,
  urlcolor=blue
}

\theoremstyle{plain}

\newtheorem{theorem}{Theorem}
\newtheorem{proposition}{Proposition}
\newtheorem{lemma}{Lemma}
\newtheorem{corollary}{Corollary}
\newtheorem{fact}{Fact}

\theoremstyle{definition}

\newtheorem{definition}{Definition}
\newtheorem{example}{Example}

\theoremstyle{remark}

\newcommand{\FGCGF}{\mathtt{GCGF}}

\newcommand{\FXX}{\mathtt{X}}

\newcommand{\FCL}{\mathsf{CL}}

\newcommand{\FAG}{\mathtt{AG}}
\newcommand{\FAC}{\mathtt{AC}}
\newcommand{\FST}{\mathtt{ST}}

\newcommand{\Fav}{\mathtt{av}}
\newcommand{\Fout}{\mathsf{out}}

\newcommand{\FJA}{\mathtt{JA}}

\newcommand{\FCC}{\mathrm{C}}

\newcommand{\FDD}{\mathrm{D}}
\newcommand{\FEE}{\mathrm{E}}

\newcommand{\CAG}{\mathtt{CAG}}

\newcommand{\PAC}{\mathsf{AC}}
\newcommand{\PAL}{\alpha}

\newcommand{\FACNF}{\mathtt{NF}^\mathtt{ac}}
\newcommand{\FALNF}{\mathtt{NF}^\alpha}

\newcommand{\Facnei}{\mathrm{N}^\mathrm{ac}}
\newcommand{\Falnei}{\mathrm{N}^\alpha}
\newcommand{\Falneinc}{\mathrm{CoreN}^\alpha}

\newcommand{\FALEF}{\mathrm{EF}^\alpha}
\newcommand{\FACEF}{\mathrm{EF}^\mathrm{ac}}

\newcommand{\FAF}{\mathtt{AF}}

\newcommand{\FES}{\mathrm{ES}}

\newcommand{\Fdefs}[1]{\textbf{#1}}
\newcommand{\putaway}[1]{}
\newcommand{\ab}{\allowbreak}

\title{Representation theorems for actual and alpha powers \\ over general concurrent game frames \\ without assuming
independence of agents}

\author{
Zixuan Chen$^{1,2}$, Fengkui Ju$^{3,4}$\thanks{Corresponding author.}, and Thomas {\AA}gotnes$^{5,6,7}$\\[5pt]
{\small $^{1}$Institute for Logic, Language and Computation, University of Amsterdam, Amsterdam}\\
{\small The Netherlands}\\[2.5pt]
{\small $^{2}$\href{mailto:zixuan.chen21@outlook.com}{zixuan.chen21@outlook.com}}\\[2.5pt]
{\small $^{3}$School of Philosophy, Beijing Normal University, Beijing, China}\\[2.5pt]
{\small $^{4}$\href{mailto:fengkui.ju@bnu.edu.cn}{fengkui.ju@bnu.edu.cn}}\\[2.5pt]
{\small $^{5}$Department of Information Science and Media Studies, University of Bergen, Bergen, Norway}\\[2.5pt]
{\small $^{6}$School of Philosophy, Shanxi University, Taiyuan, China}\\[2.5pt]
{\small $^{7}$\href{mailto:thomas.agotnes@uib.no}{thomas.agotnes@uib.no}}
}

\date{}

\begin{document}

\maketitle

\setlist[enumerate]{itemsep=3pt, topsep=5pt, parsep=3pt, partopsep=0pt}
\setlist[itemize]{itemsep=3pt, topsep=5pt, parsep=3pt, partopsep=0pt}

%\linenumbers

\begin{abstract}

Concurrent game frames are a standard semantic framework for logics of strategic reasoning. Two notions of coalition power can be derived from such frames: \emph{alpha powers} and \emph{actual powers}. 
An alpha power of a coalition is a set of possible futures such that the coalition has an action that forces the resulting future to lie in that set.
An actual power of a coalition is a set of possible futures satisfying the following condition: the coalition has an action such that (1) the action forces the resulting future to lie in the set, and (2) every future in the set is compatible with that action.

Recent generalizations of concurrent game frames separate three structural
assumptions built into the standard model: seriality, independence of agents,
and determinism. This yields eight classes of general concurrent game frames.

In this paper, we prove that for actual powers, the four classes of general concurrent game frames, where independence of agents is not assumed, are representable by four corresponding classes of neighborhood frames. Building on this result, we show that for alpha powers, the same four classes of general concurrent game frames are likewise representable by four corresponding classes of neighborhood frames.

\end{abstract}

%%%%%%%%%%%%%%%
%%%%%%%%%%%%%%%
\section{Introduction}
\label{sec:01-introduction}

In this section, we begin by recalling concurrent game frames, alpha and actual powers,
and neighborhood frames. We then explain the difference between alpha powers
and actual powers, formulate the neighborhood-frame representation problem for
both notions of power, and discuss the significance of such representation
results. After that, we introduce the eight classes of general concurrent game
frames obtained by varying seriality, independence of agents, and determinism.
The section closes with an outline of the main results of the paper.

\emph{Much of
the background material in this section is drawn from
\cite{chen_representation_2026}; it is recalled here to make the present paper
self-contained.}

%%%%%%%%%%%%%%%
%%%%%%%%%%%%%%%
\subsection{Concurrent game frames, alpha and actual powers, and alpha and actual neighborhood frames}
\label{subsec:01-01-concurrent-game-frames-alpha-and-actual-powers-and-alpha-and-actual-neighborhood-frames}

\emph{Concurrent game frames} are a standard semantic framework for logics of strategic reasoning. Many influential logics in this area are interpreted over concurrent game frames, including Coalition Logic $\mathsf{CL}$ \cite{pauly_modal_2002} and Alternating-time Temporal Logic $\mathsf{ATL}$ \cite{alur_alternating-time_2002}. Roughly speaking, a concurrent game frame consists of states and agents (forming coalitions), where, at each state, every coalition has available joint actions, and each such action induces a set of possible outcome states.

Two notions of coalition power arise from concurrent game frames: \emph{alpha powers} and \emph{actual powers}. An \emph{alpha power} of a coalition is a set of possible futures such that the coalition has an action that forces every resulting outcome to lie in that set. An \emph{actual power} of a coalition is a set of possible futures such that the coalition has an action satisfying both of the following conditions: (1) the action forces every resulting outcome to lie in the set, and (2) every future in the set is realizable by the complement coalition (and the environment). Thus, unlike alpha powers, actual powers explicitly involve agents outside the coalition.

Correspondingly, one obtains two kinds of neighborhood frames: \emph{alpha neighborhood frames} and \emph{actual neighborhood frames}. In an alpha neighborhood frame, each coalition is assigned an \emph{alpha neighborhood function} that, at every state, specifies an \emph{alpha neighborhood}, i.e., a set of alpha powers. In an actual neighborhood frame, each coalition is assigned an \emph{actual neighborhood function} that, at every state, specifies an \emph{actual neighborhood}, i.e., a set of actual powers.

\emph{Alpha powers} are widely used in game theory; see, e.g.,~\cite{moulin_cores_1982}. 
By contrast, there appears to be comparatively little work on \emph{actual} powers. 
To the best of our knowledge, the main reference is van Benthem, Bezhanishvili, and Enqvist~\cite{benthem_new_2019}. 
Working primarily from the perspective of game equivalence, they investigate \emph{basic} powers of singleton coalitions in turn-based two-agent extensive games with imperfect information, defined in terms of (uniform) strategies. 
Conceptually, this notion of basic power coincides with what we call \emph{actual} power.

%%%%%%%%%%%%%%%
%%%%%%%%%%%%%%%
\subsection{Alpha powers versus actual powers}
\label{subsec:01-02-alpha-powers-versus-actual-powers}

Alpha and actual powers abstract from the underlying actions in different ways. Alpha powers record guarantees: if a coalition has an action whose possible outcomes are all contained in a set \(X\), then the same action also guarantees every superset of \(X\). Hence alpha powers are upward closed. Actual powers, by contrast, record the precise outcome set generated by an available action of the coalition, that is, the set of states left possible once that action is fixed and the remaining agents' choices are left open. They are therefore generally non-monotonic.

Actual powers provide a more precise account of coalition power in concurrent game frames than alpha powers.
Consider the following example.

%%%%%%%%%%%%%%%
%%%%%%%%%%%%%%%
\begin{example}[Same alpha powers, different actual powers]
\label{ex:01-same-alpha-powers-different-actual-powers}

Consider two AI-driven robots, $a$ and $b$, controlling a warehouse gate.
The two robots send their inputs to the gate controller simultaneously, and
robot $a$ has higher authority than robot $b$.

There are two relevant states, \(W=\{w_1,w_2\},\) where $w_1$ is the state in which the gate is closed, and $w_2$ is the state in
which the gate is open.
Robot $a$ has three actions: \(\mathtt{open}_a, \mathtt{close}_a, \mathtt{delegate}_a.\)
The first action opens the gate, the second closes the gate, and the third
delegates the decision to robot $b$. Robot $b$ has two actions: \(\mathtt{open}_b, \mathtt{close}_b.\)

We consider two scenarios. In both scenarios, if $a$ chooses
$\mathtt{open}_a$, the gate opens, and if $a$ chooses
$\mathtt{close}_a$, the gate closes, independently of $b$'s action.
The scenarios differ only in the effect of $\mathtt{delegate}_a$. In the first
scenario, delegation works as intended: if $a$ chooses
$\mathtt{delegate}_a$, then $b$'s action determines whether the gate opens or
closes. In the second scenario, delegation is ineffective: if $a$ chooses
$\mathtt{delegate}_a$, then the gate closes regardless of $b$'s action.

The two scenarios induce the same alpha powers for every coalition:
\[
\begin{array}{c|c}
\text{coalition} & \text{alpha powers} \\
\hline
\emptyset & \{W\} \\
\{a\} & \bigl\{\{w_1\},\{w_2\},W\bigr\} \\
\{b\} & \{W\} \\
\{a,b\} & \bigl\{\{w_1\},\{w_2\},W\bigr\}.
\end{array}
\]

However, the actual powers of $\{a\}$ differ. In the first scenario, $\{a\}$
has the following actual powers: \(\{w_1\}, \{w_2\}, W,\)
because $a$ can close, open, or delegate. In the second scenario, $\{a\}$ has
only the following actual powers: \(\{w_1\}, \{w_2\}.\)
Indeed, in the second scenario, choosing $\mathtt{delegate}_a$ no longer leaves
both outcomes open; it has the same exact outcome set as choosing
$\mathtt{close}_a$.

Thus the two scenarios are indistinguishable by alpha powers but are distinguished
by actual powers.

\end{example}

As noted above, actual powers explicitly keep track of the role of agents
outside the coalition. This makes them useful for modeling social scenarios
in which a coalition's ability is not exhausted by what it can guarantee, but
also depends on what its action leaves open to others. A central logical
example is Socially Friendly Coalition Logic~\cite{goranko_socially_2018}.
Its operator $[\FCC](\phi;\psi_1,\dots,\psi_k)$ says that coalition $\FCC$ has
a collective action $\sigma_\FCC$ that guarantees $\phi$, while still allowing
the complementary coalition $\overline{\FCC}$ to realize each of the
alternatives $\psi_1,\dots,\psi_k$ by a suitable response. Semantically, this
amounts to requiring an actual power \(X\) of $\FCC$ such that every state in
\(X\) satisfies \(\phi\), and, for each \(i \leq k\), some state in \(X\)
satisfies \(\psi_i\). The second requirement is precisely what cannot be captured by alpha powers
alone.

%%%%%%%%%%%%%%%
%%%%%%%%%%%%%%%
\subsection{Alpha and actual representation}
\label{subsec:01-03-alpha-and-actual-representation}

Every concurrent game frame induces an alpha neighborhood frame.
Pauly~\cite{pauly_modal_2002} and Goranko, Jamroga, and Turrini~\cite{goranko_strategic_2013} showed that the class of concurrent game frames is \emph{representable} by a class of alpha neighborhood frames characterized by certain \emph{good} properties of neighborhood functions, in the following sense:
(1) every alpha neighborhood frame induced by a concurrent game frame satisfies these properties, and
(2) every alpha neighborhood frame satisfying these properties is induced by some concurrent game frame.
This result is the \emph{alpha representation theorem} for concurrent game frames.

Every concurrent game frame also induces an actual neighborhood frame. Do there exist \emph{good} properties of neighborhood functions such that
(1) every actual neighborhood frame induced by a concurrent game frame satisfies these properties, and
(2) every actual neighborhood frame satisfying these properties is induced by some concurrent game frame?
If so, this would be an \emph{actual representation theorem} for concurrent game frames.

This question has not yet been fully settled in the literature. Among published results, the closest one is due to van Benthem, Bezhanishvili, and Enqvist~\cite{benthem_new_2019}. There, the authors establish a representation theorem for \emph{basic} (strategy-based) powers in turn-based two-agent extensive games with imperfect information, under the restriction to singleton coalitions. Via a standard power-invariance transformation between the class of two-agent concurrent game frames and an appropriate class of turn-based two-agent extensive games with imperfect information, their proof readily yields a representation theorem for actual powers in two-agent concurrent game frames, again restricted to singleton coalitions.
A recent preprint~\cite{chen_representation_2026} gives a systematic
two-agent treatment of this question, together with the corresponding alpha
representation question, in the broader setting of general concurrent game
frames, which will be discussed below.

%%%%%%%%%%%%%%%
%%%%%%%%%%%%%%%
\subsection{Significance of the representation theorems}
\label{subsec:01-04-significance-of-the-representation-theorems}

What are the powers of players, and coalitions of players, in games? Which sets of outcomes -- known as \emph{events} in probability theory or \emph{propositions} in logic -- is it in their power to make come about, in the sense that they can choose a (joint) action such that no matter what the other agents do, this is what will happen? This abstracts away the particulars of a game, and leaves an abstract representation of power which is exactly what we are often interested in when we analyse games. In order to understand power in games, the central question then is: what are the \emph{properties} (or \emph{axioms} or \emph{laws}) of individual and coalitional power induced from games in this way? That is exactly what Pauly's representation theorem \cite{pauly_modal_2002} answers, for alpha powers.
That is the main motivation, and the main significance of the representation theorems for actual power: giving us a deeper understanding of the power in games.

While characterizing the laws of power in games more precisely is in itself a
main motivation behind the representation theorems, they are also, as already
mentioned, highly relevant for the semantics of certain modal logics.
They help provide alternative neighborhood semantics. They also establish the
adequacy of a power-based abstraction of action frames. A concurrent game frame
explicitly records action names, availability functions, and outcome functions.
Yet many modal languages for strategic reasoning are invariant under changes to
these action-level components, provided that the induced powers remain
unchanged. Such languages refer only to the powers generated by actions: what
coalitions can force, in the case of alpha powers, or the exact outcome ranges
associated with coalition actions, in the case of actual powers. A
representation theorem identifies precisely which abstract neighborhood frames
can arise from action-based game frames. Thus, for languages whose truth
conditions depend only on the relevant notion of power, it justifies replacing
action-based models with neighborhood-based models without loss of semantic
information.

For alpha powers, this is the familiar role of Pauly's representation theorem.
When only enforceable sets matter, as in $\FCL$, semantics over concurrent game
frames can be transferred to semantics over alpha neighborhood frames. The
latter describe coalitional abilities directly, without retaining the
additional action-level structure of the underlying game. They therefore
provide a more economical setting, and often a more convenient one, for
metatheoretic arguments. For instance, Pauly~\cite{pauly_modal_2002} proved
completeness of $\FCL$ via alpha neighborhood frames, and {\AA}gotnes and
Alechina~\cite{agotnes_coalition_2019} introduced knowledge into $\FCL$ within
the same semantic framework.

Actual representation plays an analogous, though less developed, role in the
logical setting. While Coalition Logic yields the same logic
whether we interpret its language in neighborhood models induced from games
using alpha powers or in models induced using actual powers\footnote{Here we
assume that the modality $[\FCC]\phi$ is interpreted as the existence of a choice
$X$ of states available to $\FCC$ at $s$ such that $\phi$ is true at every state in
$X$. The semantics of these operators is usually given by requiring the
extension of $\phi$ (the set of states where $\phi$ is true) to be an available
choice for $\FCC$ at $s$, which is equivalent given monotonicity. See
\cite{pacuit2017neighborhood} for a discussion of the two definitions.}, the
same is not true for languages whose modalities are sensitive not only to what
a coalition can guarantee, but also to the residual range of outcomes left open
by a coalition action. A representation theorem then justifies
treating these outcome sets themselves as semantic primitives. This may provide
a compact basis for completeness proofs, canonical-model constructions,
definability arguments, and comparisons between models.

As noted above, Socially Friendly Coalition
Logic~\cite{goranko_socially_2018} is naturally understood as a logic of
coalitions' actual powers in concurrent game frames. The completeness of its
two-agent case has been established~\cite{goranko_complete_2026}, whereas the
general multi-agent case remains open. An actual representation theorem
therefore offers a natural route from the action-based semantics of such
operators to actual-neighborhood semantics. It does not by itself solve the
completeness problem, but it provides the semantic infrastructure needed to
study such questions without explicitly retaining action names and outcome
functions.

%%%%%%%%%%%%%%%
%%%%%%%%%%%%%%%
\subsection{Eight classes of general concurrent game frames}
\label{subsec:01-05-eight-classes-of-general-concurrent-game-frames}

Li and Ju~\cite{li_minimal_2025, li_completeness_2026} argued that standard concurrent game frames rely on three assumptions that can be too strong in practice: \emph{seriality}, \emph{independence of agents}, and \emph{determinism}.
We briefly recall their main motivations.

Games often terminate upon reaching designated states. For example, a rock--paper--scissors game ends once a winner is determined. Intuitively, at such terminal states, agents may have no available actions; hence seriality may fail.

There are also situations in which whether a coalition can perform an action depends on what other agents do at the same time. For example, suppose two agents, $a$ and $b$, are in a room with a single chair. Agent $a$ can sit, and agent $b$ can sit, but they cannot both sit simultaneously. In such cases, independence of agents fails.

Moreover, in many scenarios, a joint action of all participating agents may lead to more than one possible outcome state. The following example (from \cite{sergot_some_2014}) illustrates this point. A vase stands on a table, and an agent $a$ can raise or lower one end of the table. If the table tilts, the vase may fall; and if it falls, it may break. Thus, determinism may fail.

Motivated by such examples, Li and Ju~\cite{li_minimal_2025, li_completeness_2026} introduced eight classes of general concurrent game frames, determined by which of the three properties are assumed, and studied the corresponding eight generalized coalition logics.

%%%%%%%%%%%%%%%
%%%%%%%%%%%%%%%
\subsection{Our work}
\label{subsec:01-06-our-work}

As mentioned above, the recent preprint~\cite{chen_representation_2026}
gives a systematic two-agent picture. For each of the eight classes of
two-agent general concurrent game frames determined by seriality, independence
of agents, and determinism, it establishes two kinds of representation results. First, it defines a finite set
of natural properties of neighborhood frames and shows that the class of
neighborhood frames determined by these properties represents the class of
general concurrent game frames for actual powers. Second, it defines several
natural properties of neighborhood frames and shows that the class of
neighborhood frames determined by these properties represents the class of
general concurrent game frames for alpha powers.

For actual powers, however, that preprint also shows that the direct
finite-agent reading of the two-agent conditions is no longer sufficient once
there are at least three agents. The
reason is that they only regulate the behaviour of actual powers along
inclusions of coalitions. They do not record whether powers of overlapping coalitions can be generated as
restrictions of one coherent system of full-action witnesses.
Once at least three agents are present, actual powers assigned to overlapping
coalitions must also be globally compatible.

In this paper, we study how the two-agent representation picture extends to
arbitrary finite nonempty sets of agents.

We discuss properties of neighborhood frames at three levels: the index level,
the power-profile level, and the power level. The index-level conditions use
indices, which intuitively indicate abstract action profiles. The
power-profile-level conditions use power profiles, which consist of one actual
power for each coalition. The power-level conditions use actual powers. In
general, the index-level conditions are too abstract, whereas the power-level conditions are most transparent.

Our contributions are as follows.

First, for each of the eight classes of finite-agent general concurrent game
frames determined by seriality, independence of agents, and determinism, at the
index level, we give a finite set of properties of neighborhood frames and show
that the class of neighborhood frames determined by these properties represents
the class of general concurrent game frames for actual powers.

Second, for each of the four classes of general concurrent game frames
determined by seriality and determinism, at the power-profile level, we give a
finite set of properties of neighborhood frames. We show that these properties
are equivalent to the corresponding properties involving indices. It follows
that the four classes of neighborhood frames determined by these
power-profile-level properties represent the corresponding classes of general
concurrent game frames for actual powers.

Third, for alpha powers, we show that, for each of the four classes of general concurrent game
frames determined by seriality and determinism, the two-agent properties at the
level of powers generalize to arbitrary finite nonempty sets of agents.

Fourth, we analyze the obstruction caused by independence.

The remainder of the paper is organized as follows.

\begin{itemize}

\item Section~\ref{sec:02-action-frames-powers-neighborhood-frames-and-representability} fixes the basic framework. It recalls
action frames, alpha and actual powers, alpha and actual neighborhood frames,
and the corresponding notions of representability.

\item Section~\ref{sec:03-eight-classes-of-general-concurrent-game-frames}
reviews general concurrent game frames and the eight classes obtained by
imposing or omitting seriality, independence of agents, and determinism.

\item Section~\ref{sec:04-background-representation-theorems} recalls the relevant
two-agent representation results, the finite-agent alpha havingness direction,
and the three-agent counterexample showing that the direct finite-agent
analogue of the two-agent actual-power conditions is not sufficient.

\item Section~\ref{sec:05-representing-actual-powers-via-abstract-presentations}
isolates the abstract witness structure behind actual representation. It
defines abstract actual presentations and proves the abstract havingness and
enoughness theorems.

\item
Section~\ref{sec:06-coherent-profile-covers}
develops the finite-agent machinery at the level of power profiles: profile
agreement, the support operator, coherent profile covers, coherent
coverability, and the coherent core, and proves the translations between
coherent profile covers and abstract actual presentations.

\item
Section~\ref{sec:07-representing-actual-powers-without-assuming-independence-via-coherent-profile-covers}
formulates finite-agent \(\PAC\)-representativeness directly in terms of
coherent profile covers and proves the corresponding actual representation
theorems for the four classes of general concurrent game frames that impose no
independence condition.

\item
Section~\ref{sec:08-alpha-representation-without-assuming-independence-via-actual-representation}
establishes the corresponding alpha representation theorems for the same four classes of general concurrent game frames, by extracting a suitable actual basis from alpha neighborhoods.

\item Section~\ref{sec:09-amalgamation-of-power-profiles-is-insufficient-for-independence} discusses independence and
amalgamation of power profiles. It shows how abstract amalgamation induces a
natural profile-level condition, and explains why this condition is still not sufficient for independent actual representation in the finite-agent
setting.

\item Section~\ref{sec:10-concluding-remarks} concludes with final remarks and
directions for future work.

\end{itemize}

\emph{Sections~\ref{sec:02-action-frames-powers-neighborhood-frames-and-representability},
\ref{sec:03-eight-classes-of-general-concurrent-game-frames}, and
\ref{sec:04-background-representation-theorems} are adapted
from~\cite{chen_representation_2026}. We include this material to make the
paper self-contained.}

%%%%%%%%%%%%%%%
%%%%%%%%%%%%%%%
\section{Action frames, powers, neighborhood frames, and representability}
\label{sec:02-action-frames-powers-neighborhood-frames-and-representability}

In this section, we first introduce action frames in an abstract form; the additional
constraints defining general concurrent game frames will be imposed in
Section~\ref{sec:03-eight-classes-of-general-concurrent-game-frames}. We then
define alpha and actual powers, alpha and actual neighborhood frames, and the
corresponding notions of representability.

%%%%%%%%%%%%%%%
%%%%%%%%%%%%%%%
\subsection{Action frames}
\label{subsec:02-01-action-frames}

Let $\FAG$ be a finite nonempty set of agents. Subsets of $\FAG$ are called \Fdefs{coalitions}, and $\FAG$ itself is called the \Fdefs{grand coalition}.
In what follows, whenever no confusion arises, we write $a$ instead of $\{a\}$ for $a \in \FAG$.

Let $\FAC$ be a nonempty set of actions. For each $\FCC \subseteq \FAG$, define
\(\FJA_\FCC = \{\sigma_\FCC \mid \sigma_\FCC: \FCC \rightarrow \FAC\},\)
the set of \Fdefs{joint actions} of $\FCC$. Note $\FJA_\emptyset =\{\emptyset\}$. For $\FCC, \FDD \subseteq \FAG$ such that $\FCC \subseteq \FDD$, and for every $\sigma_\FDD \in \FJA_\FDD$, we denote by $\sigma_\FDD|_\FCC$ the \Fdefs{restriction} of $\sigma_\FDD$ to $\FCC$, which is an element of $\FJA_\FCC$.
If $\sigma_\FCC\in\FJA_\FCC$ and $\sigma_\FDD\in\FJA_\FDD$ agree on
$\FCC\cap\FDD$, we write $\sigma_\FCC\cup\sigma_\FDD$ for their union, which
is an element of $\FJA_{\FCC\cup\FDD}$. In particular, this notation is always
defined when $\FCC$ and $\FDD$ are disjoint.
In the sequel, we sometimes represent joint actions of coalitions as action sequences, implicitly assuming a fixed order of agents.

%%%%%%%%%%%%%%%
%%%%%%%%%%%%%%%
\begin{definition}[Action frames]
\label{def:01-action-frames}

An \Fdefs{action frame} is a tuple
\[
\FAF = (\FST, \FAC, \{\Fav_\FCC \mid \FCC \subseteq \FAG\}, \{\Fout_\FCC \mid \FCC \subseteq \FAG\}),
\]
where:

\begin{itemize}

%%%
\item
$\FST$ is a nonempty set of states.

%%%%
\item
$\FAC$ is a nonempty set of actions.

%%%%
\item
For each $\FCC \subseteq \FAG$, $\Fav_\FCC: \FST \rightarrow \mathcal{P}(\FJA_\FCC)$ is an \Fdefs{availability function} for $\FCC$.

\emph{Here, $\Fav_\FCC(s)$ is the set of all joint actions of coalition $\FCC$ that are available at state $s$.}

%%%%
\item
For each $\FCC \subseteq \FAG$, $\Fout_\FCC: \FST \times \FJA_\FCC \rightarrow \mathcal{P}(\FST)$ is an \Fdefs{outcome function} for $\FCC$.

\emph{Here, $\Fout_\FCC(s, \sigma_\FCC)$ is the set of possible outcome states when coalition $\FCC$ performs joint action $\sigma_\FCC$ at state $s$.}

\end{itemize}

\end{definition}

Note that the action set $\FAC$ may vary across action frames.

Different constraints may be imposed on action frames.

Intuitively, $\Fout_\emptyset(s,\emptyset)$ contains all states to which $s$ may evolve, i.e., all possible successors of $s$.

%%%%%%%%%%%%%%%
%%%%%%%%%%%%%%%
\subsection{Alpha and actual powers}
\label{subsec:02-02-alpha-and-actual-powers}

Fix an action frame
\[
\FAF = (\FST, \FAC, \{\Fav_\FCC \mid \FCC \subseteq \FAG\}, \{\Fout_\FCC \mid \FCC \subseteq \FAG\}),
\]
a state $s \in \FST$, a coalition $\FCC \subseteq \FAG$, and an available joint action $\sigma_\FCC \in \Fav_\FCC(s)$.
We say that a set of states $X \subseteq \FST$ is \Fdefs{safe} for $\sigma_\FCC$ at $s$ if \(\Fout_\FCC(s,\sigma_\FCC)\subseteq X\), that is, performing $\sigma_\FCC$ at $s$ guarantees that the next state lies in $X$.
We say that $X$ is \Fdefs{tight} for $\sigma_\FCC$ at $s$ if \(X\subseteq \Fout_\FCC(s,\sigma_\FCC)\), that is, every state in $X$ can occur as an outcome compatible with performing $\sigma_\FCC$ at $s$.

An \Fdefs{alpha power} of $\FCC$ at $s$ is a set of states $X \subseteq \FST$ that is safe for some available joint action of $\FCC$ at $s$.
An \Fdefs{actual power} of $\FCC$ at $s$ is a set of states $X \subseteq \FST$ that is both safe and tight for some available joint action of $\FCC$ at $s$. Equivalently, actual powers of $\FCC$ at $s$ are exactly the sets of the form $\Fout_\FCC(s,\sigma_\FCC)$ for $\sigma_\FCC \in \Fav_\FCC(s)$.
We say that an (alpha or actual) power $X$ \Fdefs{enables} a state $t \in \FST$ at $s$ if $t \in X$.

Alpha powers are monotonic: if $X$ is an alpha power, then every superset $X' \supseteq X$ is also an alpha power.
By contrast, actual powers are generally non-monotonic, since tightness rules out adding states that are not actually reachable.

%%%%%%%%%%%%%%%
%%%%%%%%%%%%%%%
\subsection{Alpha and actual neighborhood frames}
\label{subsec:02-03-alpha-and-actual-neighborhood-frames}

%%%%%%%%%%%%%%%
%%%%%%%%%%%%%%%
\begin{definition}[Alpha neighborhood frames]
\label{def:02-alpha-neighborhood-frames}

An \Fdefs{alpha neighborhood frame} is a tuple
\(\FALNF = (\FST, \{\Falnei_\FCC \mid \FCC \subseteq \FAG\}),\)
where:

\begin{itemize}

\item
$\FST$ is a nonempty set of states;

\item
For each $\FCC \subseteq \FAG$, $\Falnei_\FCC: \FST \to \mathcal{P}(\mathcal{P}(\FST))$ is an \Fdefs{alpha neighborhood function} of $\FCC$ such that, for every $s \in \FST$, the set $\Falnei_\FCC(s)$ is closed under supersets.

\end{itemize}

\end{definition}

Different constraints may be imposed on alpha neighborhood frames.

The set $\Falnei_\FCC(s)$ is called the \Fdefs{alpha neighborhood} of $\FCC$ at $s$. Intuitively, $\Falnei_\FCC(s)$ consists of the alpha powers of $\FCC$ at $s$.

%%%%%%%%%%%%%%%
%%%%%%%%%%%%%%%
\begin{definition}[Nonmonotonic cores]
\label{def:03-nonmonotonic-cores}

Let $\FALNF = (\FST, \{\Falnei_\FCC \mid \FCC \subseteq \FAG\})$ be an alpha neighborhood frame.
For each $\FCC \subseteq \FAG$, define $\Falneinc_\FCC$, called the \Fdefs{nonmonotonic core} of $\Falnei_\FCC$, by setting, for every $s \in \FST$,
\[
\Falneinc_\FCC(s)
=
\{\, X \in \Falnei_\FCC(s) \mid \text{there is no } Y \in \Falnei_\FCC(s) \text{ with } Y \subset X \,\}.
\]
That is, $\Falneinc_\FCC(s)$ is the set of $\subseteq$-minimal elements of $\Falnei_\FCC(s)$.
For each $s \in \FST$, the set $\Falneinc_\FCC(s)$ is also called the \Fdefs{nonmonotonic core} of $\Falnei_\FCC(s)$.

\end{definition}

Intuitively, the elements of $\Falneinc_\FCC(s)$ can be viewed as candidate actual powers of $\FCC$ at $s$.
However, one should not expect $\Falneinc_\FCC(s)$ to contain all actual powers of $\FCC$ at $s$.
Indeed, it may happen that one actual power properly contains another, whereas such proper containments cannot occur among elements of $\Falneinc_\FCC(s)$.

Note that $\Falneinc_\FCC(s)$ may be empty even when $\Falnei_\FCC(s)$ is nonempty. However, this cannot occur when $\FST$ is finite.

Later, $\bigcup \Falneinc_\emptyset(s)$ will be understood as the set of all states to which $s$ may evolve, i.e., the possible successors of $s$.

%%%%%%%%%%%%%%%
%%%%%%%%%%%%%%%
\begin{definition}[Actual neighborhood frames]
\label{def:04-actual-neighborhood-frames}

An \Fdefs{actual neighborhood frame} is a tuple
\(\FACNF = (\FST, \{\Facnei_\FCC \mid \FCC \subseteq \FAG\}),\)
where:

\begin{itemize}

\item
$\FST$ is a nonempty set of states;

\item
For each $\FCC \subseteq \FAG$, $\Facnei_\FCC: \FST \to \mathcal{P}(\mathcal{P}(\FST))$ is an \Fdefs{actual neighborhood function} of $\FCC$.

\end{itemize}

\end{definition}

Different constraints may be imposed on actual neighborhood frames. Once additional constraints are imposed on actual neighborhood frames, alpha neighborhood frames
may no longer be actual neighborhood frames.

The set $\Facnei_\FCC(s)$ is called the \Fdefs{actual neighborhood} of $\FCC$ at $s$. Intuitively, $\Facnei_\FCC(s)$ consists of the actual powers of coalition $\FCC$ at state $s$.

In the intended interpretation, $\bigcup \Facnei_\emptyset(s)$ consists of all states to which $s$ may evolve, i.e., the possible successors of $s$.

%%%%%%%%%%%%%%%
%%%%%%%%%%%%%%%
\subsection{Representation of action frames by alpha and actual neighborhood frames}
\label{subsec:02-04-representation-of-action-frames-by-alpha-and-actual-neighborhood-frames}

%%%%%%%%%%%%%%%
%%%%%%%%%%%%%%%
\begin{definition}[Alpha and actual effectivity functions of action frames]
\label{def:05-alpha-and-actual-effectivity-functions-of-action-frames}

Let $\FAF = (\FST, \FAC, \{\Fav_\FCC \mid \FCC \subseteq \FAG\}, \{\Fout_\FCC \mid \FCC \subseteq \FAG\})$ be an action frame.
For each $\FCC \subseteq \FAG$, define:

\begin{itemize}

\item
the \Fdefs{alpha effectivity function} $\FALEF_\FCC$ for $\FCC$ in $\FAF$ by setting, for every $s \in \FST$,
\[
\FALEF_\FCC(s)
=
\{\, Y \subseteq \FST \mid \Fout_\FCC(s,\sigma_\FCC) \subseteq Y \text{ for some } \sigma_\FCC \in \Fav_\FCC(s) \,\};
\]

\item
the \Fdefs{actual effectivity function} $\FACEF_\FCC$ for $\FCC$ in $\FAF$ by setting, for every $s \in \FST$,
\[
\FACEF_\FCC(s)
=
\{\, \Fout_\FCC(s,\sigma_\FCC) \mid \sigma_\FCC \in \Fav_\FCC(s) \,\}.
\]

\end{itemize}

\end{definition}

From the definition, it is immediate
that each $\FALEF_\FCC(s)$ is closed under supersets.

%%%%%%%%%%%%%%%
%%%%%%%%%%%%%%%
\begin{definition}[$\PAL$-representability and $\PAC$-representability of action frames]
\label{def:06-pal-representability-and-pac-representability-of-action-frames}

Let $\FAF = (\FST, \FAC, \ab \{\Fav_\FCC \mid \FCC \subseteq \FAG\}, \{\Fout_\FCC \mid \FCC \subseteq \FAG\})$ be an action frame.
We say that $\FAF$ is \Fdefs{$\PAL$-representable} by an alpha neighborhood frame \(\FALNF = (\FST, \{\Falnei_\FCC \mid \FCC \subseteq \FAG\})\)
if, for every $\FCC \subseteq \FAG$, we have $\FALEF_\FCC = \Falnei_\FCC$.

We say that $\FAF$ is \Fdefs{$\PAC$-representable} by an actual neighborhood frame \(\FACNF = (\FST, \{\Facnei_\FCC \mid \FCC \subseteq \FAG\})\)
if, for every $\FCC \subseteq \FAG$, we have $\FACEF_\FCC = \Facnei_\FCC$.

\end{definition}

Intuitively, $\PAL$-representability (resp.\ $\PAC$-representability) means that the corresponding alpha (resp.\ actual) neighborhood frame captures all information about the alpha (resp.\ actual) powers of coalitions in the action frame.

%%%%%%%%%%%%%%%
%%%%%%%%%%%%%%%
\begin{definition}[$\PAL$-representability and $\PAC$-representability of classes of action frames]
\label{def:07-pal-representability-and-pac-representability-of-classes-of-action-frames}

Let $\mathbf{AF}$ be a class of action frames.
We say that $\mathbf{AF}$ is \Fdefs{$\PAL$-representable} (resp.\ \Fdefs{$\PAC$-representable}) by a class of alpha (resp.\ actual) neighborhood frames $\alpha\text{-}\mathbf{NF}$ (resp.\ $\mathbf{AC}\text{-}\mathbf{NF}$) if:

\begin{itemize}

\item
Every action frame in $\mathbf{AF}$ is $\PAL$-representable (resp.\ $\PAC$-representable) by some alpha (resp.\ actual) neighborhood frame in $\alpha\text{-}\mathbf{NF}$ (resp.\ $\mathbf{AC}\text{-}\mathbf{NF}$);

\item
Every alpha (resp.\ actual) neighborhood frame in $\alpha\text{-}\mathbf{NF}$ (resp.\ $\mathbf{AC}\text{-}\mathbf{NF}$) $\PAL$-represents (resp.\ $\PAC$-represents) some action frame in $\mathbf{AF}$.

\end{itemize}

\end{definition}

Suppose that a class of action frames is $\PAL$-representable (resp.\ $\PAC$-representable) by a class of alpha (resp.\ actual) neighborhood frames. Intuitively, this means that, as far as alpha (resp.\ actual) powers are concerned, one can translate the information carried by action frames into the corresponding neighborhood frames.

Now fix a class of action frames $\mathbf{AF}$ and a collection of properties of alpha (resp.\ actual) neighborhood frames.
To show that $\mathbf{AF}$ is $\PAL$-representable (resp.\ $\PAC$-representable) by the class of alpha (resp.\ actual) neighborhood frames satisfying these properties, it suffices to establish the following two statements:

\begin{itemize}

\item
\Fdefs{Havingness} (necessity): every alpha (resp.\ actual) neighborhood frame that $\PAL$-represents (resp.\ $\PAC$-represents) an action frame in $\mathbf{AF}$ satisfies these properties;

\item
\Fdefs{Enoughness} (sufficiency): every alpha (resp.\ actual) neighborhood frame satisfying these properties $\PAL$-represents (resp.\ $\PAC$-represents) an action frame in $\mathbf{AF}$.

\end{itemize}

Later we will follow exactly this pattern.

%%%%%%%%%%%%%%%
%%%%%%%%%%%%%%%
\subsubsection*{On nontrivial representation}

Fix a class of action frames $\mathbf{AF}$. From each $\FAF \in \mathbf{AF}$, we obtain an alpha (resp.\ actual) neighborhood frame $\FALNF$ (resp.\ $\FACNF$) that $\PAL$-represents (resp.\ $\PAC$-represents) $\FAF$.
Let $\alpha\text{-}\mathbf{NF}$ (resp.\ $\mathbf{AC}\text{-}\mathbf{NF}$) be the class of all such alpha (resp.\ actual) neighborhood frames.
Then, trivially, $\mathbf{AF}$ is $\PAL$-representable (resp.\ $\PAC$-representable) by $\alpha\text{-}\mathbf{NF}$ (resp.\ $\mathbf{AC}\text{-}\mathbf{NF}$).
This observation is not very informative: to obtain a meaningful representation, the classes $\alpha\text{-}\mathbf{NF}$ and $\mathbf{AC}\text{-}\mathbf{NF}$ should be specified by some \emph{good} properties of alpha and actual neighborhood frames rather than by construction from $\mathbf{AF}$. Of course, \emph{good} is a vague notion.

%%%%%%%%%%%%%%%
%%%%%%%%%%%%%%%
\section{Eight classes of general concurrent game frames}
\label{sec:03-eight-classes-of-general-concurrent-game-frames}

In this section, we recall the eight classes of general concurrent game frames
introduced by Li and Ju~\cite{li_minimal_2025, li_completeness_2026}. The
definitions are stated for arbitrary finite nonempty sets of agents. The
two-agent representation theorem is recalled in the next section, and the
finite-agent results without the independence condition are developed later.

%%%%%%%%%%%%%%%
%%%%%%%%%%%%%%%
\begin{definition}[General concurrent game frames]
\label{def:08-general-concurrent-game-frames}

An action frame
\[
\FGCGF = (\FST, \FAC, \{\Fav_\FCC \mid \FCC \subseteq \FAG\}, \{\Fout_\FCC \mid \FCC \subseteq \FAG\})
\]
is a \Fdefs{general concurrent game frame} if the following conditions hold:

\begin{itemize}

\item

The \Fdefs{grand-coalition-induced outcome condition} (simply the \Fdefs{GCI-condition}):
for every $\FCC \subseteq \FAG$, every $s \in \FST$, and every $\sigma_\FCC \in \FJA_\FCC$,
\[
\Fout_\FCC(s,\sigma_\FCC)
=
\bigcup
\{\, \Fout_\FAG(s,\sigma_\FAG) \mid \sigma_\FAG \in \FJA_\FAG \text{ and } \sigma_\FAG|_\FCC = \sigma_\FCC \,\}.
\]

\item

The \Fdefs{outcome-driven availability condition} (simply the \Fdefs{ODA-condition}):
for every $\FCC \subseteq \FAG$ and every $s \in \FST$,
\[
\Fav_\FCC(s)
=
\{\, \sigma_\FCC \in \FJA_\FCC \mid \Fout_\FCC(s,\sigma_\FCC) \neq \emptyset \,\}.
\]

\end{itemize}

\end{definition}

Thus, under the GCI-condition and the ODA-condition, the functions $\Fout_\FCC$ and $\Fav_\FCC$ (for all $\FCC \subseteq \FAG$) are determined by the grand-coalition outcome function $\Fout_\FAG$.

%%%%%%%%%%%%%%%
%%%%%%%%%%%%%%%
\begin{definition}[Seriality, independence, and determinism]
\label{def:09-seriality-independence-and-determinism}

Let $\FGCGF = (\FST, \FAC, \{\Fav_\FCC \mid \FCC \subseteq \FAG\}, \{\Fout_\FCC \mid \FCC \subseteq \FAG\})$ be a general concurrent game frame. We say that:

\begin{itemize}

\item
$\FGCGF$ is \Fdefs{serial} if, for all $\FCC \subseteq \FAG$ and all $s \in \FST$, we have $\Fav_\FCC(s) \neq \emptyset$;

\item
$\FGCGF$ is \Fdefs{independent} if, for all $s \in \FST$ and all disjoint coalitions $\FCC,\FDD \subseteq \FAG$,
whenever $\sigma_\FCC \in \Fav_\FCC(s)$ and $\sigma_\FDD \in \Fav_\FDD(s)$, we have $\sigma_\FCC \cup \sigma_\FDD \in \Fav_{\FCC \cup \FDD}(s)$;

\item
$\FGCGF$ is \Fdefs{deterministic} if, for all $s \in \FST$ and all $\sigma_\FAG \in \Fav_\FAG(s)$, the set $\Fout_\FAG(s,\sigma_\FAG)$ is a singleton.

\end{itemize}

\end{definition}

Let $\mathtt{S}$, $\mathtt{I}$, and $\mathtt{D}$ denote seriality, independence, and determinism, respectively. Let
\[
\FES = \{\epsilon,\ \mathtt{S},\ \mathtt{I},\ \mathtt{D},\ \mathtt{SI},\ \mathtt{SD},\ \mathtt{ID},\ \mathtt{SID}\}
\]
be the set of the eight strings encoding all combinations of these three properties. Here $\epsilon$ denotes the empty combination.
For $\FXX \in \FES$, we say that a general concurrent game frame is an \Fdefs{$\FXX$-frame} if it satisfies the properties indicated by $\FXX$.

The following facts will be used implicitly throughout the sequel.

%%%%%%%%%%%%%%%
%%%%%%%%%%%%%%%
\begin{fact}[Basic consequences of general concurrent game frames]
\label{fact:01-basic-consequences-of-general-concurrent-game-frames}

Let
\[
\FGCGF = (\FST, \FAC, \{\Fav_\FCC \mid \FCC \subseteq \FAG\}, \{\Fout_\FCC \mid \FCC \subseteq \FAG\})
\]
be a general concurrent game frame. Then the following hold:

\begin{enumerate}

\item
\Fdefs{Outcome monotonicity:} for all $s \in \FST$, all $\FCC \subseteq \FDD \subseteq \FAG$, all $\sigma_\FCC \in \FJA_\FCC$, and all $\sigma_\FDD \in \FJA_\FDD$ such that $\sigma_\FDD|_\FCC = \sigma_\FCC$, we have
\(\Fout_\FDD(s,\sigma_\FDD)\subseteq \Fout_\FCC(s,\sigma_\FCC).\)

\item

\Fdefs{Alternative GCI-condition:}
for all $s \in \FST$, all $\FCC \subseteq \FAG$, and all $\sigma_\FCC \in \FJA_\FCC$,
\[
\Fout_\FCC(s,\sigma_\FCC)
=
\bigcup
\{\, \Fout_\FAG(s,\sigma_\FAG)\mid \sigma_\FAG \in \Fav_\FAG(s)\ \text{and}\ \sigma_\FAG|_\FCC=\sigma_\FCC \,\}.
\]

\end{enumerate}

\end{fact}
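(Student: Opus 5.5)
We verify the two items of Fact~\ref{fact:01-basic-consequences-of-general-concurrent-game-frames} directly from the GCI-condition and the ODA-condition of Definition~\ref{def:08-general-concurrent-game-frames}. No earlier result is needed beyond these two conditions and the elementary properties of restriction of joint actions.

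For \emph{outcome monotonicity}, fix $s$, coalitions $\FCC\subseteq\FDD$, and $\sigma_\FDD$ with $\sigma_\FDD|_\FCC=\sigma_\FCC$. By the GCI-condition applied to $\FDD$, $\Fout_\FDD(s,\sigma_\FDD)$ is the union of $\Fout_\FAG(s,\sigma_\FAG)$ over all $\sigma_\FAG$ with $\sigma_\FAG|_\FDD=\sigma_\FDD$. For each such $\sigma_\FAG$, restriction is transitive, so $\sigma_\FAG|_\FCC=(\sigma_\FAG|_\FDD)|_\FCC=\sigma_\FDD|_\FCC=\sigma_\FCC$. Hence every term of the union for $\FDD$ also occurs in the union defining $\Fout_\FCC(s,\sigma_\FCC)$, again by the GCI-condition. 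Taking unions of a subfamily gives the inclusion $\Fout_\FDD(s,\sigma_\FDD)\subseteq\Fout_\FCC(s,\sigma_\FCC)$.

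For the \emph{alternative GCI-condition}, start from the GCI-condition. The union there ranges over all $\sigma_\FAG\in\FJA_\FAG$ extending $\sigma_\FCC$, while the target union ranges only over those in $\Fav_\FAG(s)$. By the ODA-condition for the grand coalition, $\sigma_\FAG\notin\Fav_\FAG(s)$ exactly when $\Fout_\FAG(s,\sigma_\FAG)=\emptyset$. Discarding empty sets does not change a union, so the two unions coincide. This equality holds even when no extension is available, since both sides are then empty.

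Neither step is a real obstacle. The only point needing care is the transitivity of restriction in the first item, and that is immediate from the definition of $\sigma_\FDD|_\FCC$ as function restriction.
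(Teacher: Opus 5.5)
Your proposal is correct and matches the paper's own justification. Outcome monotonicity follows from the GCI-condition because the grand-coalition extensions of $\sigma_\FDD$ form a subfamily of those of $\sigma_\FCC$. The alternative GCI-condition is the GCI-condition with the empty outcome sets of unavailable grand-coalition actions discarded, using the ODA-condition.
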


The first item is an immediate consequence of the GCI-condition, and the second is a reformulation of the GCI-condition using the ODA-condition (since unavailable grand-coalition actions have empty outcome sets).

%%%%%%%%%%%%%%%
%%%%%%%%%%%%%%%
\subsubsection*{On the eight frame classes}

The labels in $\FES$ record which frame conditions are imposed; they do not
record which conditions are required to fail. Thus the eight classes are not
intended to be disjoint. Accordingly, when we speak of classes not imposing the
independence condition, we mean the four labels not containing $\mathtt I$,
namely
$
\epsilon, \mathtt S, \mathtt D, \mathtt{SD},
$
not the class of frames in which independence fails.

The standard class of concurrent game frames corresponds to the
$\mathtt{SID}$ case. For this class, the corresponding representation theorem
for what we call alpha powers is Pauly's representation theorem for playable
effectivity functions~\cite{pauly_modal_2002}; see also
\cite{goranko_strategic_2013}. For weakened settings without the independence
condition, Shi and Wang~\cite{shi_representation_2024} give a related alpha
representation theorem for the serial and deterministic case, that is, for the
$\mathtt{SD}$ case.

A recent preprint~\cite{chen_representation_2026} treats the two-agent case
systematically. It proves, for each of the eight two-agent classes of general
concurrent game frames, both an $\PAC$-representation theorem for actual powers
and an $\PAL$-representation theorem for alpha powers.

%%%%%%%%%%%%%%%
%%%%%%%%%%%%%%%
\section{Background representation theorems}
\label{sec:04-background-representation-theorems}

This section recalls the background representation theorems from
\cite{chen_representation_2026}. These results serve as the two-agent
benchmark for the finite-agent representation theorems developed in the present
paper, and we state them without proof. We also recall the three-agent
counterexample from \cite{chen_representation_2026}, which shows that, for
actual powers, the direct finite-agent analogues of the two-agent
neighborhood-frame conditions are necessary but not sufficient.

%%%%%%%%%%%%%%%
%%%%%%%%%%%%%%%
\subsection{Actual powers}
\label{subsec:04-01-actual-powers}

%%%%%%%%%%%%%%%
%%%%%%%%%%%%%%%
\begin{definition}[Two-agent $\PAC$-representative actual neighborhood frames]
\label{def:10-two-agent-pac-representative-actual-neighborhood-frames}

Assume $|\FAG|=2$.

Let \(\FACNF=(\FST,\{\Facnei_\FCC\mid \FCC\subseteq\FAG\})\) be an actual neighborhood frame. We say that $\FACNF$ is
\Fdefs{two-agent $\PAC$-representative} if the following conditions are
satisfied.

\begin{enumerate}

\item \Fdefs{Actual triviality of the empty coalition.}
For all $s\in\FST$, if $\Facnei_\emptyset(s)$ is nonempty, then it is a
singleton.

\emph{Intuitively, the empty coalition has only one joint action, namely the
empty function. Hence, in an induced actual neighborhood frame, it can have at
most one actual power. It may have none in nonserial frames.}

\item \Fdefs{Liveness.}
For all $s\in\FST$ and all $\FCC\subseteq\FAG$, \(\emptyset\notin\Facnei_\FCC(s)\).

\emph{Intuitively, no available coalition action has the impossible exact
outcome $\emptyset$. This condition does not require coalitions to have powers;
it only rules out $\emptyset$ as a power when powers are present.}

\item \Fdefs{Actual power inclusion.}
For all $s\in\FST$ and all $\FCC,\FDD\subseteq\FAG$ with $\FCC\subseteq\FDD$,
for every $X\in\Facnei_\FDD(s)$ there exists $Y\in\Facnei_\FCC(s)$ such that \(X\subseteq Y\).

\emph{Intuitively, if a larger coalition $\FDD$ realizes $X$ by an action
$\sigma_\FDD$, then the restriction of $\sigma_\FDD$ to the smaller coalition
$\FCC$ realizes some outcome set $Y$ with $X\subseteq Y$. Larger coalitions can
narrow down the possible outcomes.}

\item \Fdefs{Actual power decomposition.}
For all $s\in\FST$ and all $\FCC,\FDD\subseteq\FAG$ with $\FCC\subseteq\FDD$,
for every $X\in\Facnei_\FCC(s)$ there exists $\Delta\subseteq\Facnei_\FDD(s)$
such that \(X=\bigcup\Delta\).

\emph{Intuitively, if a smaller coalition $\FCC$ realizes $X$, then the possible
ways of extending its action to the larger coalition $\FDD$ generate
$\FDD$-actual powers whose union is exactly $X$.}

\end{enumerate}

\end{definition}

We use ``$\PAC$-representative'' for actual neighborhood frames satisfying the
intrinsic conditions intended to characterize those that $\PAC$-represent
general concurrent game frames. This should not be confused with
``$\PAC$-representable'', which refers to being induced by a particular action
frame in the sense of Definition~\ref{def:07-pal-representability-and-pac-representability-of-classes-of-action-frames}.

%%%%%%%%%%%%%%%
%%%%%%%%%%%%%%%
\begin{definition}[Seriality, two-agent independence, and determinism of two-agent $\PAC$-representative actual neighborhood frames]
\label{def:11-seriality-two-agent-independence-and-determinism-of-two-agent-pac-representative-actual-neighborhood-frames}

Assume $|\FAG|=2$.

Let \(\FACNF=(\FST,\{\Facnei_\FCC\mid \FCC\subseteq\FAG\})\) be a two-agent $\PAC$-representative actual neighborhood frame. We say that:

\begin{itemize}

\item $\FACNF$ is \Fdefs{$\PAC$-serial} if, for all $s\in\FST$ and all
$\FCC\subseteq\FAG$, \(\Facnei_\FCC(s)\neq\emptyset\).

\item $\FACNF$ is \Fdefs{two-agent $\PAC$-independent} if, for all $s\in\FST$
and all disjoint coalitions $\FCC,\FDD\subseteq\FAG$, whenever
$X\in\Facnei_\FCC(s)$ and $Y\in\Facnei_\FDD(s)$, there exists
$Z\in\Facnei_{\FCC\cup\FDD}(s)$ such that \(Z\subseteq X\cap Y\).

\emph{Intuitively, for disjoint coalitions, any two actual powers can be jointly
refined to an actual power of their union that is compatible with both.}

\item $\FACNF$ is \Fdefs{$\PAC$-deterministic} if, for all $s\in\FST$ and all
$X\in\Facnei_\FAG(s)$, the set $X$ is a singleton.

\end{itemize}

\end{definition}

The qualifier ``two-agent'' in ``two-agent $\PAC$-independent'' is essential.
For two agents, the only nontrivial disjoint proper coalitions are the two
singletons, so the condition above captures the relevant independence
constraint. For three or more agents, however, the same pairwise set-level
condition does not control the compatibility of action witnesses across
overlapping coalitions. This is precisely the obstruction exhibited in
Section~\ref{subsec:04-03-weak-finite-agent-pac-representativeness-is-not-sufficient}.
By contrast, the notions of $\PAC$-seriality and $\PAC$-determinism are
unchanged for arbitrary finite nonempty sets of agents.

For each $\FXX\in\FES$, we say that a two-agent $\PAC$-representative actual
neighborhood frame is a
\Fdefs{two-agent $\PAC$-representative actual neighborhood $\FXX$-frame} if it
satisfies all $\PAC$-properties whose symbols occur in $\FXX$. No failure of
the omitted properties is required.

%%%%%%%%%%%%%%%
%%%%%%%%%%%%%%%
\begin{theorem}[Two-agent actual havingness theorem]
\label{thm:01-two-agent-actual-havingness-theorem}

Assume $|\FAG|=2$ and $\FXX\in\FES$.

For every two-agent actual neighborhood frame, if it $\PAC$-represents a
general concurrent game $\FXX$-frame, then it is a two-agent
$\PAC$-representative actual neighborhood $\FXX$-frame.

\end{theorem}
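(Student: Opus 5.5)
The plan is to fix a general concurrent game $\FXX$-frame $\FGCGF$ and an actual neighborhood frame $\FACNF$ that $\PAC$-represents it, so that $\Facnei_\FCC(s)=\{\Fout_\FCC(s,\sigma_\FCC)\mid\sigma_\FCC\in\Fav_\FCC(s)\}$ for every $\FCC$ and $s$. I will then verify the four conditions of Definition~\ref{def:10-two-agent-pac-representative-actual-neighborhood-frames}, followed by the $\PAC$-properties named in $\FXX$. Throughout I will rely on Fact~\ref{fact:01-basic-consequences-of-general-concurrent-game-frames} together with the GCI- and ODA-conditions.

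The four basic conditions are handled as follows.
\begin{enumerate}
\item \emph{Actual triviality of the empty coalition.} Since $\FJA_\emptyset=\{\emptyset\}$, the set $\Fav_\emptyset(s)$ has at most one element. Hence $\Facnei_\emptyset(s)$ is either empty or a singleton.
\item \emph{Liveness.} By ODA, every $\sigma_\FCC\in\Fav_\FCC(s)$ satisfies $\Fout_\FCC(s,\sigma_\FCC)\neq\emptyset$, so $\emptyset\notin\Facnei_\FCC(s)$.
\item \emph{Actual power inclusion.} Let $\FCC\subseteq\FDD$ and $X=\Fout_\FDD(s,\sigma_\FDD)$ with $\sigma_\FDD$ available. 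Put $\sigma_\FCC=\sigma_\FDD|_\FCC$. Outcome monotonicity gives $X\subseteq\Fout_\FCC(s,\sigma_\FCC)$. This set contains the nonempty set $X$, so by ODA $\sigma_\FCC$ is available, and $Y=\Fout_\FCC(s,\sigma_\FCC)$ is the required witness.
\item \emph{Actual power decomposition.} Let $X=\Fout_\FCC(s,\sigma_\FCC)$ with $\sigma_\FCC$ available. Take $\Delta=\{\Fout_\FDD(s,\sigma_\FDD)\mid\sigma_\FDD\in\Fav_\FDD(s),\ \sigma_\FDD|_\FCC=\sigma_\FCC\}$. By the alternative GCI-condition, $X$ is the union of $\Fout_\FAG(s,\sigma_\FAG)$ over available $\sigma_\FAG$ extending $\sigma_\FCC$.
\end{enumerate}

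The decomposition step needs the most care, because the intermediate coalition $\FDD$ must be accounted for. Each available $\sigma_\FAG$ extending $\sigma_\FCC$ restricts to $\sigma_\FAG|_\FDD$. By outcome monotonicity, $\emptyset\neq\Fout_\FAG(s,\sigma_\FAG)\subseteq\Fout_\FDD(s,\sigma_\FAG|_\FDD)$, so $\sigma_\FAG|_\FDD$ is available by ODA. This gives $X\subseteq\bigcup\Delta$. Conversely, each member of $\Delta$ is contained in $X$ by outcome monotonicity, since $\sigma_\FDD|_\FCC=\sigma_\FCC$. Hence $X=\bigcup\Delta$.

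Finally come the $\FXX$-specific properties. For seriality, $\Fav_\FCC(s)\neq\emptyset$ immediately gives $\Facnei_\FCC(s)\neq\emptyset$. For determinism, every element of $\Facnei_\FAG(s)$ is some $\Fout_\FAG(s,\sigma_\FAG)$ with $\sigma_\FAG$ available, and this is a singleton. For two-agent independence, take disjoint $\FCC,\FDD$ with $X=\Fout_\FCC(s,\sigma_\FCC)$ and $Y=\Fout_\FDD(s,\sigma_\FDD)$, both actions available. Since the coalitions are disjoint, $\sigma_\FCC\cup\sigma_\FDD$ is defined, and independence of $\FGCGF$ makes it available. Let $Z=\Fout_{\FCC\cup\FDD}(s,\sigma_\FCC\cup\sigma_\FDD)$, which belongs to $\Facnei_{\FCC\cup\FDD}(s)$. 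Outcome monotonicity applied twice gives $Z\subseteq X$ and $Z\subseteq Y$, hence $Z\subseteq X\cap Y$. None of these steps is a serious obstacle; the only point needing attention is the repeated use of ODA to confirm that restricted actions are available.
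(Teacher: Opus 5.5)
Your proposal is correct. The paper states this theorem without proof, importing it from the cited preprint. Your direct verification is the same technique the paper uses in its own havingness arguments (cf.\ the proof of Theorem~\ref{thm:05-abstract-actual-havingness-theorem}): the ODA-condition for liveness and availability of restrictions, outcome monotonicity, and the alternative GCI-condition. Note that nothing in your argument uses $|\FAG|=2$, so it also establishes Proposition~\ref{prop:01-necessity-of-the-weak-finite-agent-conditions}.
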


%%%%%%%%%%%%%%%
%%%%%%%%%%%%%%%
\begin{theorem}[Two-agent actual enoughness theorem]
\label{thm:02-two-agent-actual-enoughness-theorem}

Assume $|\FAG|=2$ and $\FXX\in\FES$.

Every two-agent $\PAC$-representative actual neighborhood $\FXX$-frame
$\PAC$-represents some general concurrent game $\FXX$-frame.

\end{theorem}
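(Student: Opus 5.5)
The plan is a direct construction. Fix $\FAG=\{a,b\}$ and a two-agent $\PAC$-representative actual neighborhood $\FXX$-frame $\FACNF$. By the GCI- and ODA-conditions, a general concurrent game frame is determined by $\Fout_\FAG$. So it suffices to specify, at each state $s$, a set of available actions for $a$ and for $b$ and an outcome for every full profile, and then to check three equalities:
\[
\Fout_a(s,\cdot)\text{-images}=\Facnei_a(s),\qquad
\Fout_b(s,\cdot)\text{-images}=\Facnei_b(s),\qquad
\Fout_\FAG(s,\cdot)\text{-images}=\Facnei_\FAG(s),
\]
together with $\Fout_\emptyset(s,\emptyset)\in\Facnei_\emptyset(s)$ whenever $\Facnei_\emptyset(s)\neq\emptyset$.

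The idea is to let profiles carry grand-coalition powers. Fix a cardinal $\kappa\geq|\mathcal P(\FST)|$.
\begin{itemize}
\item Agent $a$'s actions at $s$ are the tagged powers $(x,i)$ with $x\in\Facnei_a(s)$ and $i<\kappa$.
\item Agent $b$'s actions at $s$ are the pairs $(y,f)$ with $y\in\Facnei_b(s)$ and $f$ a partial map from $a$-actions to $\Facnei_\FAG(s)$, subject to two conditions: $f(x,i)\subseteq x\cap y$ for every $(x,i)$ in the domain of $f$, and $\bigcup_{(x,i)} f(x,i)=y$.
\item The outcome of the profile $((x,i),(y,f))$ is $f(x,i)$ if defined, and $\emptyset$ otherwise.
\end{itemize}
Liveness ensures that exactly the profiles with defined $f$ are available. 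It also ensures that every element of $\Facnei_\FAG(s)$ is nonempty.

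I will carry out the verification in the following order.

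First, $\Fout_b(s,(y,f))=y$ holds by construction. I must also show that every $y\in\Facnei_b(s)$ admits some valid $f$. By actual power decomposition, $y=\bigcup\Delta_y$ for some $\Delta_y\subseteq\Facnei_\FAG(s)$. By actual power inclusion, each $Z\in\Delta_y$ lies inside some $x_Z\in\Facnei_a(s)$. Since there are $\kappa$ copies of each $x$, I can send distinct $Z$'s to distinct tagged copies $(x_Z,i_Z)$. This tagging is the reason for the copies, since one $x$ may have to host several different $Z$'s.

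Second, I show $\Fout_a(s,(x,i))=x$. The inclusion $\subseteq$ is immediate because $f(x,i)\subseteq x$. For $\supseteq$, decompose $x=\bigcup\Delta_x$ with $\Delta_x\subseteq\Facnei_\FAG(s)$. For each $Z\in\Delta_x$, choose $y\supseteq Z$ in $\Facnei_b(s)$ by inclusion. Build $f$ for $y$ as in the first step, but reserve the copy $(x,i)$ for $Z$ itself; the other pieces of $y$ use other tags. This is possible because $\kappa$ is large.

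Third, the grand actual powers are the values of $f$, so they lie in $\Facnei_\FAG(s)$. Conversely, every $Z\in\Facnei_\FAG(s)$ occurs: take $x\supseteq Z$ and $y\supseteq Z$ by inclusion, and build $f$ with $f(x,0)=Z$ as above.

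Fourth, the empty coalition. If $\Facnei_\emptyset(s)=\{W\}$, I need $W=\bigcup\Facnei_\FAG(s)$.
\begin{itemize}
\item Inclusion with $\emptyset\subseteq\FAG$ gives $\bigcup\Facnei_\FAG(s)\subseteq W$.
\item Decomposition with $\emptyset\subseteq\FAG$ gives $W=\bigcup\Delta\subseteq\bigcup\Facnei_\FAG(s)$.
\end{itemize}
If $\Facnei_\emptyset(s)=\emptyset$, inclusion forces $\Facnei_a(s)$, $\Facnei_b(s)$ and $\Facnei_\FAG(s)$ to be empty. In that case the frame has no actions at $s$, which is consistent with the construction.

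Finally, I check that the frame properties transfer. Seriality transfers because nonempty neighborhoods yield available actions. Determinism transfers because outcomes are elements of $\Facnei_\FAG(s)$, which are singletons in the $\PAC$-deterministic case.

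I expect the independence cases ($\FXX$ containing $\mathtt I$) to be the main obstacle, since partial $f$ makes some profiles unavailable. There I would additionally require $f$ to be total. For $(x,i)$ outside the "covering" part of $f$, I would set $f(x,i)$ to some $Z\in\Facnei_\FAG(s)$ with $Z\subseteq x\cap y$, which exists by two-agent $\PAC$-independence. Such extra values stay inside $y$ and inside $x$, so the unions computed in the first and second steps are unchanged. Independence of the resulting game frame then follows, because every pair of available single-agent actions is jointly available.
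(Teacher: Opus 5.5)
Your construction is correct. The paper gives no proof of this theorem to compare against: it is recalled from the two-agent preprint and explicitly stated without proof.

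\textbf{Your route.} You use a direct response-table construction. Agent $a$ plays tagged copies of its powers, and $b$ plays one of its powers together with a table assigning grand powers to $a$'s actions. Each of the four two-agent conditions is used exactly where it is needed. Making the tables total is the right fix for the independence classes, since two-agent $\PAC$-independence supplies the extra values without changing any of the unions.

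\textbf{Relation to the paper's machinery.} What you are really doing is building an abstract actual presentation by hand. Take as indices the pairs $((x,i),(y,f))$ with $(x,i)\in\mathrm{dom} f$, let $\equiv_a$ and $\equiv_b$ be equality of the respective components, and let $o=f(x,i)$. Your four steps are then exactly coverage, and totality of $f$ is exactly abstract amalgamation (Definition~\ref{def:17-amalgamating-abstract-actual-presentations}). After that, Theorem~\ref{thm:06-abstract-actual-enoughness-theorem} would finish the job.

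\textbf{What each approach buys.} The paper's own finite-agent route (coherent profile covers, unfolding forests) works for any number of agents. However, it needs the global coherent-coverability condition and has no power-level substitute for independence (Section~\ref{sec:09-amalgamation-of-power-profiles-is-insufficient-for-independence}). In fact, the paper invokes Theorem~\ref{thm:02-two-agent-actual-enoughness-theorem} as a black box to conclude that, for two agents, the weak conditions imply coherent coverability. Your argument supplies that input directly. It also shows why two agents are special: $b$'s actions can encode arbitrary response tables to $a$'s actions, and there are no incomparable overlapping coalitions of the kind exploited in Example~\ref{ex:02-example}. So the purely local conditions suffice for all eight classes, including independence.

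\textbf{Bookkeeping to add.} Make the action set a single nonempty global set, for instance by tagging actions with state and agent and adding a dummy action. Also set $\Fout_\FAG(s,\sigma)=\emptyset$ for every profile not built from $s$-actions, as in Definition~\ref{def:19-game-frame-generated-by-an-abstract-actual-presentation}.
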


%%%%%%%%%%%%%%%
%%%%%%%%%%%%%%%
\subsection{Alpha powers}
\label{subsec:04-02-alpha-powers}

%%%%%%%%%%%%%%%
%%%%%%%%%%%%%%%
\begin{definition}[Finite-agent $\PAL$-representative alpha neighborhood frames]
\label{def:12-finite-agent-pal-representative-alpha-neighborhood-frames}

Assume $\FAG$ is finite.

Let \(\FALNF=(\FST,\{\Falnei_\FCC\mid \FCC\subseteq\FAG\})\) be an alpha neighborhood frame. We say that $\FALNF$ is
\Fdefs{$\PAL$-representative} if the following conditions are satisfied.

\begin{enumerate}

\item \Fdefs{Alpha triviality of the empty coalition.}
For all $s\in\FST$, if $\Falnei_\emptyset(s)\neq\emptyset$, then
$\Falneinc_\emptyset(s)$ is a singleton.

\emph{Intuitively, when the empty-coalition alpha neighborhood is nonempty,
alpha triviality makes its nonmonotonic core a singleton. Its unique element,
equivalently $\bigcup\Falneinc_\emptyset(s)$, plays the role of the successor
set of $s$.}

\item \Fdefs{Liveness.}
For all $s\in\FST$ and all $\FCC\subseteq\FAG$, \(\emptyset\notin\Falnei_\FCC(s)\).

\emph{Intuitively, no coalition has the absurd alpha power $\emptyset$.}

\item \Fdefs{Groundedness of alpha powers.}
For all $s\in\FST$, all $\FCC\subseteq\FAG$, and all
$X\in\Falnei_\FCC(s)$, there exists $Y\in\Falnei_\FCC(s)$ such that
\(Y\subseteq \bigcup\Falneinc_\emptyset(s) \text{ and } Y\subseteq X.\)

\emph{This condition says that every alpha power has a smaller alpha power
grounded in the successor set of $s$. The original set $X$ may contain states
outside the successor set, but only because alpha neighborhoods are upward
closed.}

\item \Fdefs{Coalition monotonicity of alpha powers.}
For all $s\in\FST$ and all $\FCC,\FDD\subseteq\FAG$ with $\FCC\subseteq\FDD$, \(\Falnei_\FCC(s)\subseteq\Falnei_\FDD(s)\).

\emph{Intuitively, enlarging a coalition cannot destroy an alpha power: if
$\FCC$ can guarantee $X$, then any larger coalition $\FDD$ can also guarantee
$X$.}

\end{enumerate}

\end{definition}

%%%%%%%%%%%%%%%
%%%%%%%%%%%%%%%
\begin{definition}[Seriality, independence, and determinism of $\PAL$-representative alpha neighborhood frames]
\label{def:13-seriality-independence-and-determinism-of-pal-representative-alpha-neighborhood-frames}

Let \(\FALNF=(\FST,\{\Falnei_\FCC\mid \FCC\subseteq\FAG\})\) be an $\PAL$-representative alpha neighborhood frame. We say that:

\begin{itemize}

\item $\FALNF$ is \Fdefs{$\PAL$-serial} if, for all $s\in\FST$ and all
$\FCC\subseteq\FAG$, \(\Falnei_\FCC(s)\neq\emptyset\).

\item $\FALNF$ is \Fdefs{$\PAL$-independent} if, for all $s\in\FST$ and all
disjoint coalitions $\FCC,\FDD\subseteq\FAG$, whenever
$X\in\Falnei_\FCC(s)$ and $Y\in\Falnei_\FDD(s)$, we have \(X\cap Y\in\Falnei_{\FCC\cup\FDD}(s)\).

\emph{Intuitively, if two disjoint coalitions can guarantee $X$ and $Y$,
respectively, then their union can guarantee the joint requirement
$X\cap Y$.}

\item $\FALNF$ is \Fdefs{$\PAL$-deterministic} if, for all $s\in\FST$:
\begin{itemize}
    \item[(a)] every element of $\Falneinc_\FAG(s)$ is a singleton; and
    \item[(b)] $\bigcup\Falneinc_\emptyset(s)\subseteq
    \bigcup\Falneinc_\FAG(s)$.
\end{itemize}

\emph{Intuitively, condition (a) says that every minimal grand-coalition alpha
power corresponds to a singleton outcome, as expected under determinism.
Condition (b) says that every possible successor state is realized by some such
grand-coalition outcome. Thus the grand coalition accounts for all successors
of $s$ by singleton outcomes.}

\end{itemize}

\end{definition}

We continue to use the symbols $\mathtt{S}$, $\mathtt{I}$, and $\mathtt{D}$ for
seriality, independence, and determinism, respectively, and $\FES$ for the set
of strings encoding their eight possible combinations. For each $\FXX\in\FES$,
we say that an $\PAL$-representative alpha neighborhood frame is a
\Fdefs{$\PAL$-representative alpha neighborhood $\FXX$-frame} if it satisfies
all $\PAL$-properties whose symbols occur in $\FXX$. No failure of the omitted
properties is required.

%%%%%%%%%%%%%%%
%%%%%%%%%%%%%%%
\begin{theorem}[Finite-agent alpha havingness theorem]
\label{thm:03-finite-agent-alpha-havingness-theorem}

Assume $\FAG$ is finite and $\FXX\in\FES$.

For every alpha neighborhood frame, if it $\PAL$-represents a general
concurrent game $\FXX$-frame, then it is an $\PAL$-representative alpha
neighborhood $\FXX$-frame.

\end{theorem}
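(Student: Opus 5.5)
The plan is a direct verification. Fix a general concurrent game $\FXX$-frame $\FGCGF$ and an alpha neighborhood frame $\FALNF$ that $\PAL$-represents it, so $\Falnei_\FCC=\FALEF_\FCC$ for every $\FCC$. Every condition of Definition~\ref{def:12-finite-agent-pal-representative-alpha-neighborhood-frames}, and each extra condition named in $\FXX$, will be checked against $\FALEF$. Two preliminary observations come first. By the ODA-condition, every available $\sigma_\FCC$ has $\Fout_\FCC(s,\sigma_\FCC)\neq\emptyset$. Moreover, $\FJA_\emptyset=\{\emptyset\}$, so $\FALEF_\emptyset(s)$ is either empty or equal to $\{Y\mid \Fout_\emptyset(s,\emptyset)\subseteq Y\}$. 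In the latter case $\Falneinc_\emptyset(s)=\{\Fout_\emptyset(s,\emptyset)\}$, and this gives alpha triviality at once. Throughout, write $R_s=\Fout_\emptyset(s,\emptyset)$. By the GCI-condition, $R_s$ is the union of all $\Fout_\FAG(s,\sigma_\FAG)$.

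Next come liveness and groundedness. If $X\in\FALEF_\FCC(s)$, pick an available $\sigma_\FCC$ with $\Fout_\FCC(s,\sigma_\FCC)\subseteq X$. This set is nonempty by ODA, so $X\neq\emptyset$. For groundedness, take $Y=\Fout_\FCC(s,\sigma_\FCC)$. Then $Y\in\FALEF_\FCC(s)$ and $Y\subseteq X$. Also $Y\subseteq R_s$ by outcome monotonicity (Fact~\ref{fact:01-basic-consequences-of-general-concurrent-game-frames}) applied to $\emptyset\subseteq\FCC$. We need $R_s=\bigcup\Falneinc_\emptyset(s)$. This holds because $\Fav_\emptyset(s)\neq\emptyset$: availability of $\sigma_\FCC$ gives $R_s\supseteq Y\neq\emptyset$, so $\emptyset\in\Fav_\emptyset(s)$ by ODA, and the core is $\{R_s\}$. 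For coalition monotonicity with $\FCC\subseteq\FDD$, extend an available $\sigma_\FCC$ to some $\sigma_\FAG$ with nonempty outcome, using the alternative GCI-condition. Its restriction $\sigma_\FDD$ is available, and $\Fout_\FDD(s,\sigma_\FDD)\subseteq\Fout_\FCC(s,\sigma_\FCC)\subseteq X$.

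Then the $\FXX$-specific properties. Seriality transfers directly, since $\Fav_\FCC(s)\neq\emptyset$ makes $\FALEF_\FCC(s)\ni\FST$. For independence with disjoint $\FCC,\FDD$, take available $\sigma_\FCC,\sigma_\FDD$ witnessing $X$ and $Y$. Their union is available, and its outcome is contained in both outcomes by outcome monotonicity, hence in $X\cap Y$. For determinism, the grand-coalition alpha powers are exactly the supersets of singletons $\{t\}$ with $t\in\Fout_\FAG(s,\sigma_\FAG)$ for available $\sigma_\FAG$, and each such set is a singleton by determinism. So the minimal elements are exactly these singletons, which gives (a). Condition (b) follows from $R_s=\bigcup_{\sigma_\FAG}\Fout_\FAG(s,\sigma_\FAG)$. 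If $\Falnei_\emptyset(s)$ is empty, the core is empty and (b) is trivial.

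The only delicate point is the identification $\bigcup\Falneinc_\emptyset(s)=R_s$ in the groundedness and determinism clauses, including the degenerate case where $\Fav_\emptyset(s)=\emptyset$. I expect to handle it as above, by noting that any nonempty $\FALEF_\FCC(s)$ forces $\Fav_\emptyset(s)\neq\emptyset$. Everything else is routine unfolding of definitions.
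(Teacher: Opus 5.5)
Your proposal is correct. The paper states this theorem without proof and defers to the earlier two-agent preprint, so there is no in-paper argument to compare against; checking each clause of Definitions~\ref{def:12-finite-agent-pal-representative-alpha-neighborhood-frames} and~\ref{def:13-seriality-independence-and-determinism-of-pal-representative-alpha-neighborhood-frames} directly against $\FALEF$, using the ODA-condition, the GCI-condition and outcome monotonicity, is the expected argument. You also isolate and handle the one delicate point: any available action of any coalition forces $\Fav_\emptyset(s)\neq\emptyset$, hence $\bigcup\Falneinc_\emptyset(s)=\Fout_\emptyset(s,\emptyset)$, which is exactly what groundedness and clause (b) of $\PAL$-determinism require.
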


%%%%%%%%%%%%%%%
%%%%%%%%%%%%%%%
\begin{theorem}[Two-agent alpha enoughness theorem]
\label{thm:04-two-agent-alpha-enoughness-theorem}

Assume $|\FAG|=2$ and $\FXX\in\FES$.

Every two-agent $\PAL$-representative alpha neighborhood $\FXX$-frame
$\PAL$-represents some general concurrent game $\FXX$-frame.

\end{theorem}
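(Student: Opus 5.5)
The plan is to reduce alpha enoughness to actual enoughness (Theorem~\ref{thm:02-two-agent-actual-enoughness-theorem}). Fix a two-agent $\PAL$-representative alpha neighborhood $\FXX$-frame $\FALNF=(\FST,\{\Falnei_\FCC\mid\FCC\subseteq\FAG\})$. I would extract from it an actual neighborhood frame $\FACNF$ with two properties: it is a two-agent $\PAC$-representative actual neighborhood $\FXX$-frame, and the upward closure of each $\Facnei_\FCC(s)$ equals $\Falnei_\FCC(s)$. Theorem~\ref{thm:02-two-agent-actual-enoughness-theorem} then yields a general concurrent game $\FXX$-frame with $\FACEF_\FCC=\Facnei_\FCC$. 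Since $\FALEF_\FCC(s)$ is exactly the upward closure of $\FACEF_\FCC(s)$, this frame $\PAL$-represented by $\FALNF$, which proves the theorem.

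\textbf{Construction.} For each $s$, let $R_s=\bigcup\Falneinc_\emptyset(s)$. Minimal elements may fail to exist when $\FST$ is infinite, so I would use grounded sets instead. Put
\[
\Facnei_\FCC(s)=\{Y\in\Falnei_\FCC(s)\mid Y\subseteq R_s\}.
\]
In the deterministic case I would instead set $\Facnei_\FAG(s)=\{\{t\}\mid t\in R_s\}$.

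\textbf{Basic facts.} First, if $\Falnei_\emptyset(s)=\emptyset$, then $R_s=\emptyset$. Groundedness together with liveness then forces $\Falnei_\FCC(s)=\emptyset$ for every $\FCC$, so everything is empty at $s$. Otherwise, alpha triviality gives $\Falneinc_\emptyset(s)=\{R_s\}$, so $R_s\in\Falnei_\emptyset(s)$. By coalition monotonicity, $R_s\in\Falnei_\FCC(s)$ for all $\FCC$.

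\textbf{Checking the conditions of Definition~\ref{def:10-two-agent-pac-representative-actual-neighborhood-frames}.}
\begin{itemize}
\item Actual triviality: any $Y\in\Falnei_\emptyset(s)$ with $Y\subseteq R_s$ equals $R_s$, by minimality of $R_s$.
\item Liveness is inherited from $\FALNF$.
\item Inclusion: every member of $\Facnei_\FDD(s)$ is contained in $R_s\in\Facnei_\FCC(s)$.
\item Decomposition for $\FCC\subseteq\FDD$: coalition monotonicity gives $X\in\Facnei_\FDD(s)$, so $\Delta=\{X\}$ works. In the deterministic case with $\FDD=\FAG$, take $\Delta=\{\{t\}\mid t\in X\}$ instead.
\end{itemize}
The deterministic variant needs $\{t\}\in\Falnei_\FAG(s)$ for every $t\in R_s$. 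This follows from $\PAL$-determinism: by (b), $t$ lies in some element of $\Falneinc_\FAG(s)$, and by (a) that element is a singleton, hence equal to $\{t\}$.

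\textbf{Recovering $\Falnei_\FCC(s)$ by upward closure.} Each member of $\Facnei_\FCC(s)$ lies in $\Falnei_\FCC(s)$, and $\Falnei_\FCC(s)$ is superset-closed, so the closure is contained in $\Falnei_\FCC(s)$. Conversely, groundedness gives each $X\in\Falnei_\FCC(s)$ some $Y\subseteq X\cap R_s$ with $Y\in\Falnei_\FCC(s)$. In the deterministic case for $\FAG$, pick any $t\in Y$ (possible by liveness) and use $\{t\}\subseteq X$.

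\textbf{The $\FXX$-properties.} Seriality transfers because $\Falnei_\FCC(s)\neq\emptyset$ forces $R_s\in\Facnei_\FCC(s)$. $\PAC$-determinism holds by construction. For two-agent independence, take $X\in\Facnei_a(s)$ and $Y\in\Facnei_b(s)$. $\PAL$-independence gives $X\cap Y\in\Falnei_\FAG(s)$, and $X\cap Y\subseteq R_s$, so $Z=X\cap Y$ works. In the $\mathtt{ID}$ case take $Z=\{t\}$ for some $t\in X\cap Y$, which is nonempty by liveness. The cases involving $\emptyset$ are trivial, since $\Facnei_\emptyset(s)=\{R_s\}$ contains $X$.

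\textbf{Main obstacle.} I expect the delicate point to be the choice of the actual basis itself. Minimal sets need not exist, and naive choices can break determinism, which is why the grand coalition needs the singleton variant. I also expect care with the degenerate states where $R_s=\emptyset$. Once the basis is fixed, the remaining verifications are routine.
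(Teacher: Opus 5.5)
Your proposal is correct. The paper does not prove this theorem: it is recalled from the two-agent preprint and stated without proof. The closest argument in the paper is the proof of Theorem~\ref{thm:08-alpha-enoughness-theorem-for-finite-agent-frames-without-assuming-independence} via Lemma~\ref{lem:12-actual-basis-lemma-for-finite-agent-alpha-frames-without-assuming-independence}, and your argument is its two-agent counterpart.

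\begin{itemize}
\item \emph{Same basis.} Your basis is literally the paper's (B$_0$)/(B$_1$). The Claim in that lemma, together with $\PAL$-determinism (a), gives $\Falneinc_\FAG(s)=\{\{t\}\mid t\in R_s\}$.
\item \emph{Same recovery.} The upward-closure recovery of $\Falnei_\FCC(s)$ is the same as in the paper.
\item \emph{Different target theorem.} You feed the basis into the two-agent actual enoughness theorem (Theorem~\ref{thm:02-two-agent-actual-enoughness-theorem}). You therefore only need to check elementary power-level conditions: inclusion via $R_s$, and decomposition via $\Delta=\{X\}$ or via singletons. The paper works for arbitrary finite $\FAG$ and must instead construct explicit coherent profile covers and verify their coherence.
\item \emph{What your route gains.} It also handles the classes $\mathtt{I},\mathtt{SI},\mathtt{ID},\mathtt{SID}$. $\PAL$-independence transfers directly to two-agent $\PAC$-independence, because $X\cap Y\subseteq R_s$; in the deterministic case you pick a singleton inside $X\cap Y$. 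The paper's finite-agent route cannot reach these classes, since it has no intrinsic actual characterization of independence beyond two agents. Section~\ref{sec:09-amalgamation-of-power-profiles-is-insufficient-for-independence} shows that power-profile amalgamation does not suffice.
\item \emph{What it costs.} Your argument rests on Theorem~\ref{thm:02-two-agent-actual-enoughness-theorem}, which the paper also imports without proof. It also cannot be lifted to three or more agents, where the direct analogues of the two-agent conditions are insufficient (Example~\ref{ex:02-example}).
\end{itemize}

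There are two small slips, both in the deterministic case for the grand coalition. There $R_s$ need not belong to $\Facnei_\FAG(s)=\{\{t\}\mid t\in R_s\}$.
\begin{itemize}
\item For actual power inclusion with $\FCC=\FDD=\FAG$, take $Y=X$ rather than $R_s$.
\item For seriality, use $R_s\neq\emptyset$, which follows from liveness since $R_s\in\Falnei_\emptyset(s)$, to get $\Facnei_\FAG(s)\neq\emptyset$.
\end{itemize}
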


The actual representation results recalled above are essentially two-agent
results. The next section shows that the direct finite-agent reading of the
two-agent actual conditions is not sufficient for $\PAC$-representability once
three agents are allowed. This motivates the stronger finite-agent notion of
$\PAC$-representativeness introduced later.

%%%%%%%%%%%%%%%
%%%%%%%%%%%%%%%
\subsection{Weak finite-agent \texorpdfstring{$\PAC$}{PAC}-representativeness is not sufficient}
\label{subsec:04-03-weak-finite-agent-pac-representativeness-is-not-sufficient}

The two-agent characterization of actual powers uses four elementary
conditions: actual triviality of the empty coalition, liveness, actual power
inclusion, and actual power decomposition. These conditions have direct
finite-agent readings.

%%%%%%%%%%%%%%%
%%%%%%%%%%%%%%%
\begin{definition}[Weak finite-agent \texorpdfstring{$\PAC$}{PAC}-representativeness]
\label{def:14-weak-finite-agent-pac-representativeness}

Let \(\FACNF=(\FST,\{\Facnei_\FCC\mid \FCC\subseteq \FAG\})\) be an actual neighborhood frame, where \(\FAG\) is finite and nonempty. We say
that \(\FACNF\) is \Fdefs{weak finite-agent \(\PAC\)-representative} if the
following conditions are satisfied:

\begin{enumerate}

    \item \Fdefs{Actual triviality of the empty coalition:}
    for every \(s\in\FST\), if \(\Facnei_\emptyset(s)\) is nonempty, then
    \(\Facnei_\emptyset(s)\) is a singleton.

    \item \Fdefs{Liveness:}
    for every \(s\in\FST\) and every \(\FCC\subseteq\FAG\), \(\emptyset\notin\Facnei_\FCC(s)\).

    \item \Fdefs{Actual power inclusion:}
    for every \(s\in\FST\), all coalitions
    \(\FCC\subseteq\FDD\subseteq\FAG\), and every \(X\in\Facnei_\FDD(s)\),
    there exists \(Y\in\Facnei_\FCC(s)\) such that \(X\subseteq Y\).

    \item \Fdefs{Actual power decomposition:}
    for every \(s\in\FST\), all coalitions
    \(\FCC\subseteq\FDD\subseteq\FAG\), and every \(X\in\Facnei_\FCC(s)\),
    there exists a set \(\Delta\subseteq\Facnei_\FDD(s)\) such that \(X=\bigcup\Delta\).

\end{enumerate}

\end{definition}

The following proposition records the necessity of these weak finite-agent
conditions for \(\PAC\)-representability by general concurrent game frames.

%%%%%%%%%%%%%%%
%%%%%%%%%%%%%%%
\begin{proposition}[Necessity of the weak finite-agent conditions]
\label{prop:01-necessity-of-the-weak-finite-agent-conditions}

Let \(\FACNF=(\FST,\{\Facnei_\FCC\mid \FCC\subseteq\FAG\})\) be an actual neighborhood frame, where \(\FAG\) is finite and nonempty. If
\(\FACNF\) \(\PAC\)-represents some general concurrent game frame, then
\(\FACNF\) is weak finite-agent \(\PAC\)-representative.

\end{proposition}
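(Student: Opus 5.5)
The plan is to fix a general concurrent game frame \(\FGCGF\) that \(\FACNF\) \(\PAC\)-represents. Then for every coalition \(\FCC\) and state \(s\) we have \(\Facnei_\FCC(s)=\FACEF_\FCC(s)=\{\Fout_\FCC(s,\sigma_\FCC)\mid \sigma_\FCC\in\Fav_\FCC(s)\}\). Each of the four conditions of Definition~\ref{def:14-weak-finite-agent-pac-representativeness} is verified separately, using only the GCI- and ODA-conditions and the basic consequences in Fact~\ref{fact:01-basic-consequences-of-general-concurrent-game-frames}.

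\emph{Actual triviality of the empty coalition.} Since \(\FJA_\emptyset=\{\emptyset\}\), we have \(\Fav_\emptyset(s)\subseteq\{\emptyset\}\). Hence \(\Facnei_\emptyset(s)\) is either empty or equal to \(\{\Fout_\emptyset(s,\emptyset)\}\), and in the latter case it is a singleton.

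\emph{Liveness.} By the ODA-condition, every \(\sigma_\FCC\in\Fav_\FCC(s)\) satisfies \(\Fout_\FCC(s,\sigma_\FCC)\neq\emptyset\). Therefore \(\emptyset\notin\Facnei_\FCC(s)\).

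\emph{Actual power inclusion.} Let \(\FCC\subseteq\FDD\) and \(X=\Fout_\FDD(s,\sigma_\FDD)\) with \(\sigma_\FDD\in\Fav_\FDD(s)\). Put \(\sigma_\FCC=\sigma_\FDD|_\FCC\). Outcome monotonicity gives \(X\subseteq\Fout_\FCC(s,\sigma_\FCC)=:Y\). Since \(X\neq\emptyset\) by ODA, \(Y\) is nonempty as well, so \(\sigma_\FCC\in\Fav_\FCC(s)\) by ODA again. Thus \(Y\in\Facnei_\FCC(s)\) and \(X\subseteq Y\).

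\emph{Actual power decomposition.} This is the only step that needs a small computation. Let \(X=\Fout_\FCC(s,\sigma_\FCC)\) with \(\sigma_\FCC\in\Fav_\FCC(s)\), and let \(\FCC\subseteq\FDD\). Set
\[
\Delta=\{\Fout_\FDD(s,\sigma_\FDD)\mid \sigma_\FDD\in\Fav_\FDD(s),\ \sigma_\FDD|_\FCC=\sigma_\FCC\}\subseteq\Facnei_\FDD(s).
\]
The inclusion \(\bigcup\Delta\subseteq X\) is immediate from outcome monotonicity. For the converse, apply the alternative GCI-condition to write
\[
X=\bigcup\{\Fout_\FAG(s,\sigma_\FAG)\mid \sigma_\FAG\in\Fav_\FAG(s),\ \sigma_\FAG|_\FCC=\sigma_\FCC\}.
\]
Take any such \(\sigma_\FAG\) and set \(\sigma_\FDD=\sigma_\FAG|_\FDD\). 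Then \(\sigma_\FDD|_\FCC=\sigma_\FCC\), and by monotonicity \(\emptyset\neq\Fout_\FAG(s,\sigma_\FAG)\subseteq\Fout_\FDD(s,\sigma_\FDD)\). So \(\sigma_\FDD\) is available by ODA, and \(\Fout_\FDD(s,\sigma_\FDD)\in\Delta\). Hence \(X\subseteq\bigcup\Delta\).

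The only point requiring care is this availability bookkeeping: restrictions and extensions of available actions must remain available, which always follows from ODA once nonemptiness of outcomes is established. Otherwise the argument is routine.
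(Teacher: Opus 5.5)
Your proposal is correct: each of the four weak conditions follows directly from the ODA-condition, outcome monotonicity, and the alternative GCI-condition, as you verify. This includes the availability bookkeeping for restrictions (inclusion) and for restrictions of full extensions (decomposition). The paper does not reprove this proposition and defers it to the cited preprint, but your argument is the standard direct verification, using the same reasoning as the coverage and liveness parts of the proof of Theorem~\ref{thm:05-abstract-actual-havingness-theorem}.
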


The proof is given in \cite{chen_representation_2026}. 
The converse already fails with three agents. We recall the counterexample from
\cite{chen_representation_2026}.

%%%%%%%%%%%%%%%
%%%%%%%%%%%%%%%
\begin{example}
\label{ex:02-example}

Let \(\FAG=\{a,b,c\}\) and \(\FST=\{s,t_1,t_2,u,v\}\). Put
\[
Z=\{t_1,t_2\},
\qquad
U=\{u\},
\qquad
V=\{v\},
\]
and
\[
P=Z\cup U=\{t_1,t_2,u\},
\qquad
Q=Z\cup V=\{t_1,t_2,v\},
\qquad
O=Z\cup U\cup V=\{t_1,t_2,u,v\}.
\]

When no confusion can arise, we write \(a\) for \(\{a\}\), \(ab\) for
\(\{a,b\}\), and similarly for other coalitions. Define an actual neighborhood
frame \(\FACNF=(\FST,\{\Facnei_\FCC\mid \FCC\subseteq\FAG\})\) as follows. At the distinguished state \(s\), set
\[
\begin{array}{rcl@{\qquad}rcl}
\Facnei_\emptyset(s)&=&\{O\},
&
\Facnei_a(s)&=&\{P,Q\},
\\
\Facnei_b(s)&=&\{O\},
&
\Facnei_c(s)&=&\{O\},
\\
\Facnei_{ab}(s)&=&\{P,\{t_1\},\{t_2\},V\},
&
\Facnei_{ac}(s)&=&\{Q,\{t_1\},\{t_2\},U\},
\\
\Facnei_{bc}(s)&=&\{O\},
&
\Facnei_\FAG(s)&=&\{Z,\{t_1\},\{t_2\},U,V\}.
\end{array}
\]
At every other state \(x\in O\), set \(\Facnei_\FCC(x)=\{\FST\}\) for every coalition \(\FCC\subseteq\FAG\).

\end{example}

It is shown in \cite{chen_representation_2026} that \(\FACNF\) is weak
finite-agent \(\PAC\)-representative, but that it does not
\(\PAC\)-represent any general concurrent game frame. We recall only the
argument establishing the latter claim.

Suppose, for contradiction, that \(\FACNF\) \(\PAC\)-represents a general
concurrent game frame
\[
\FGCGF=
(\FST,\FAC,\{\Fav_\FCC\mid\FCC\subseteq\FAG\},
\{\Fout_\FCC\mid\FCC\subseteq\FAG\}).
\]
By \(\PAC\)-representation, since \(Z\in\Facnei_\FAG(s)\), choose
\(\sigma_\FAG\in\Fav_\FAG(s)\) such that \(\Fout_\FAG(s,\sigma_\FAG)=Z\).
Let \(\sigma_{ab}=\sigma_\FAG|_{ab},  \sigma_{ac}=\sigma_\FAG|_{ac},  \sigma_a=\sigma_\FAG|_a.\)

By outcome monotonicity, \(Z=\Fout_\FAG(s,\sigma_\FAG)  \subseteq \Fout_{ab}(s,\sigma_{ab}).\)
Thus \(\Fout_{ab}(s,\sigma_{ab})\neq\emptyset\), and the ODA-condition gives
\(\sigma_{ab}\in\Fav_{ab}(s)\). By \(\PAC\)-representation,
\(\Fout_{ab}(s,\sigma_{ab})\in\Facnei_{ab}(s) = \{P,\{t_1\},\{t_2\},V\}.\)
Among these sets, only \(P\) contains \(Z\). Hence \(\Fout_{ab}(s,\sigma_{ab})=P\).
The same argument applied to \(ac\) gives \(\Fout_{ac}(s,\sigma_{ac})=Q\).

Since \(\sigma_a\) is the common restriction of \(\sigma_{ab}\) and
\(\sigma_{ac}\), outcome monotonicity yields
\[
P=\Fout_{ab}(s,\sigma_{ab})\subseteq\Fout_a(s,\sigma_a)
\text{ and }
Q=\Fout_{ac}(s,\sigma_{ac})\subseteq\Fout_a(s,\sigma_a).
\]
Therefore \(P\cup Q\subseteq\Fout_a(s,\sigma_a)\).
In particular, \(\Fout_a(s,\sigma_a)\neq\emptyset\), so the ODA-condition gives
\(\sigma_a\in\Fav_a(s)\). By \(\PAC\)-representation, \(\Fout_a(s,\sigma_a)\in\Facnei_a(s)=\{P,Q\}\).
This is impossible: \(\Fout_a(s,\sigma_a)\) must contain \(P\cup Q=O\), while
both possible values \(P\) and \(Q\) are proper subsets of \(O\).

The obstruction comes from incomparable overlapping coalitions. With three
agents, coalitions such as \(\{a,b\}\) and \(\{a,c\}\) are incomparable but
overlap in \(a\). If their powers are induced by a common game frame, then the powers
arising from restrictions of the same full joint action must remain compatible
on the shared \(a\)-component.

The four conditions of weak finite-agent \(\PAC\)-representativeness control
how powers behave along coalition inclusions, but they do not ensure that
powers assigned to overlapping coalitions can be generated by one coherent
system of action restrictions. This is why weak finite-agent
\(\PAC\)-representativeness is not sufficient.

%%%%%%%%%%%%%%%
%%%%%%%%%%%%%%%
\section{Representing actual powers via abstract presentations}
\label{sec:05-representing-actual-powers-via-abstract-presentations}

The counterexample in the previous section shows that weak finite-agent
\(\PAC\)-representativeness misses a compatibility requirement. The missing
requirement is not only about how powers behave along coalition inclusions. It
also concerns how powers assigned to incomparable overlapping coalitions are
coordinated. If an actual neighborhood frame arises from a game frame, then
powers of different coalitions that arise from restrictions of the same full
joint action must be compatible through their shared components.

This section introduces abstract actual presentations to record this
coordination. At each state, a set of indices is used; intuitively, an index
stands for an abstract full choice of actions by all agents. For each agent
\(a\), an equivalence relation on indices records when two indices have the same
\(a\)-component. The relation for a coalition is obtained by intersecting the
corresponding agent-equivalence relations, and hence records agreement on all
agents in that coalition. Each index is also assigned a set of grand-coalition
outcomes. Given an index and a coalition, the corresponding coalition power is
recovered as the union of the grand-coalition outcome sets assigned to all
indices that agree with the original index on that coalition.

Using these presentations, together with liveness, we define \(\PAC\)-abstract-representative actual
neighborhood frames. We also formulate the additional conditions corresponding
to seriality, independence, and determinism. Seriality and determinism are
stated directly on the neighborhood frame, while independence is expressed as an
amalgamation property of abstract actual presentations. We then prove the
corresponding havingness and enoughness theorems. Consequently, for actual
powers, the eight finite-agent classes of general concurrent game frames are
characterized, up to \(\PAC\)-representability, by the corresponding
\(\PAC\)-abstract-representative actual neighborhood frames.

The notion is deliberately abstract: it quantifies over indices rather than
speaking only in terms of powers or neighborhoods. Its purpose is to isolate the
coordination structure that is absent from weak finite-agent
\(\PAC\)-representativeness. In later sections, for the classes that do not
impose independence of agents, abstract presentability is replaced by an
equivalent condition formulated in terms of simultaneous compatible choices of
powers for all coalitions. This equivalence then yields a more concrete,
index-free representation theorem.

%%%%%%%%%%%%%%%
%%%%%%%%%%%%%%%
\subsection{Abstract actual presentations}
\label{subsec:05-01-abstract-actual-presentations}

We now introduce abstract actual presentations. These are indexed structures
that record the abstract witnesses from which actual coalition powers can be
recovered. Their existence will later be used to formulate the corresponding
notion of representativeness.

%%%%%%%%%%%%%%%
%%%%%%%%%%%%%%%
\begin{definition}[Abstract actual presentations]
\label{def:15-abstract-actual-presentations}

Let \(\FACNF=(\FST,\{\Facnei_\FCC\mid \FCC\subseteq\FAG\})\) be an actual neighborhood frame, where \(\FAG\) is finite and nonempty, and fix
\(s\in\FST\). Write \(N^s_\FCC=\Facnei_\FCC(s)\) for each coalition \(\FCC\subseteq\FAG\).

An \Fdefs{abstract actual presentation} of \(\FACNF\) at \(s\) is a tuple \(\mathbb I_s=(I_s,o_s,\{\equiv^s_a\}_{a\in\FAG})\), where \(I_s\) is a set whose elements are called \Fdefs{abstract full-action
indices} at \(s\), \(o_s:I_s\to\mathcal P(\FST)\) is a
\Fdefs{grand-outcome map}, and, for every agent \(a\in\FAG\),
\(\equiv^s_a\) is an equivalence relation on \(I_s\).

From this tuple we derive the following objects. For every coalition
\(\FCC\subseteq\FAG\), the \Fdefs{coalition-equivalence relation}
\(\equiv^s_\FCC\) on \(I_s\) is defined by
\(i\equiv^s_\FCC j \text{iff} i\equiv^s_a j \text{ for every } a\in\FCC.\)
For \(\FCC=\emptyset\), this is the universal relation on \(I_s\).

For every coalition \(\FCC\subseteq\FAG\), the \Fdefs{presented-power map} \(\rho^s_\FCC:I_s\to\mathcal P(\FST)\) is defined by
\[
\rho^s_\FCC(i)
=
\bigcup
\{\,o_s(j)
\mid
j\in I_s \text{ and } i\equiv^s_\FCC j\,\}.
\]

The tuple \(\mathbb I_s\) is required to satisfy the following conditions.

\begin{itemize}

\item \Fdefs{Grand compatibility.} For all \(i,j\in I_s\),
\(i\equiv^s_\FAG j \Longrightarrow o_s(i)=o_s(j).\)

\item \Fdefs{Coverage.} For every coalition \(\FCC\subseteq\FAG\),
\[
N^s_\FCC
=
\{\,\rho^s_\FCC(i)\mid i\in I_s\,\}.
\]

\end{itemize}

A family \(\mathbb I=(\mathbb I_s)_{s\in\FST}\) is an \Fdefs{abstract actual presentation} of \(\FACNF\) if
\(\mathbb I_s\) is an abstract actual presentation of \(\FACNF\) at \(s\), for
every \(s\in\FST\).

\end{definition}

The intended reading is as follows. An index \(i\in I_s\) represents an
abstract full joint action at \(s\). The relation \(i\equiv^s_a j\)
records that the indices \(i\) and \(j\) have the same abstract
\(a\)-component. Thus \(i\equiv^s_\FCC j\) means that the two indices agree on
the \(\FCC\)-component. The set \(o_s(i)\) is the exact outcome set of the full
index, while \(\rho^s_\FCC(i)\) is the exact outcome range left by the
\(\FCC\)-component of that index. The definition allows \(I_s=\emptyset\); in
that case, coverage forces all neighborhoods at \(s\) to be empty.

%%%%%%%%%%%%%%%
%%%%%%%%%%%%%%%
\begin{lemma}[Basic properties of presented powers]
\label{lem:01-basic-properties-of-presented-powers}

Let \(\mathbb I_s=(I_s,o_s,\{\equiv^s_a\}_{a\in\FAG})\) be an abstract actual
presentation at \(s\). Then the following hold.

\begin{enumerate}

\item For every \(i\in I_s\), \(\rho^s_\FAG(i)=o_s(i)\).

\item For every \(i\in I_s\) and all coalitions
\(\FCC\subseteq\FDD\subseteq\FAG\), \(\rho^s_\FDD(i)\subseteq \rho^s_\FCC(i)\).

\item For all \(i,j\in I_s\) and all coalitions \(\FDD\subseteq\FCC\subseteq\FAG\),
if \(i\equiv^s_\FCC j\), then \(\rho^s_\FDD(i)=\rho^s_\FDD(j)\).

\end{enumerate}

\end{lemma}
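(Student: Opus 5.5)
All three items follow directly from the definitions of \(\equiv^s_\FCC\) and \(\rho^s_\FCC\), together with grand compatibility. The one structural fact used throughout is anti-monotonicity of the coalition relations: if \(\FCC\subseteq\FDD\), then \(i\equiv^s_\FDD j\) implies \(i\equiv^s_\FCC j\). This holds because \(\equiv^s_\FDD\) is the intersection of the agent relations \(\equiv^s_a\) over \(a\in\FDD\), and that is a sub-intersection's worth of extra conditions compared with \(a\in\FCC\). Each \(\equiv^s_\FCC\) is also an equivalence relation, being an intersection of equivalence relations; for \(\FCC=\emptyset\) it is the universal relation, which is again an equivalence relation.

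For item 1, we prove both inclusions. Since \(\equiv^s_\FAG\) is reflexive, \(i\) is among the indices \(j\) in the union defining \(\rho^s_\FAG(i)\), so \(o_s(i)\subseteq\rho^s_\FAG(i)\). Conversely, for every \(j\) with \(i\equiv^s_\FAG j\), grand compatibility gives \(o_s(j)=o_s(i)\). So every set in the union equals \(o_s(i)\), and the union is contained in \(o_s(i)\).

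For item 2, fix \(\FCC\subseteq\FDD\). By anti-monotonicity, every \(j\) with \(i\equiv^s_\FDD j\) also satisfies \(i\equiv^s_\FCC j\). Hence the family \(\{o_s(j)\mid i\equiv^s_\FDD j\}\) is a subfamily of \(\{o_s(j)\mid i\equiv^s_\FCC j\}\), and taking unions preserves the inclusion.

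For item 3, fix \(\FDD\subseteq\FCC\) and assume \(i\equiv^s_\FCC j\). Anti-monotonicity gives \(i\equiv^s_\FDD j\). Because \(\equiv^s_\FDD\) is an equivalence relation (symmetry and transitivity), \(i\) and \(j\) have the same \(\equiv^s_\FDD\)-class. That is, \(\{k\mid i\equiv^s_\FDD k\}=\{k\mid j\equiv^s_\FDD k\}\). The two unions defining \(\rho^s_\FDD(i)\) and \(\rho^s_\FDD(j)\) are then taken over the same index set, so they coincide.

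None of these steps is a real obstacle. The only point needing care is item 1, which is the one place grand compatibility is required; without it, \(\rho^s_\FAG(i)\) could be strictly larger than \(o_s(i)\).
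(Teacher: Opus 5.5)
Your proposal is correct and follows essentially the same argument as the paper: reflexivity plus grand compatibility for item 1, the subfamily-of-indices observation for item 2, and equality of $\equiv^s_\FDD$-classes for item 3. The paper's proof differs only in that it leaves the anti-monotonicity of the coalition relations implicit rather than stating it up front.
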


%%%%%%%%%%%%%%%
%%%%%%%%%%%%%%%
\begin{proof}

For item 1, fix \(i\in I_s\). By definition, \(\rho^s_\FAG(i) = \bigcup\{\,o_s(j)\mid i\equiv^s_\FAG j\,\}\).
This union contains \(o_s(i)\). Conversely, if \(o_s(j)\) occurs in the union,
then \(i\equiv^s_\FAG j\), and hence \(o_s(j)=o_s(i)\) by grand compatibility.
Therefore \(\rho^s_\FAG(i)=o_s(i)\).

For item 2, fix \(i\in I_s\) and assume
\(\FCC\subseteq\FDD\). If \(i\equiv^s_\FDD j\), then
\(i\equiv^s_\FCC j\). Thus the union defining \(\rho^s_\FDD(i)\) is taken over
a subset of the indices used in the union defining \(\rho^s_\FCC(i)\). Hence \(\rho^s_\FDD(i)\subseteq\rho^s_\FCC(i)\).

For item 3, assume \(i\equiv^s_\FCC j\) and
\(\FDD\subseteq\FCC\). Then \(i\equiv^s_\FDD j\). Since
\(\equiv^s_\FDD\) is an equivalence relation, for every \(k\in I_s\), \(i\equiv^s_\FDD k \text{iff} j\equiv^s_\FDD k\).
Therefore the two unions defining \(\rho^s_\FDD(i)\) and
\(\rho^s_\FDD(j)\) range over the same indices. Hence they are equal.

\end{proof}

%%%%%%%%%%%%%%%
%%%%%%%%%%%%%%%
\subsection{\texorpdfstring{$\PAC$}{PAC}-abstract-representativeness}
\label{subsec:05-02-pac-abstract-representativeness}

We now use abstract actual presentations to define a witness-based notion of
representativeness for actual neighborhood frames.

%%%%%%%%%%%%%%%
%%%%%%%%%%%%%%%
\begin{definition}[\texorpdfstring{$\PAC$}{PAC}-abstract-representative actual neighborhood frames]
\label{def:16-pac-abstract-representative-actual-neighborhood-frames}

Let \(\FACNF=(\FST,\{\Facnei_\FCC\mid \FCC\subseteq\FAG\})\) be an actual neighborhood frame, where \(\FAG\) is finite and nonempty. We say
that \(\FACNF\) is \Fdefs{\(\PAC\)-abstract-representative} if the following
conditions hold.

\begin{enumerate}

\item \Fdefs{Abstract presentability.} There exists an abstract actual
presentation \(\mathbb I=(\mathbb I_s)_{s\in\FST}\) of \(\FACNF\).

\item \Fdefs{Liveness.} For every \(s\in\FST\) and every
\(\FCC\subseteq\FAG\), \(\emptyset\notin\Facnei_\FCC(s)\).

\end{enumerate}

\end{definition}

The following result will be used later.

%%%%%%%%%%%%%%%
%%%%%%%%%%%%%%%
\begin{lemma}[Empty-coalition triviality from abstract presentability]
\label{lem:02-empty-coalition-triviality-from-abstract-presentability}

Let \(\FACNF=(\FST,\{\Facnei_\FCC\mid \FCC\subseteq\FAG\})\) be an actual neighborhood frame, where \(\FAG\) is finite and nonempty. If
\(\FACNF\) has an abstract actual presentation, then \(\FACNF\) is trivial for
the empty coalition. That is, for every \(s\in\FST\), if
\(\Facnei_\emptyset(s)\) is nonempty, then it is a singleton.

\end{lemma}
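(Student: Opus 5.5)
The plan is to read everything off the empty coalition's equivalence relation. Fix \(s\in\FST\) and an abstract actual presentation \(\mathbb I_s=(I_s,o_s,\{\equiv^s_a\}_{a\in\FAG})\) of \(\FACNF\) at \(s\); one exists by hypothesis. I will use coverage for \(\FCC=\emptyset\), which gives \(\Facnei_\emptyset(s)=\{\rho^s_\emptyset(i)\mid i\in I_s\}\). It then suffices to show that \(\rho^s_\emptyset\) is constant on \(I_s\).

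The key step is the definition of \(\equiv^s_\emptyset\). The condition ``\(i\equiv^s_a j\) for every \(a\in\emptyset\)'' is vacuous, so, as the definition itself records, \(\equiv^s_\emptyset\) is the universal relation on \(I_s\). Hence for every \(i\in I_s\),
\[
\rho^s_\emptyset(i)=\bigcup\{\,o_s(j)\mid j\in I_s\,\},
\]
and this value does not depend on \(i\). Alternatively, I can invoke Lemma~\ref{lem:01-basic-properties-of-presented-powers}, item 3, with \(\FDD=\FCC=\emptyset\): since \(i\equiv^s_\emptyset j\) holds for all \(i,j\), it gives \(\rho^s_\emptyset(i)=\rho^s_\emptyset(j)\).

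Finally, I split on whether \(I_s\) is empty. If \(\Facnei_\emptyset(s)\) is nonempty, coverage forces \(I_s\neq\emptyset\), so the set \(\{\rho^s_\emptyset(i)\mid i\in I_s\}\) consists of exactly one element, \(\bigcup_{j\in I_s}o_s(j)\). Therefore \(\Facnei_\emptyset(s)\) is a singleton. If \(I_s=\emptyset\), then \(\Facnei_\emptyset(s)=\emptyset\) and there is nothing to prove.

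I expect no real obstacle here. The only point needing care is the convention that the empty intersection of the agent relations is the universal relation, and the paper states this explicitly in Definition~\ref{def:15-abstract-actual-presentations}.
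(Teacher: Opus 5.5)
Your proof is correct and matches the paper's argument: coverage at \(\emptyset\), universality of \(\equiv^s_\emptyset\) making \(\rho^s_\emptyset\) constant on \(I_s\), and a case split on whether \(I_s\) is empty. Your alternative route through item 3 of Lemma~\ref{lem:01-basic-properties-of-presented-powers} with \(\FDD=\FCC=\emptyset\) is also valid but not needed.
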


%%%%%%%%%%%%%%%
%%%%%%%%%%%%%%%
\begin{proof}

Let \(\mathbb I=(\mathbb I_s)_{s\in\FST}\) be an abstract actual presentation of \(\FACNF\). Fix \(s\in\FST\), and write \(\mathbb I_s=(I_s,o_s,\{\equiv^s_a\}_{a\in\FAG})\).
We show that \(\Facnei_\emptyset(s)\) has at most one member.

If \(I_s=\emptyset\), then by coverage
\[
\Facnei_\emptyset(s)
=
\{\rho^s_\emptyset(i)\mid i\in I_s\}
=
\emptyset,
\]
so there is nothing to prove.

Suppose \(I_s\neq\emptyset\). Since \(\equiv^s_\emptyset\) is the universal
relation on \(I_s\), for every \(i\in I_s\),
\[
\rho^s_\emptyset(i)
=
\bigcup\{o_s(j)\mid j\in I_s \text{ and } i\equiv^s_\emptyset j\}
=
\bigcup\{o_s(j)\mid j\in I_s\}.
\]
Thus \(\rho^s_\emptyset(i)\) is independent of \(i\). Hence \(\{\rho^s_\emptyset(i)\mid i\in I_s\}\)
has exactly one member. By coverage, \(\Facnei_\emptyset(s) = \{\rho^s_\emptyset(i)\mid i\in I_s\}\).
Therefore \(\Facnei_\emptyset(s)\) is either empty or a singleton.

Since \(s\) was arbitrary, \(\FACNF\) is trivial for the empty coalition.

\end{proof}

%%%%%%%%%%%%%%%
%%%%%%%%%%%%%%%
\begin{definition}[Amalgamating abstract actual presentations]
\label{def:17-amalgamating-abstract-actual-presentations}

Let \(\mathbb I=(\mathbb I_s)_{s\in\FST}\) be an abstract actual presentation of an actual neighborhood frame over
\(\FAG\), where \(\mathbb I_s=(I_s,o_s,\{\equiv^s_a\}_{a\in\FAG})\) for each \(s\in\FST\).

For \(s\in\FST\), we say that the local presentation \(\mathbb I_s\) is
\Fdefs{amalgamating} if, for all \(i,j\in I_s\) and all disjoint coalitions
\(\FCC,\FDD\subseteq\FAG\), there exists \(k\in I_s\) such that \(i\equiv^s_\FCC k \text{ and } j\equiv^s_\FDD k\).

We say that the presentation \(\mathbb I\) is \Fdefs{amalgamating} if
\(\mathbb I_s\) is amalgamating for every \(s\in\FST\).

\end{definition}

%%%%%%%%%%%%%%%
%%%%%%%%%%%%%%%
\begin{definition}[Seriality, independence, and determinism at the abstract level]
\label{def:18-seriality-independence-and-determinism-at-the-abstract-level}

Let \(\FACNF\) be an \(\PAC\)-abstract-representative actual neighborhood frame.

\begin{itemize}

\item \(\FACNF\) is \Fdefs{\(\PAC\)-abstract-serial} if, for every
\(s\in\FST\) and every \(\FCC\subseteq\FAG\), \(\Facnei_\FCC(s)\neq\emptyset\).

\item \(\FACNF\) is \Fdefs{\(\PAC\)-abstract-independent} if it has an
amalgamating abstract actual presentation.

\item \(\FACNF\) is \Fdefs{\(\PAC\)-abstract-deterministic} if, for every
\(s\in\FST\) and every \(X\in\Facnei_\FAG(s)\), the set \(X\) is a singleton.

\end{itemize}

\end{definition}

Seriality and determinism are stated directly in terms of neighborhood powers.
Independence is stated at the index level, through the existence of an
amalgamating abstract actual presentation.

For every \(\FXX\in\FES\), an \(\PAC\)-abstract-representative actual
neighborhood frame is a \Fdefs{\(\PAC\)-abstract-representative actual
neighborhood \(\FXX\)-frame} if it satisfies the corresponding additional
conditions: it is \(\PAC\)-abstract-serial when \(\mathtt S\) occurs in
\(\FXX\), \(\PAC\)-abstract-independent when \(\mathtt I\) occurs in \(\FXX\),
and \(\PAC\)-abstract-deterministic when \(\mathtt D\) occurs in \(\FXX\).

%%%%%%%%%%%%%%%
%%%%%%%%%%%%%%%
\subsection{Abstract actual havingness}
\label{subsec:05-03-abstract-actual-havingness}

The havingness direction starts from a genuine game frame and constructs an
abstract presentation of its induced actual neighborhood frame. At each state,
the indices are the available full joint actions. Equivalence for an agent is
equality of that agent's component.

%%%%%%%%%%%%%%%
%%%%%%%%%%%%%%%
\begin{theorem}[Abstract actual havingness theorem]
\label{thm:05-abstract-actual-havingness-theorem}

Let \(\FAG\) be finite and nonempty, and let \(\FXX\in\FES\). Let
\[
\FGCGF=(\FST,\FAC,\{\Fav_\FCC\mid \FCC\subseteq\FAG\},
\{\Fout_\FCC\mid \FCC\subseteq\FAG\})
\]
be a general concurrent game \(\FXX\)-frame. Let
\[
\FACNF=(\FST,\{\Facnei_\FCC\mid \FCC\subseteq\FAG\})
\]
be the actual neighborhood frame induced by \(\FGCGF\), that is,
\[
\Facnei_\FCC(s)
=
\{\,\Fout_\FCC(s,\sigma_\FCC)
\mid
\sigma_\FCC\in\Fav_\FCC(s)\,\}
\]
for every \(s\in\FST\) and every \(\FCC\subseteq\FAG\). Then \(\FACNF\) is a
\(\PAC\)-abstract-representative actual neighborhood \(\FXX\)-frame.

\end{theorem}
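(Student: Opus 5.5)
The plan is to build, at each state \(s\), the abstract actual presentation suggested just before the theorem. Put \(I_s=\Fav_\FAG(s)\) and \(o_s(\sigma_\FAG)=\Fout_\FAG(s,\sigma_\FAG)\). For \(a\in\FAG\), declare \(\sigma_\FAG\equiv^s_a\tau_\FAG\) iff \(\sigma_\FAG(a)=\tau_\FAG(a)\). Each \(\equiv^s_a\) is then clearly an equivalence relation. The induced coalition relation \(\equiv^s_\FCC\) is exactly \(\sigma_\FAG|_\FCC=\tau_\FAG|_\FCC\), and for \(\FCC=\emptyset\) it is universal. Grand compatibility is immediate, since \(\equiv^s_\FAG\) is equality of full joint actions.

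The central computation is the identity
\[
\rho^s_\FCC(\sigma_\FAG)=\Fout_\FCC(s,\sigma_\FAG|_\FCC).
\]
The presented power is the union of \(\Fout_\FAG(s,\tau_\FAG)\) over available \(\tau_\FAG\) with \(\tau_\FAG|_\FCC=\sigma_\FAG|_\FCC\). By the alternative GCI-condition (Fact~\ref{fact:01-basic-consequences-of-general-concurrent-game-frames}), this union is exactly \(\Fout_\FCC(s,\sigma_\FAG|_\FCC)\).

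Coverage needs both inclusions. For \(\supseteq\), take \(\sigma_\FAG\in\Fav_\FAG(s)\). By ODA, \(\Fout_\FAG(s,\sigma_\FAG)\neq\emptyset\), so outcome monotonicity gives \(\Fout_\FCC(s,\sigma_\FAG|_\FCC)\neq\emptyset\). ODA then gives \(\sigma_\FAG|_\FCC\in\Fav_\FCC(s)\), so \(\rho^s_\FCC(\sigma_\FAG)\in\Facnei_\FCC(s)\).

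For \(\subseteq\), which I expect to be the only mildly delicate step, take \(\sigma_\FCC\in\Fav_\FCC(s)\). ODA makes \(\Fout_\FCC(s,\sigma_\FCC)\) nonempty. By the GCI-condition this set is a union of grand outcomes, so some \(\sigma_\FAG\) extending \(\sigma_\FCC\) has nonempty outcome and is therefore available. The identity above then gives \(\Fout_\FCC(s,\sigma_\FCC)=\rho^s_\FCC(\sigma_\FAG)\). The point of this step is that every available coalition action extends to an available full action. The same argument covers \(\FCC=\emptyset\). When \(\Fav_\FAG(s)=\emptyset\), both sides of coverage are empty.

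Liveness is direct: every element of \(\Facnei_\FCC(s)\) is \(\Fout_\FCC(s,\sigma_\FCC)\) for some available \(\sigma_\FCC\), and ODA makes it nonempty.

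It remains to check the conditions named by \(\FXX\).
\begin{itemize}
\item If \(\mathtt S\in\FXX\), then \(\Fav_\FCC(s)\neq\emptyset\), so \(\Facnei_\FCC(s)\neq\emptyset\).
\item If \(\mathtt D\in\FXX\), every \(X\in\Facnei_\FAG(s)\) has the form \(\Fout_\FAG(s,\sigma_\FAG)\) with \(\sigma_\FAG\) available, hence is a singleton.
\item If \(\mathtt I\in\FXX\), I show the presentation above is amalgamating. Given \(i,j\in I_s\) and disjoint \(\FCC,\FDD\), the restrictions \(i|_\FCC\) and \(j|_\FDD\) are available by the coverage argument. Independence then gives \(i|_\FCC\cup j|_\FDD\in\Fav_{\FCC\cup\FDD}(s)\). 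By the extension step it extends to some \(k\in\Fav_\FAG(s)\), and this \(k\) satisfies \(i\equiv^s_\FCC k\) and \(j\equiv^s_\FDD k\).
\end{itemize}
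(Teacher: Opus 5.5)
Your proposal is correct and follows essentially the same route as the paper. Both take \(I_s=\Fav_\FAG(s)\) with componentwise agent-equivalence and derive the key identity \(\rho^s_\FCC(\sigma)=\Fout_\FCC(s,\sigma|_\FCC)\) from the alternative GCI-condition; both prove coverage by extending an available coalition action to an available full action, and then check liveness, seriality, determinism, and amalgamation under independence in the same way.
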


%%%%%%%%%%%%%%%
%%%%%%%%%%%%%%%
\begin{proof}

We first prove that \(\FACNF\) is \(\PAC\)-abstract-representative. We then
verify the additional conditions required by \(\FXX\).

\medskip
\noindent
\textbf{Abstract presentability.}
Fix \(s\in\FST\). We construct an abstract actual presentation at \(s\). Put
\(I_s=\Fav_\FAG(s).\)
For \(\sigma\in I_s\), define
\(o_s(\sigma)=\Fout_\FAG(s,\sigma).\)
For every \(a\in\FAG\), define an equivalence relation on \(I_s\) by
\(\sigma\equiv^s_a\tau \text{iff} \sigma(a)=\tau(a).\)
Let
\(\mathbb I_s=(I_s,o_s,\{\equiv^s_a\}_{a\in\FAG}).\)
We show that \(\mathbb I_s\) is an abstract actual presentation of
\(\FACNF\) at \(s\).

Grand compatibility is immediate. If \(\sigma\equiv^s_\FAG\tau\), then
\(\sigma=\tau\), and hence
\(o_s(\sigma)=o_s(\tau).\)

We next record the key computation. For every \(\sigma\in I_s\) and every
\(\FCC\subseteq\FAG\),
\[
\begin{aligned}
\rho^s_\FCC(\sigma)
&=
\bigcup
\{\,o_s(\tau)
\mid
\tau\in I_s \text{ and } \sigma\equiv^s_\FCC\tau\,\} \\
&=
\bigcup
\{\,\Fout_\FAG(s,\tau)
\mid
\tau\in\Fav_\FAG(s) \text{ and } \tau|_\FCC=\sigma|_\FCC\,\} \\
&=
\Fout_\FCC(s,\sigma|_\FCC).
\end{aligned}
\tag{AHP}
\label{eq:01-abstract-having-rho-is-outcome}
\]
The second equality uses the definition of \(\equiv^s_\FCC\), and the last
equality is the alternative GCI-condition.

It remains to verify coverage, that is, for every coalition
\(\FCC\subseteq\FAG\),
\(\Facnei_\FCC(s) = \{\,\rho^s_\FCC(\sigma)\mid \sigma\in I_s\,\}.\)

First, let \(\sigma\in I_s\). Since \(\sigma\in\Fav_\FAG(s)\), the
ODA-condition gives
\(\Fout_\FAG(s,\sigma)\neq\emptyset.\)
By outcome monotonicity,
\(\Fout_\FAG(s,\sigma) \subseteq \Fout_\FCC(s,\sigma|_\FCC).\)
Hence
\(\Fout_\FCC(s,\sigma|_\FCC)\neq\emptyset.\)
By the ODA-condition again,
\(\sigma|_\FCC\in\Fav_\FCC(s).\)
Using \eqref{eq:01-abstract-having-rho-is-outcome}, we get
\(\rho^s_\FCC(\sigma) = \Fout_\FCC(s,\sigma|_\FCC) \in\Facnei_\FCC(s).\)
Thus
\(\{\,\rho^s_\FCC(\sigma)\mid \sigma\in I_s\,\} \subseteq \Facnei_\FCC(s).\)

Conversely, let \(X\in\Facnei_\FCC(s)\). Choose
\(\eta_\FCC\in\Fav_\FCC(s)\) such that
\(X=\Fout_\FCC(s,\eta_\FCC).\)
By the ODA-condition,
\(X\neq\emptyset.\)
By the alternative GCI-condition,
\[
X
=
\bigcup
\{\,\Fout_\FAG(s,\tau)
\mid
\tau\in\Fav_\FAG(s) \text{ and } \tau|_\FCC=\eta_\FCC\,\}.
\]
Since \(X\neq\emptyset\), there exists
\(\tau\in\Fav_\FAG(s)=I_s\) such that
\(\tau|_\FCC=\eta_\FCC.\)
Therefore, by \eqref{eq:01-abstract-having-rho-is-outcome},
\(\rho^s_\FCC(\tau) = \Fout_\FCC(s,\tau|_\FCC) = \Fout_\FCC(s,\eta_\FCC) = X.\)
This proves the reverse inclusion, and hence coverage holds.

\medskip
\noindent
\textbf{Liveness.}
Let \(s\in\FST\), let \(\FCC\subseteq\FAG\), and let
\(X\in\Facnei_\FCC(s)\). Then
\(X=\Fout_\FCC(s,\sigma_\FCC)\)
for some \(\sigma_\FCC\in\Fav_\FCC(s)\). By the ODA-condition,
\(\Fout_\FCC(s,\sigma_\FCC)\neq\emptyset.\)
Hence \(X\neq\emptyset\). Therefore
\(\emptyset\notin\Facnei_\FCC(s)\)
for every \(s\in\FST\) and every \(\FCC\subseteq\FAG\).

Since \(s\) was arbitrary, the family
\(\mathbb I=(\mathbb I_s)_{s\in\FST}\)
is an abstract actual presentation of \(\FACNF\). Therefore \(\FACNF\) is
\(\PAC\)-abstract-representative.

\medskip
\noindent
\textbf{Seriality.}
Assume that \(\mathtt S\) occurs in \(\FXX\). Then \(\FGCGF\) is serial. Hence
\(\Fav_\FCC(s)\neq\emptyset\)
for every \(s\in\FST\) and every \(\FCC\subseteq\FAG\). By the definition of
the induced actual neighborhood frame, it follows that
\(\Facnei_\FCC(s)\neq\emptyset\)
for every \(s\in\FST\) and every \(\FCC\subseteq\FAG\). Thus \(\FACNF\) is
\(\PAC\)-abstract-serial.

\medskip
\noindent
\textbf{Independence.}
Assume that \(\mathtt I\) occurs in \(\FXX\). Then \(\FGCGF\) is independent.
We show that the abstract presentation constructed above is amalgamating.

Fix \(s\in\FST\), let \(\sigma,\tau\in I_s=\Fav_\FAG(s)\), and let
\(\FCC,\FDD\subseteq\FAG\) be disjoint coalitions. As in the coverage argument,
outcome monotonicity and the ODA-condition give
\(\sigma|_\FCC\in\Fav_\FCC(s) \text{ and } \tau|_\FDD\in\Fav_\FDD(s).\)
By independence of \(\FGCGF\),
\(\sigma|_\FCC\cup\tau|_\FDD \in \Fav_{\FCC\cup\FDD}(s).\)
By the ODA-condition,
\(\Fout_{\FCC\cup\FDD} (s,\sigma|_\FCC\cup\tau|_\FDD) \neq\emptyset.\)
Applying the alternative GCI-condition to the partial action
\(\sigma|_\FCC\cup\tau|_\FDD\), we obtain some
\(\kappa\in\Fav_\FAG(s)=I_s\) such that
\(\kappa|_{\FCC\cup\FDD} = \sigma|_\FCC\cup\tau|_\FDD.\)
Since \(\FCC\) and \(\FDD\) are disjoint, this implies
\(\kappa|_\FCC=\sigma|_\FCC \text{ and } \kappa|_\FDD=\tau|_\FDD.\)
Equivalently,
\(\sigma\equiv^s_\FCC\kappa \text{ and } \tau\equiv^s_\FDD\kappa.\)
Thus the presentation \(\mathbb I\) is amalgamating, and \(\FACNF\) is
\(\PAC\)-abstract-independent.

\medskip
\noindent
\textbf{Determinism.}
Assume that \(\mathtt D\) occurs in \(\FXX\). Then \(\FGCGF\) is deterministic.
Let \(s\in\FST\) and let \(X\in\Facnei_\FAG(s)\). Then
\(X=\Fout_\FAG(s,\sigma)\)
for some \(\sigma\in\Fav_\FAG(s)\). By determinism of \(\FGCGF\), the set
\(\Fout_\FAG(s,\sigma)\) is a singleton. Hence \(X\) is a singleton. Thus
\(\FACNF\) is \(\PAC\)-abstract-deterministic.

We have verified all conditions required for \(\FACNF\) to be a
\(\PAC\)-abstract-representative actual neighborhood \(\FXX\)-frame.

\end{proof}

%%%%%%%%%%%%%%%
%%%%%%%%%%%%%%%
\subsection{Abstract actual enoughness}
\label{subsec:05-04-abstract-actual-enoughness}

We now prove the converse direction of
Theorem~\ref{thm:05-abstract-actual-havingness-theorem}. Starting with an abstract actual
presentation of an actual neighborhood frame, we construct a general concurrent
game frame whose actual effectivity functions recover exactly the given
neighborhood functions.

The proof is organized as follows. We first define the generated game frame.
We then prove two technical facts. These facts yield exact recovery of the
neighborhood frame. Finally, we check preservation of seriality, independence,
and determinism.

%%%%%%%%%%%%%%%
%%%%%%%%%%%%%%%
\begin{definition}[Game frame generated by an abstract actual presentation]
\label{def:19-game-frame-generated-by-an-abstract-actual-presentation}

Let
\[
\FACNF=(\FST,\{\Facnei_\FCC\mid \FCC\subseteq\FAG\})
\]
be an actual neighborhood frame, and let
\[
\mathbb I=(\mathbb I_s)_{s\in\FST}
\]
be an abstract actual presentation of \(\FACNF\), where
\[
\mathbb I_s=(I_s,o_s,\{\equiv^s_a\}_{a\in\FAG}).
\]

For \(i\in I_s\) and \(a\in\FAG\), write
\([i]^s_a = \{\,j\in I_s\mid j\equiv^s_a i\,\}.\)

Define an action frame
\[
\FGCGF^{\mathbb I}
=
(\FST,\FAC^{\mathbb I},
\{\Fav_\FCC\mid \FCC\subseteq\FAG\},
\{\Fout_\FCC\mid \FCC\subseteq\FAG\})
\]
as follows.

\begin{itemize}

\item
Choose a symbol \(*\) distinct from all triples below, and define the global
action set by
\[
\FAC^{\mathbb I}
=
\{*\}
\cup
\{\,(s,a,[i]^s_a)
\mid
s\in\FST,
\ a\in\FAG,
\ i\in I_s\,\}.
\]
The dummy action \(*\) ensures that the global action set is nonempty.

\item
For every \(s\in\FST\) and every \(i\in I_s\), define the full joint action
\(\sigma^{s,i}_\FAG\in\FJA_\FAG\)
by setting, for every \(a\in\FAG\),
\(\sigma^{s,i}_\FAG(a)=(s,a,[i]^s_a).\)
We call \(\sigma^{s,i}_\FAG\) the \Fdefs{distinguished full joint action}
generated by \(i\) at \(s\).

Note that not every
element of \(\FJA_\FAG\) is distinguished: a full joint action may use the dummy
action \(*\), or may use actions tagged by states other than \(s\).

Also note that different indices $i$ and $j$ induce the same distinguished full joint action at $s$ if for all $a$, $i \equiv^s_a j$.

\item
Define the grand-coalition outcome function by
\begin{equation}
\label{eq:02-generated-from-abstract-grand-outcome}
\Fout_\FAG(s,\sigma_\FAG)
=
\begin{cases}
 o_s(i),
 &\text{if }\sigma_\FAG=\sigma^{s,i}_\FAG
 \text{ for some }i\in I_s,\\[1mm]
 \emptyset,
 &\text{otherwise.}
\end{cases}
\end{equation}
This is well-defined. Indeed, suppose that
\(\sigma^{s,i}_\FAG=\sigma^{s,j}_\FAG\). Then
\([i]^s_a=[j]^s_a\) for every \(a\in\FAG\). Hence
\(i\equiv^s_a j\) for every \(a\in\FAG\), so
\(i\equiv^s_\FAG j\). By grand compatibility, \(o_s(i)=o_s(j)\).

\item
For every proper coalition \(\FCC\subsetneq\FAG\), define
\begin{equation}
\label{eq:03-generated-from-abstract-lower-outcome}
\Fout_\FCC(s,\sigma_\FCC)
=
\bigcup
\{\,\Fout_\FAG(s,\sigma_\FAG)
\mid
\sigma_\FAG\in\FJA_\FAG
\text{ and }
\sigma_\FAG|_\FCC=\sigma_\FCC\,\}.
\end{equation}

\item
Finally, for every \(\FCC\subseteq\FAG\), define availability by
\begin{equation}
\label{eq:04-generated-from-abstract-availability}
\Fav_\FCC(s)
=
\{\,\sigma_\FCC\in\FJA_\FCC
\mid
\Fout_\FCC(s,\sigma_\FCC)\neq\emptyset\,\}.
\end{equation}

\end{itemize}

\end{definition}

By construction, \(\FGCGF^{\mathbb I}\) satisfies the GCI-condition: this is
immediate from \eqref{eq:03-generated-from-abstract-lower-outcome} for proper
coalitions and trivial for the grand coalition. It also satisfies the
ODA-condition by \eqref{eq:04-generated-from-abstract-availability}. Hence
\(\FGCGF^{\mathbb I}\) is a general concurrent game frame.

The next lemma records two consequences of the way distinguished full joint
actions are generated.

%%%%%%%%%%%%%%%
%%%%%%%%%%%%%%%
\begin{lemma}[Local components and realization of presented powers]
\label{lem:03-local-components-and-realization-of-presented-powers}

Let \(\FGCGF^{\mathbb I}\) be the game frame generated by an abstract actual
presentation \(\mathbb I\).

\begin{itemize}

\item 

For every \(s\in\FST\), all
\(i,j\in I_s\), and every \(\FCC\subseteq\FAG\),
\begin{equation}
\label{eq:05-restriction-equivalence-abstract}
\sigma^{s,i}_\FAG|_\FCC
=
\sigma^{s,j}_\FAG|_\FCC
\Longleftrightarrow
i\equiv^s_\FCC j.
\end{equation}

\item For every \(s\in\FST\), every \(i\in I_s\), and every
\(\FCC\subseteq\FAG\),
\begin{equation}
\label{eq:06-local-realization-abstract}
\Fout_\FCC(s,\sigma^{s,i}_\FAG|_\FCC)
=
\rho^s_\FCC(i).
\end{equation}

\end{itemize}

\end{lemma}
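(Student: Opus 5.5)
The first item comes straight from the definition of distinguished full joint actions. The second is the real content. For it I will unfold $\Fout_\FCC$ via \eqref{eq:03-generated-from-abstract-lower-outcome}, keep only the distinguished full joint actions at $s$, and then apply the first item.

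\emph{First item.} Fix $s$, $i,j\in I_s$ and $\FCC$. By definition, $\sigma^{s,i}_\FAG|_\FCC=\sigma^{s,j}_\FAG|_\FCC$ iff $(s,a,[i]^s_a)=(s,a,[j]^s_a)$ for every $a\in\FCC$. Since the actions are triples, this holds iff $[i]^s_a=[j]^s_a$ for every $a\in\FCC$. Because $\equiv^s_a$ is an equivalence relation, that is equivalent to $i\equiv^s_a j$ for every $a\in\FCC$, which is exactly $i\equiv^s_\FCC j$. For $\FCC=\emptyset$ both sides hold trivially, since both restrictions are the empty function and $\equiv^s_\emptyset$ is universal.

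\emph{Second item, case $\FCC=\FAG$.} By \eqref{eq:02-generated-from-abstract-grand-outcome}, $\Fout_\FAG(s,\sigma^{s,i}_\FAG)=o_s(i)$. By Lemma~\ref{lem:01-basic-properties-of-presented-powers}(1), $o_s(i)=\rho^s_\FAG(i)$.

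\emph{Second item, case $\FCC\subsetneq\FAG$.} By \eqref{eq:03-generated-from-abstract-lower-outcome}, $\Fout_\FCC(s,\sigma^{s,i}_\FAG|_\FCC)$ is the union of $\Fout_\FAG(s,\sigma_\FAG)$ over all $\sigma_\FAG\in\FJA_\FAG$ with $\sigma_\FAG|_\FCC=\sigma^{s,i}_\FAG|_\FCC$. By \eqref{eq:02-generated-from-abstract-grand-outcome}, every non-distinguished $\sigma_\FAG$ at $s$ contributes $\emptyset$. So the union reduces to $\bigcup\{o_s(j)\mid j\in I_s,\ \sigma^{s,j}_\FAG|_\FCC=\sigma^{s,i}_\FAG|_\FCC\}$, where I use that each distinguished $\sigma_\FAG$ equals $\sigma^{s,j}_\FAG$ for some $j$ and then has outcome $o_s(j)$. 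By the first item, the index condition becomes $i\equiv^s_\FCC j$, so the expression is exactly $\rho^s_\FCC(i)$.

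The main obstacle is the bookkeeping in this last case. First, the \emph{same} state $s$ is used both as the evaluation point and as the tag in the actions, so "distinguished at $s$" is the relevant notion. Actions tagged with other states $s'$, or containing $*$, fall into the "otherwise" clause of \eqref{eq:02-generated-from-abstract-grand-outcome} and contribute nothing. Second, several indices may yield the same distinguished action; this does no harm, because well-definedness (grand compatibility) ensures they carry the same $o_s$ value, and the union absorbs repetitions.
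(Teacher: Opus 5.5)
Your proof is correct and follows the paper's own argument essentially step for step. The first item is the chain of equivalences on the triples $(s,a,[i]^s_a)$. For the second item, the case $\FCC=\FAG$ is handled via Lemma~\ref{lem:01-basic-properties-of-presented-powers}, and the proper-coalition case unfolds \eqref{eq:03-generated-from-abstract-lower-outcome}, discards non-distinguished actions by \eqref{eq:02-generated-from-abstract-grand-outcome}, and then applies the first item.
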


%%%%%%%%%%%%%%%
%%%%%%%%%%%%%%%
\begin{proof}

We first prove \eqref{eq:05-restriction-equivalence-abstract}.
Fix \(s\in\FST\), \(i,j\in I_s\), and \(\FCC\subseteq\FAG\). Since
\(\sigma^{s,i}_\FAG|_\FCC\) and \(\sigma^{s,j}_\FAG|_\FCC\) have the same
domain \(\FCC\), equality of these restrictions is pointwise equality on
\(\FCC\). Hence
\[
\begin{aligned}
\sigma^{s,i}_\FAG|_\FCC
=
\sigma^{s,j}_\FAG|_\FCC
&\Longleftrightarrow
\sigma^{s,i}_\FAG(a)=\sigma^{s,j}_\FAG(a)
\text{ for every }a\in\FCC                                      \\
&\Longleftrightarrow
(s,a,[i]^s_a)=(s,a,[j]^s_a)
\text{ for every }a\in\FCC                                      \\
&\Longleftrightarrow
[i]^s_a=[j]^s_a
\text{ for every }a\in\FCC                                      \\
&\Longleftrightarrow
i\equiv^s_a j
\text{ for every }a\in\FCC                                      \\
&\Longleftrightarrow
i\equiv^s_\FCC j.
\end{aligned}
\]
Here the fourth equivalence uses the fact that \(\equiv^s_a\) is an
equivalence relation, so two \(\equiv^s_a\)-classes are equal exactly when
their representatives are \(\equiv^s_a\)-equivalent. This proves
\eqref{eq:05-restriction-equivalence-abstract}.

We now prove \eqref{eq:06-local-realization-abstract}. If \(\FCC=\FAG\), then
\(\Fout_\FAG(s,\sigma^{s,i}_\FAG) = o_s(i) = \rho^s_\FAG(i),\)
where the last equality follows from
Lemma~\ref{lem:01-basic-properties-of-presented-powers}.

Suppose now that \(\FCC\subsetneq\FAG\). By
\eqref{eq:03-generated-from-abstract-lower-outcome},
\eqref{eq:02-generated-from-abstract-grand-outcome}, and
\eqref{eq:05-restriction-equivalence-abstract}, we have
\[
\begin{aligned}
\Fout_\FCC(s,\sigma^{s,i}_\FAG|_\FCC)
&=
\bigcup
\{\,\Fout_\FAG(s,\sigma_\FAG)
\mid
\sigma_\FAG\in\FJA_\FAG
\text{ and }
\sigma_\FAG|_\FCC=\sigma^{s,i}_\FAG|_\FCC\,\} \\
&=
\bigcup
\{\,o_s(j)
\mid
j\in I_s
\text{ and }
\sigma^{s,j}_\FAG|_\FCC=\sigma^{s,i}_\FAG|_\FCC\,\} \\
&=
\bigcup
\{\,o_s(j)
\mid
j\in I_s
\text{ and }
i\equiv^s_\FCC j\,\} \\
&=
\rho^s_\FCC(i).
\end{aligned}
\]
This proves \eqref{eq:06-local-realization-abstract}.

\end{proof}

%%%%%%%%%%%%%%%
%%%%%%%%%%%%%%%
\begin{lemma}[Availability in generated game frames]
\label{lem:04-availability-in-generated-abstract-frames}

Let \(\FACNF\) be an \(\PAC\)-abstract-representative actual neighborhood frame,
let \(\mathbb I\) be an abstract actual presentation of \(\FACNF\), and let
\(\FGCGF^{\mathbb I}\) be the generated general concurrent game frame. Then,
for every \(s\in\FST\), every \(\FCC\subseteq\FAG\), and every
\(\sigma_\FCC\in\FJA_\FCC\),
\[
\sigma_\FCC\in\Fav_\FCC(s)
\Longleftrightarrow
\text{there exists }i\in I_s\text{ such that }
\sigma_\FCC=\sigma^{s,i}_\FAG|_\FCC.
\]

\end{lemma}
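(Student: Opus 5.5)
The plan is to prove both directions directly from the definitions of the generated frame (Definition~\ref{def:19-game-frame-generated-by-an-abstract-actual-presentation}), using the realization identity \eqref{eq:06-local-realization-abstract} of Lemma~\ref{lem:03-local-components-and-realization-of-presented-powers} together with liveness. Throughout, fix \(s\in\FST\), \(\FCC\subseteq\FAG\) and \(\sigma_\FCC\in\FJA_\FCC\). By \eqref{eq:04-generated-from-abstract-availability}, availability of \(\sigma_\FCC\) means exactly \(\Fout_\FCC(s,\sigma_\FCC)\neq\emptyset\), so both directions reduce to statements about this outcome set.

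\emph{Right-to-left.} Suppose \(\sigma_\FCC=\sigma^{s,i}_\FAG|_\FCC\) for some \(i\in I_s\). By \eqref{eq:06-local-realization-abstract}, \(\Fout_\FCC(s,\sigma_\FCC)=\rho^s_\FCC(i)\). By coverage, \(\rho^s_\FCC(i)\in\Facnei_\FCC(s)\). Liveness of the \(\PAC\)-abstract-representative frame then gives \(\rho^s_\FCC(i)\neq\emptyset\). Hence \(\sigma_\FCC\in\Fav_\FCC(s)\). This is the only place where liveness is used, and without it the equivalence would fail: an index with \(o_s(i)=\emptyset\) would generate an unavailable restriction.

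\emph{Left-to-right.} Suppose \(\Fout_\FCC(s,\sigma_\FCC)\neq\emptyset\). We split into cases on the coalition.

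\begin{itemize}
\item If \(\FCC=\FAG\), then \eqref{eq:02-generated-from-abstract-grand-outcome} assigns \(\emptyset\) to every full joint action that is not distinguished. So \(\sigma_\FAG=\sigma^{s,i}_\FAG\) for some \(i\in I_s\), and since the restriction to \(\FAG\) is the identity, we are done.
\item If \(\FCC\subsetneq\FAG\), then by \eqref{eq:03-generated-from-abstract-lower-outcome} the nonempty union contains a nonempty term \(\Fout_\FAG(s,\sigma_\FAG)\) with \(\sigma_\FAG|_\FCC=\sigma_\FCC\). By the previous case, \(\sigma_\FAG=\sigma^{s,i}_\FAG\) for some \(i\in I_s\). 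Hence \(\sigma_\FCC=\sigma^{s,i}_\FAG|_\FCC\).
\end{itemize}

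The \(\FCC=\emptyset\) case is covered by the second case, since \(\emptyset\subsetneq\FAG\) as \(\FAG\) is nonempty.

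The argument has no real obstacle. The one point needing care is the interaction with the tags: non-distinguished actions (those using \(*\) or tags from other states) must contribute nothing, and this holds because \eqref{eq:02-generated-from-abstract-grand-outcome} assigns them the empty outcome.
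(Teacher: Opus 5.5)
Your proof is correct and follows the paper's argument essentially step for step. You reduce availability to nonemptiness of the outcome set. For the left-to-right direction you split into the grand-coalition case and the proper-coalition case, where a nonempty term of the defining union must come from a distinguished full action. For the right-to-left direction you combine the realization identity of Lemma~\ref{lem:03-local-components-and-realization-of-presented-powers} with coverage and liveness.
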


%%%%%%%%%%%%%%%
%%%%%%%%%%%%%%%
\begin{proof}

Suppose first that \(\sigma_\FCC\in\Fav_\FCC(s)\). By
\eqref{eq:04-generated-from-abstract-availability},
\(\Fout_\FCC(s,\sigma_\FCC)\neq\emptyset.\)
If \(\FCC=\FAG\), then
\eqref{eq:02-generated-from-abstract-grand-outcome} implies that
\(\sigma_\FAG=\sigma^{s,i}_\FAG\) for some \(i\in I_s\).

Now suppose that \(\FCC\subsetneq\FAG\). By
\eqref{eq:03-generated-from-abstract-lower-outcome}, the nonemptiness of
\(\Fout_\FCC(s,\sigma_\FCC)\) implies that there is a full joint action
\(\tau_\FAG\in\FJA_\FAG\) such that
\(\tau_\FAG|_\FCC=\sigma_\FCC \text{ and } \Fout_\FAG(s,\tau_\FAG)\neq\emptyset.\)
By \eqref{eq:02-generated-from-abstract-grand-outcome}, such a \(\tau_\FAG\)
must be of the form \(\sigma^{s,i}_\FAG\) for some \(i\in I_s\). Hence
\(\sigma_\FCC=\sigma^{s,i}_\FAG|_\FCC\).

Conversely, suppose that
\(\sigma_\FCC=\sigma^{s,i}_\FAG|_\FCC\)
for some \(i\in I_s\). By
Lemma~\ref{lem:03-local-components-and-realization-of-presented-powers},
\(\Fout_\FCC(s,\sigma_\FCC) = \Fout_\FCC(s,\sigma^{s,i}_\FAG|_\FCC) = \rho^s_\FCC(i).\)
By coverage of the abstract presentation,
\(\rho^s_\FCC(i)\in\Facnei_\FCC(s)\). Since \(\FACNF\) satisfies liveness,
\(\rho^s_\FCC(i)\neq\emptyset\). Therefore
\(\Fout_\FCC(s,\sigma_\FCC)\neq\emptyset\), and so
\(\sigma_\FCC\in\Fav_\FCC(s)\) by
\eqref{eq:04-generated-from-abstract-availability}.

\end{proof}

%%%%%%%%%%%%%%%
%%%%%%%%%%%%%%%
\begin{lemma}[Exact recovery from abstract presentations]
\label{lem:05-exact-recovery-from-abstract-presentations}

Let \(\FACNF\) be an \(\PAC\)-abstract-representative actual neighborhood frame,
let \(\mathbb I\) be an abstract actual presentation of \(\FACNF\), and let
\(\FGCGF^{\mathbb I}\) be the generated general concurrent game frame. Let
\(\FACEF_\FCC\) be the actual effectivity function of \(\FCC\) in
\(\FGCGF^{\mathbb I}\). Then, for every \(s\in\FST\) and every
\(\FCC\subseteq\FAG\),
\(\FACEF_\FCC(s)=\Facnei_\FCC(s).\)

\end{lemma}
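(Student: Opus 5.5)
The plan is a direct double inclusion at a fixed state \(s\in\FST\) and coalition \(\FCC\subseteq\FAG\). We combine Lemma~\ref{lem:04-availability-in-generated-abstract-frames}, which identifies the available \(\FCC\)-actions with restrictions of distinguished full joint actions, and Lemma~\ref{lem:03-local-components-and-realization-of-presented-powers}, which computes their outcomes as presented powers. Coverage of \(\mathbb I_s\) then does the rest. By Definition~\ref{def:05-alpha-and-actual-effectivity-functions-of-action-frames},
\(\FACEF_\FCC(s)=\{\Fout_\FCC(s,\sigma_\FCC)\mid \sigma_\FCC\in\Fav_\FCC(s)\}\). Coverage gives \(\Facnei_\FCC(s)=\{\rho^s_\FCC(i)\mid i\in I_s\}\). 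It therefore suffices to show that the two indexed families have the same members.

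For \(\FACEF_\FCC(s)\subseteq\Facnei_\FCC(s)\), take \(\sigma_\FCC\in\Fav_\FCC(s)\). Lemma~\ref{lem:04-availability-in-generated-abstract-frames} yields some \(i\in I_s\) with \(\sigma_\FCC=\sigma^{s,i}_\FAG|_\FCC\). Then \eqref{eq:06-local-realization-abstract} gives \(\Fout_\FCC(s,\sigma_\FCC)=\rho^s_\FCC(i)\), and this set lies in \(\Facnei_\FCC(s)\) by coverage.

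For the converse inclusion, take \(X\in\Facnei_\FCC(s)\). Coverage gives \(i\in I_s\) with \(X=\rho^s_\FCC(i)\). Put \(\sigma_\FCC=\sigma^{s,i}_\FAG|_\FCC\). The reverse direction of Lemma~\ref{lem:04-availability-in-generated-abstract-frames} gives \(\sigma_\FCC\in\Fav_\FCC(s)\); this step is where liveness of \(\FACNF\) is used. By \eqref{eq:06-local-realization-abstract}, \(\Fout_\FCC(s,\sigma_\FCC)=\rho^s_\FCC(i)=X\), so \(X\in\FACEF_\FCC(s)\).

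There is no real obstacle, since the substantive work is already done in the two preceding lemmas. The only points to check are the edge cases. The case \(\FCC=\FAG\) is covered because \eqref{eq:06-local-realization-abstract} was proved for all coalitions, including \(\FAG\). The case \(I_s=\emptyset\) is also fine: both sides are then empty, since coverage makes \(\Facnei_\FCC(s)\) empty and Lemma~\ref{lem:04-availability-in-generated-abstract-frames} makes \(\Fav_\FCC(s)\) empty. The empty coalition needs no separate treatment, because its unique joint action \(\emptyset\) is the restriction of every \(\sigma^{s,i}_\FAG\).
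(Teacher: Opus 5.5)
Your proposal is correct and matches the paper's proof essentially step for step. Both arguments prove the double inclusion by combining Lemma~\ref{lem:04-availability-in-generated-abstract-frames}, the realization identity \eqref{eq:06-local-realization-abstract}, and coverage, with liveness entering only through the reverse direction of the availability lemma; your remarks on the edge cases are accurate extras.
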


%%%%%%%%%%%%%%%
%%%%%%%%%%%%%%%
\begin{proof}

Fix \(s\in\FST\) and \(\FCC\subseteq\FAG\).

\begin{itemize}
    \item

\medskip
\noindent
\emph{First inclusion.}
Let \(X\in\FACEF_\FCC(s)\). Then there exists
\(\sigma_\FCC\in\Fav_\FCC(s)\) such that
\(X=\Fout_\FCC(s,\sigma_\FCC).\)
By Lemma~\ref{lem:04-availability-in-generated-abstract-frames}, choose
\(i\in I_s\) such that
\(\sigma_\FCC=\sigma^{s,i}_\FAG|_\FCC.\)
By Lemma~\ref{lem:03-local-components-and-realization-of-presented-powers},
\(X = \Fout_\FCC(s,\sigma^{s,i}_\FAG|_\FCC) = \rho^s_\FCC(i).\)
By coverage, \(\rho^s_\FCC(i)\in\Facnei_\FCC(s)\). Hence
\(\FACEF_\FCC(s)\subseteq\Facnei_\FCC(s).\)

\item 
\emph{Second inclusion.}
Let \(X\in\Facnei_\FCC(s)\). By coverage, choose \(i\in I_s\) such that
\(X=\rho^s_\FCC(i).\)
By Lemma~\ref{lem:04-availability-in-generated-abstract-frames},
\(\sigma^{s,i}_\FAG|_\FCC\in\Fav_\FCC(s).\)
By Lemma~\ref{lem:03-local-components-and-realization-of-presented-powers},
\(\Fout_\FCC(s,\sigma^{s,i}_\FAG|_\FCC) = \rho^s_\FCC(i) = X.\)
Thus \(X\in\FACEF_\FCC(s)\), and therefore
\(\Facnei_\FCC(s)\subseteq\FACEF_\FCC(s).\)

\end{itemize}

\end{proof}

%%%%%%%%%%%%%%%
%%%%%%%%%%%%%%%
\begin{lemma}[Preservation of seriality, independence, and determinism]
\label{lem:06-preservation-of-seriality-independence-and-determinism}

Let \(\FACNF\) be an \(\PAC\)-abstract-representative actual neighborhood frame,
let \(\mathbb I\) be an abstract actual presentation of \(\FACNF\), and let
\(\FGCGF^{\mathbb I}\) be the generated general concurrent game frame. Then the
following hold.

\begin{enumerate}

\item If \(\FACNF\) is \(\PAC\)-abstract-serial, then
\(\FGCGF^{\mathbb I}\) is serial.

\item If \(\mathbb I\) is amalgamating, then \(\FGCGF^{\mathbb I}\) is
independent.

\item If \(\FACNF\) is \(\PAC\)-abstract-deterministic, then
\(\FGCGF^{\mathbb I}\) is deterministic.

\end{enumerate}

\end{lemma}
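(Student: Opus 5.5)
We treat the three items separately, relying throughout on Lemmas~\ref{lem:03-local-components-and-realization-of-presented-powers}, \ref{lem:04-availability-in-generated-abstract-frames} and~\ref{lem:05-exact-recovery-from-abstract-presentations}. Seriality and determinism transfer directly through exact recovery. Independence is the only item that uses the index structure, via amalgamation.

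\emph{Seriality.} Assume \(\FACNF\) is \(\PAC\)-abstract-serial, and fix \(s\in\FST\) and \(\FCC\subseteq\FAG\). Then \(\Facnei_\FCC(s)\neq\emptyset\). By Lemma~\ref{lem:05-exact-recovery-from-abstract-presentations}, \(\FACEF_\FCC(s)=\Facnei_\FCC(s)\neq\emptyset\). Since \(\FACEF_\FCC(s)\) is indexed by \(\Fav_\FCC(s)\), it follows that \(\Fav_\FCC(s)\neq\emptyset\).

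An alternative route goes through coverage: \(\Facnei_\FCC(s)\neq\emptyset\) forces \(I_s\neq\emptyset\). Then, for any \(i\in I_s\), Lemma~\ref{lem:04-availability-in-generated-abstract-frames} gives \(\sigma^{s,i}_\FAG|_\FCC\in\Fav_\FCC(s)\).

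\emph{Determinism.} Let \(\sigma_\FAG\in\Fav_\FAG(s)\). By Lemma~\ref{lem:05-exact-recovery-from-abstract-presentations}, \(\Fout_\FAG(s,\sigma_\FAG)\in\FACEF_\FAG(s)=\Facnei_\FAG(s)\). By \(\PAC\)-abstract-determinism, this set is therefore a singleton.

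\emph{Independence.} This is the step needing the most care, though it remains short. Fix \(s\), disjoint coalitions \(\FCC,\FDD\), and available actions \(\sigma_\FCC\in\Fav_\FCC(s)\) and \(\sigma_\FDD\in\Fav_\FDD(s)\).
\begin{enumerate}
\item By Lemma~\ref{lem:04-availability-in-generated-abstract-frames}, pick \(i,j\in I_s\) with \(\sigma_\FCC=\sigma^{s,i}_\FAG|_\FCC\) and \(\sigma_\FDD=\sigma^{s,j}_\FAG|_\FDD\).
\item Amalgamation of \(\mathbb I_s\) yields \(k\in I_s\) with \(i\equiv^s_\FCC k\) and \(j\equiv^s_\FDD k\).
\item By \eqref{eq:05-restriction-equivalence-abstract}, \(\sigma^{s,k}_\FAG|_\FCC=\sigma_\FCC\) and \(\sigma^{s,k}_\FAG|_\FDD=\sigma_\FDD\).
\item Since \(\FCC\cap\FDD=\emptyset\), these combine to \(\sigma^{s,k}_\FAG|_{\FCC\cup\FDD}=\sigma_\FCC\cup\sigma_\FDD\).
\item Applying Lemma~\ref{lem:04-availability-in-generated-abstract-frames} once more, now in the right-to-left direction, gives \(\sigma_\FCC\cup\sigma_\FDD\in\Fav_{\FCC\cup\FDD}(s)\).
\end{enumerate}

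The only subtle points are these. First, Lemma~\ref{lem:04-availability-in-generated-abstract-frames} presupposes liveness, which holds because \(\FACNF\) is \(\PAC\)-abstract-representative. Second, step~4 uses that a restriction to a disjoint union is determined by its restrictions to the two parts.
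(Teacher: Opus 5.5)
Your proposal is correct and follows the paper's proof essentially step for step: seriality and determinism are transferred through exact recovery, and independence goes through the availability lemma, amalgamation, the restriction–equivalence correspondence, and disjointness, exactly as in the paper. Your alternative coverage-based route to seriality is also valid, though the paper uses only the exact-recovery argument.
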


%%%%%%%%%%%%%%%
%%%%%%%%%%%%%%%
\begin{proof}

Let \(\FACEF_\FCC\) be the actual effectivity function of \(\FCC\) in
\(\FGCGF^{\mathbb I}\). By
Lemma~\ref{lem:05-exact-recovery-from-abstract-presentations},
\(\FACEF_\FCC(s)=\Facnei_\FCC(s)\)
for every \(s\in\FST\) and every \(\FCC\subseteq\FAG\).

\begin{itemize}
    \item 

\emph{Seriality.}
Assume that \(\FACNF\) is \(\PAC\)-abstract-serial. Let \(s\in\FST\) and \(\FCC\subseteq\FAG\). Then
\(\Facnei_\FCC(s)\neq\emptyset\). Exact recovery gives
\(\FACEF_\FCC(s)\neq\emptyset\). Hence there is some
\(\sigma_\FCC\in\Fav_\FCC(s)\). Therefore \(\FGCGF^{\mathbb I}\) is serial.

\item 

\emph{Independence.}
Assume that \(\mathbb I\) is amalgamating. Let \(s\in\FST\), let
\(\FCC,\FDD\subseteq\FAG\) be disjoint, and suppose that
\(\sigma_\FCC\in\Fav_\FCC(s) \text{ and } \tau_\FDD\in\Fav_\FDD(s).\)
By Lemma~\ref{lem:04-availability-in-generated-abstract-frames}, choose
\(i,j\in I_s\) such that
\(\sigma_\FCC=\sigma^{s,i}_\FAG|_\FCC \text{ and } \tau_\FDD=\sigma^{s,j}_\FAG|_\FDD.\)
Since \(\mathbb I\) is amalgamating, there exists \(k\in I_s\) such that
\(i\equiv^s_\FCC k \text{ and } j\equiv^s_\FDD k.\)
By Lemma~\ref{lem:03-local-components-and-realization-of-presented-powers},
\(\sigma^{s,k}_\FAG|_\FCC = \sigma^{s,i}_\FAG|_\FCC = \sigma_\FCC\)
and
\(\sigma^{s,k}_\FAG|_\FDD = \sigma^{s,j}_\FAG|_\FDD = \tau_\FDD.\)
Since \(\FCC\) and \(\FDD\) are disjoint,
\(\sigma_\FCC\cup\tau_\FDD = \sigma^{s,k}_\FAG|_{\FCC\cup\FDD}.\)
By Lemma~\ref{lem:04-availability-in-generated-abstract-frames},
\(\sigma_\FCC\cup\tau_\FDD \in \Fav_{\FCC\cup\FDD}(s).\)
Thus \(\FGCGF^{\mathbb I}\) is independent.

\item 

\emph{Determinism.}
Assume that \(\FACNF\) is \(\PAC\)-abstract-deterministic. Let \(s\in\FST\)
and let \(\sigma_\FAG\in\Fav_\FAG(s)\). Then
\(\Fout_\FAG(s,\sigma_\FAG) \in \FACEF_\FAG(s) = \Facnei_\FAG(s).\)
By \(\PAC\)-abstract-determinism, every member of \(\Facnei_\FAG(s)\) is a
singleton. Hence \(\Fout_\FAG(s,\sigma_\FAG)\) is a singleton. Therefore
\(\FGCGF^{\mathbb I}\) is deterministic.

\end{itemize}

\end{proof}

%%%%%%%%%%%%%%%
%%%%%%%%%%%%%%%
\begin{theorem}[Abstract actual enoughness theorem]
\label{thm:06-abstract-actual-enoughness-theorem}

Let \(\FAG\) be finite and nonempty, and let \(\FXX\in\FES\). Let
\[
\FACNF=(\FST,\{\Facnei_\FCC\mid \FCC\subseteq\FAG\})
\]
be an \(\PAC\)-abstract-representative actual neighborhood \(\FXX\)-frame.
Then there exists a general concurrent game \(\FXX\)-frame
\[
\FGCGF=(\FST,\FAC,\{\Fav_\FCC\mid \FCC\subseteq\FAG\},
\{\Fout_\FCC\mid \FCC\subseteq\FAG\})
\]
such that \(\FGCGF\) is \(\PAC\)-representable by \(\FACNF\).

\end{theorem}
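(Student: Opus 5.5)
The plan is to assemble the theorem directly from the lemmas of this subsection, using the generated frame \(\FGCGF^{\mathbb I}\) of Definition~\ref{def:19-game-frame-generated-by-an-abstract-actual-presentation} as the witness.

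First, I choose the abstract actual presentation \(\mathbb I\) of \(\FACNF\). If \(\mathtt I\) does not occur in \(\FXX\), any presentation guaranteed by abstract presentability (Definition~\ref{def:16-pac-abstract-representative-actual-neighborhood-frames}) will do. If \(\mathtt I\) occurs in \(\FXX\), I instead take an amalgamating presentation, which exists by \(\PAC\)-abstract-independence. This choice is the only point requiring care: independence is existential over presentations, so the presentation must be fixed before the frame is built. Since an amalgamating presentation is in particular a presentation, nothing else changes.

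Next, I form \(\FGCGF=\FGCGF^{\mathbb I}\). As remarked after Definition~\ref{def:19-game-frame-generated-by-an-abstract-actual-presentation}, it satisfies the GCI- and ODA-conditions by construction, so it is a general concurrent game frame. The action set is nonempty because of the dummy action \(*\). By Lemma~\ref{lem:05-exact-recovery-from-abstract-presentations}, which uses liveness of \(\FACNF\) through Lemma~\ref{lem:04-availability-in-generated-abstract-frames}, we have \(\FACEF_\FCC(s)=\Facnei_\FCC(s)\) for all \(s\) and \(\FCC\). By Definition~\ref{def:06-pal-representability-and-pac-representability-of-action-frames}, this is precisely \(\PAC\)-representability of \(\FGCGF\) by \(\FACNF\).

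Finally, I check that \(\FGCGF\) is an \(\FXX\)-frame by applying Lemma~\ref{lem:06-preservation-of-seriality-independence-and-determinism} clause by clause:
\begin{itemize}
\item If \(\mathtt S\) occurs in \(\FXX\), \(\PAC\)-abstract-seriality gives seriality.
\item If \(\mathtt I\) occurs in \(\FXX\), the amalgamating choice of \(\mathbb I\) gives independence.
\item If \(\mathtt D\) occurs in \(\FXX\), \(\PAC\)-abstract-determinism gives determinism.
\end{itemize}

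I expect no real obstacle, since all the technical work is already done in Lemmas~\ref{lem:03-local-components-and-realization-of-presented-powers}--\ref{lem:06-preservation-of-seriality-independence-and-determinism}. The only thing to keep straight is the order of choices described in the first step.
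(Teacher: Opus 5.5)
Your proposal is correct and follows the paper's proof essentially step for step. You choose an amalgamating presentation exactly when \(\mathtt I\) occurs in \(\FXX\), build \(\FGCGF^{\mathbb I}\), get exact recovery from Lemma~\ref{lem:05-exact-recovery-from-abstract-presentations}, and transfer the \(\FXX\)-properties via Lemma~\ref{lem:06-preservation-of-seriality-independence-and-determinism}.
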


%%%%%%%%%%%%%%%
%%%%%%%%%%%%%%%
\begin{proof}
~

\begin{itemize}

\item 

\emph{Choosing a presentation.}
If
\(\FXX\in\{\mathtt{I},\mathtt{SI},\mathtt{ID},\mathtt{SID}\},\)
choose an amalgamating abstract actual presentation \(\mathbb I\) of
\(\FACNF\), which exists because \(\FACNF\) is \(\PAC\)-abstract-independent.
If
\(\FXX\in\{\epsilon,\mathtt{S},\mathtt{D},\mathtt{SD}\},\)
choose any abstract actual presentation \(\mathbb I\) of \(\FACNF\), which
exists by \(\PAC\)-abstract-representativeness.

\item
\emph{Exact representation.}
Let \(\FGCGF^{\mathbb I}\) be the game frame generated from \(\mathbb I\) as in
Definition~\ref{def:19-game-frame-generated-by-an-abstract-actual-presentation}. As
observed after that definition, \(\FGCGF^{\mathbb I}\) is a general concurrent
game frame.

Let \(\FACEF_\FCC\) be the actual effectivity function of \(\FCC\) in
\(\FGCGF^{\mathbb I}\). By
Lemma~\ref{lem:05-exact-recovery-from-abstract-presentations}, for every
\(s\in\FST\) and every \(\FCC\subseteq\FAG\),
\(\FACEF_\FCC(s)=\Facnei_\FCC(s).\)
Thus \(\FGCGF^{\mathbb I}\) is \(\PAC\)-representable by \(\FACNF\).

\item
\emph{Preservation of the properties encoded by \(\FXX\).} This follows from Lemma~\ref{lem:06-preservation-of-seriality-independence-and-determinism}.

\end{itemize}

Taking \(\FGCGF=\FGCGF^{\mathbb I}\) completes
the proof.

\end{proof}

%%%%%%%%%%%%%%%
%%%%%%%%%%%%%%%
\subsubsection*{On finite actions}

The preceding result should not be read as a finite-action representation
result. It makes no claim that the representing action set \(\FAC\) can always
be chosen finite, even when the state set \(\FST\) is finite.

This is already visible in the abstract enoughness construction. There, the
action set is built from the equivalence classes \([i]^s_a\) of indices in an
abstract actual presentation:
\[
\FAC^{\mathbb I}
=
\{*\}
\cup
\{\,(s,a,[i]^s_a)
\mid
s\in\FST,\ a\in\FAG,\ i\in I_s\,\}.
\]
Thus \(\FAC^{\mathbb I}\) is finite only under additional finiteness
assumptions, for example if \(\FST\) is finite and, for each state \(s\) and
agent \(a\), only finitely many equivalence classes \([i]^s_a\) occur.

%%%%%%%%%%%%%%%
%%%%%%%%%%%%%%%
\section{Coherent profile covers}
\label{sec:06-coherent-profile-covers}

The abstract notion of \(\PAC\)-representativeness introduced above is
formulated in terms of abstract actual presentations. Such presentations are
useful for representation: their indices play the role of abstract full-joint
witnesses, and the equivalence relations on indices record agreement on agent
components. However, this indexed description is not an intrinsic condition on
the actual neighborhood frame itself. It introduces auxiliary objects that are
not part of the given family of actual powers.

The aim of this section is to replace these indexed presentations by an
index-free neighborhood-level formulation, expressed through coherent profile
covers. A power profile records, at a fixed state, a simultaneous choice of
actual power for every coalition. A coherent profile cover is then a family of
such profiles that is large enough to cover all actual powers and coherent
enough to recover coalition powers from the grand-coalition components of
compatible profiles.

In the next section, coherent coverability, formulated through coherent profile
covers, will be used to define the finite-agent notion of
\(\PAC\)-representativeness. Thus the present section serves as a bridge from
the abstract, witness-indexed formulation to a neighborhood-intrinsic
representation condition.

The section proceeds as follows. We first define power profiles and profile
agreement. We then introduce the support operator and coherent profile sets.
Next, we define coherent profile covers and coherent coverability. Next, we
introduce the coherent core, which gives a canonical characterization of
coherent coverability. Finally, we show that coherent profile covers and abstract actual presentations can be translated to each other.

%%%%%%%%%%%%%%%
%%%%%%%%%%%%%%%
\subsection{Power profiles and profile agreement}
\label{subsec:06-01-power-profiles-and-profile-agreement}

Power profiles are the basic index-free objects used in this section. At a
fixed state, a power profile selects one actual power for each coalition, in a
way that respects coalition inclusion.

%%%%%%%%%%%%%%%
%%%%%%%%%%%%%%%
\begin{definition}[Power profiles]
\label{def:20-power-profiles}

Let
\[
\FACNF = (\FST, \{\Facnei_\FCC \mid \FCC \subseteq \FAG\})
\]
be an actual neighborhood frame, where \(\FAG\) is finite and nonempty, and fix
a state \(s \in \FST\).

A \Fdefs{power profile} at \(s\) is a family
\[
\gamma = (\gamma_\FCC)_{\FCC \subseteq \FAG}
\]
satisfying the following conditions.

\begin{enumerate}

\item \Fdefs{Membership.} For every coalition \(\FCC \subseteq \FAG\),
\[
\gamma_\FCC \in \Facnei_\FCC(s).
\]

\item \Fdefs{Inclusion monotonicity.} For all coalitions
\(\FCC \subseteq \FDD \subseteq \FAG\),
\[
\gamma_\FDD \subseteq \gamma_\FCC.
\]

\end{enumerate}

Let \(\Gamma_s\) denote the set of all power profiles at \(s\). No existence is
presupposed: depending on the actual neighborhoods at \(s\), the set
\(\Gamma_s\) may be empty.

\end{definition}

Intuitively, a power profile can be understood as an action profile.

In what follows, we use $\gamma \upharpoonright_\FCC$ to denote $(\gamma_\FDD)_{\FDD \subseteq \FCC}$, called the \Fdefs{restriction} of $\gamma$ to $\FCC$.

%%%%%%%%%%%%%%%
%%%%%%%%%%%%%%%
\begin{definition}[Agreement of power profiles below a coalition]
\label{def:21-agreement-of-power-profiles-below-a-coalition}

In the setting of Definition~\ref{def:20-power-profiles}, let
\(\gamma,\delta\in\Gamma_s\) and let \(\FCC\subseteq\FAG\). We write
\[
\gamma \equiv^s_\FCC \delta
\]
and say that \(\gamma\) and \(\delta\) \Fdefs{agree below \(\FCC\)} if
\[
\gamma_\FDD = \delta_\FDD
\text{ for every } \FDD \subseteq \FCC.
\]

\end{definition}

Agreement below \(\FCC\) means agreement on all components indexed by
subcoalitions of \(\FCC\), not merely agreement on the \(\FCC\)-component
itself. This choice mirrors the abstract-index setting: if two abstract full
indices agree on the agents in \(\FCC\), then the powers determined by all
subcoalitions of \(\FCC\) are the same.

We record several basic observations about profile agreement that will be used
below.

\begin{itemize}

\item 

For each coalition \(\FCC\subseteq\FAG\), the relation \(\equiv^s_\FCC\) is an
equivalence relation on \(\Gamma_s\). 

\item 

If
\(\FCC\subseteq\FDD\subseteq\FAG\), then
$
\gamma\equiv^s_\FDD\delta
\Longrightarrow
\gamma\equiv^s_\FCC\delta.
$

\item 
$
\gamma\equiv^s_\FAG\delta
\Longleftrightarrow
\gamma=\delta,
$
while
$
\gamma\equiv^s_\emptyset\delta
\Longleftrightarrow
\gamma_\emptyset=\delta_\emptyset.
$

\end{itemize}

We use the same notation \(\equiv^s_\FCC\) for agreement of power profiles and
for agreement of abstract indices. This is only a notational convention. The
domain will always make clear which relation is meant; when both kinds of
objects occur together, we explicitly distinguish profile agreement from
abstract-index agreement.

%%%%%%%%%%%%%%%
%%%%%%%%%%%%%%%
\subsection{Support and coherence}
\label{subsec:06-02-support-and-coherence}

In this section, we introduce the support operator, which sends each set of power profiles to the set of power profiles supported by it. This operator provides the basis for our definition of coherent profile sets. We then record several elementary facts about the support operator and coherent profile sets that will be used in the sequel.

%%%%%%%%%%%%%%%
%%%%%%%%%%%%%%%
\begin{definition}[Support operator]
\label{def:22-support-operator}

Let
\[
\FACNF=(\FST,\{\Facnei_\FCC\mid \FCC\subseteq\FAG\})
\]
be an actual neighborhood frame, where \(\FAG\) is finite and nonempty. Fix
\(s\in\FST\), and let \(\Gamma_s\) be the set of power profiles at \(s\).

For \(X\subseteq\Gamma_s\), \(\gamma\in\Gamma_s\), and
\(\FCC\subseteq\FAG\), write
\[
\CAG_X^\FCC(\gamma)
=
\{\delta_\FAG\mid
\delta\in X
\text{ and }
\delta\equiv^s_\FCC\gamma\}.
\]

Define the \Fdefs{support operator}
\[
\Phi_s:\mathcal P(\Gamma_s)\to\mathcal P(\Gamma_s)
\]
by
\[
\Phi_s(X)
=
\{\gamma\in\Gamma_s
\mid
\text{for every }\FCC\subseteq\FAG,\ 
\gamma_\FCC=\bigcup \CAG_X^\FCC(\gamma)\}.
\]

\end{definition}

To unpack the definition, fix \(X\), \(\gamma\), and \(\FCC\).

\begin{itemize}

\item

The set
\(\CAG_X^\FCC(\gamma)\) consists of the grand-coalition components
\(\delta_\FAG\) of those profiles \(\delta\in X\) whose restriction below
\(\FCC\) agrees with that of \(\gamma\). Thus
\(\bigcup \CAG_X^\FCC(\gamma)\) is the set of states appearing in the
grand-coalition powers of all such compatible profiles.

\item

The equation
\[
\gamma_\FCC=\bigcup \CAG_X^\FCC(\gamma)
\]
says that the \(\FCC\)-component of \(\gamma\) is exactly recovered from these
compatible grand-coalition powers. When this equation holds, we say that
\(\gamma_\FCC\) is \Fdefs{supported by} \(X\).

\item

\(\gamma\in\Phi_s(X)\) means that, for every \(\FCC\), the component
\(\gamma_\FCC\) is supported by \(X\). In this case, we say that \(\gamma\) is
\Fdefs{supported by \(X\)}.

\end{itemize}

%%%%%%%%%%%%%%%
%%%%%%%%%%%%%%%
\begin{definition}[Coherent profile sets]
\label{def:23-coherent-profile-sets}

Let
\[
\FACNF=(\FST,\{\Facnei_\FCC\mid \FCC\subseteq\FAG\})
\]
be an actual neighborhood frame, where \(\FAG\) is finite and nonempty, and fix
\(s\in\FST\). Let \(\Gamma_s\) be the set of power profiles at \(s\).

A set \(X\subseteq\Gamma_s\) is \Fdefs{coherent} if
\(X\subseteq\Phi_s(X).\)
Equivalently, every profile \(\gamma\) in \(X\) is supported by \(X\): for every
\(\FCC\subseteq\FAG\),
\(\gamma_\FCC = \bigcup \CAG_X^\FCC(\gamma).\)

\end{definition}

Intuitively, a coherent set of power profiles is internally self-supporting.
No profile in such a set carries a coalition component that cannot be recovered
from compatible grand-coalition components of profiles in the same set.

Actually, coherent sets are so-called \emph{post-fixed points} of $\Phi_s$.

The next two lemmas record the structural properties of the support operator.

The first result indicates that under a mild liveness assumption for the grand coalition, coherent
sets are not merely post-fixed points of \(\Phi_s\); they are fixed points.

%%%%%%%%%%%%%%%
%%%%%%%%%%%%%%%
\begin{lemma}[Coherent sets are fixed points under liveness]
\label{lem:07-coherent-sets-are-fixed-points-under-liveness}

Let
\[
\FACNF=(\FST,\{\Facnei_\FCC\mid \FCC\subseteq\FAG\})
\]
be an actual neighborhood frame, where \(\FAG\) is finite and nonempty, and fix
\(s\in\FST\). Assume that the grand-coalition neighborhood is live at \(s\),
that is,
\(\emptyset\notin\Facnei_\FAG(s).\)
If \(X\subseteq\Gamma_s\) is coherent, then
\(X=\Phi_s(X).\)

\end{lemma}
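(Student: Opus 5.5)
The inclusion \(X\subseteq\Phi_s(X)\) is exactly coherence (Definition~\ref{def:23-coherent-profile-sets}), so I only need to prove the reverse inclusion \(\Phi_s(X)\subseteq X\). I will get it by looking solely at the grand-coalition clause in the definition of \(\Phi_s\), and then use liveness of \(\Facnei_\FAG(s)\) to rule out profiles outside \(X\).

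Let \(\gamma\in\Phi_s(X)\). The defining equation of \(\Phi_s\) with \(\FCC=\FAG\) gives
\(\gamma_\FAG=\bigcup\CAG_X^\FAG(\gamma)=\bigcup\{\delta_\FAG\mid \delta\in X,\ \delta\equiv^s_\FAG\gamma\}\).
By the basic observation after Definition~\ref{def:21-agreement-of-power-profiles-below-a-coalition}, \(\delta\equiv^s_\FAG\gamma\) holds iff \(\delta=\gamma\). So the set \(\CAG_X^\FAG(\gamma)\) is \(\{\gamma_\FAG\}\) if \(\gamma\in X\), and \(\emptyset\) otherwise.

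Suppose, for contradiction, that \(\gamma\notin X\). Then \(\gamma_\FAG=\bigcup\emptyset=\emptyset\). Since \(\gamma\) is a power profile, the membership condition gives \(\gamma_\FAG\in\Facnei_\FAG(s)\), so \(\emptyset\in\Facnei_\FAG(s)\), which contradicts the assumed liveness at \(s\). Hence \(\gamma\in X\), so \(\Phi_s(X)\subseteq X\), and together with coherence \(X=\Phi_s(X)\).

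I see no real obstacle. The only point needing care is that agreement below \(\FAG\) collapses to equality of whole profiles, which is what makes the grand-coalition clause alone force membership in \(X\). The other clauses (\(\FCC\neq\FAG\)) play no role in this argument.
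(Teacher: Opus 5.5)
Your proof is correct and follows essentially the same route as the paper's. Both use only the grand-coalition support clause: agreement below \(\FAG\) is equality, so \(\gamma\notin X\) forces \(\gamma_\FAG=\emptyset\), contradicting \(\gamma_\FAG\in\Facnei_\FAG(s)\) and liveness.
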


%%%%%%%%%%%%%%%
%%%%%%%%%%%%%%%
\begin{proof}
By coherence,
\(X\subseteq\Phi_s(X).\)
It remains to prove the reverse inclusion. Let \(\gamma\in\Phi_s(X)\). Since
\(\gamma\in\Gamma_s\), we have
\(\gamma_\FAG\in\Facnei_\FAG(s).\)
By liveness,
\(\gamma_\FAG\neq\emptyset.\)

Suppose, for contradiction, that \(\gamma\notin X\). Agreement below the grand
coalition \(\FAG\) is equality of power profiles. Hence no profile in \(X\)
agrees with \(\gamma\) below \(\FAG\), and therefore
\(\CAG_X^\FAG(\gamma)=\emptyset.\)
Since \(\gamma\in\Phi_s(X)\), the support condition for \(\FAG\) gives
\(\gamma_\FAG = \bigcup \CAG_X^\FAG(\gamma) = \emptyset,\)
contradicting \(\gamma_\FAG\neq\emptyset\). Thus \(\gamma\in X\), and so
\(\Phi_s(X)\subseteq X.\)
Therefore
\(X=\Phi_s(X).\)
\end{proof}

The second is monotonicity.

%%%%%%%%%%%%%%%
%%%%%%%%%%%%%%%
\begin{lemma}[Monotonicity of support]
\label{lem:08-monotonicity-of-support}

Let
\[
\FACNF=(\FST,\{\Facnei_\FCC\mid\FCC\subseteq\FAG\})
\]
be an actual neighborhood frame, where \(\FAG\) is finite and nonempty. Fix
\(s\in\FST\), let \(\Gamma_s\) be the set of power profiles at \(s\), and let
\[
\Phi_s:\mathcal P(\Gamma_s)\to\mathcal P(\Gamma_s)
\]
be the support operator at \(s\).

Then \(\Phi_s\) is monotone. That is, for all
\(X,Y\subseteq\Gamma_s\), if \(X\subseteq Y\), then
\(\Phi_s(X)\subseteq\Phi_s(Y).\)

\end{lemma}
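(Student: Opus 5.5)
The plan is to show that the support condition defining \(\Phi_s\) is, for power profiles, really a one-sided inclusion condition, and that this inclusion is monotone in the profile set. At first sight monotonicity looks doubtful: enlarging \(X\) to \(Y\) enlarges \(\CAG_X^\FCC(\gamma)\) to \(\CAG_Y^\FCC(\gamma)\), so the union \(\bigcup\CAG_Y^\FCC(\gamma)\) could in principle overshoot \(\gamma_\FCC\) and destroy the required equality. The key step rules this out using the inclusion monotonicity built into Definition~\ref{def:20-power-profiles}.

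First I will prove an auxiliary claim. For every \(Z\subseteq\Gamma_s\), every \(\gamma\in\Gamma_s\), and every \(\FCC\subseteq\FAG\), we always have
\[
\bigcup\CAG_Z^\FCC(\gamma)\subseteq\gamma_\FCC .
\]
To see this, take any \(\delta\in Z\) with \(\delta\equiv^s_\FCC\gamma\). Agreement below \(\FCC\) includes the component \(\FCC\) itself, since \(\FCC\subseteq\FCC\), so \(\delta_\FCC=\gamma_\FCC\). Inclusion monotonicity of the profile \(\delta\), applied to \(\FCC\subseteq\FAG\), gives \(\delta_\FAG\subseteq\delta_\FCC=\gamma_\FCC\). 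Taking the union over all such \(\delta\) yields the claim.

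It follows that \(\gamma\in\Phi_s(Z)\) holds iff, for every \(\FCC\subseteq\FAG\),
\[
\gamma_\FCC\subseteq\bigcup\CAG_Z^\FCC(\gamma).
\]
Now let \(X\subseteq Y\) and \(\gamma\in\Phi_s(X)\). Since every \(\delta\in X\) also lies in \(Y\), we have \(\CAG_X^\FCC(\gamma)\subseteq\CAG_Y^\FCC(\gamma)\). Hence
\[
\gamma_\FCC=\bigcup\CAG_X^\FCC(\gamma)\subseteq\bigcup\CAG_Y^\FCC(\gamma).
\]
The auxiliary claim with \(Z=Y\) supplies the reverse inclusion, so \(\gamma_\FCC=\bigcup\CAG_Y^\FCC(\gamma)\) for every \(\FCC\). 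Therefore \(\gamma\in\Phi_s(Y)\).

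The only real obstacle is noticing the auxiliary upper bound; once it is in place, the rest is routine. No liveness assumption is needed, in contrast to Lemma~\ref{lem:07-coherent-sets-are-fixed-points-under-liveness}.
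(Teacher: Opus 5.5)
Your proposal is correct and follows the paper's proof: you use \(\CAG_X^\FCC(\gamma)\subseteq\CAG_Y^\FCC(\gamma)\) for one inclusion, and inclusion monotonicity \(\delta_\FAG\subseteq\delta_\FCC=\gamma_\FCC\) for the reverse. The only difference is presentational: you isolate the upper bound \(\bigcup\CAG_Z^\FCC(\gamma)\subseteq\gamma_\FCC\) as a claim for arbitrary \(Z\), whereas the paper proves it inline for \(Y\).
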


%%%%%%%%%%%%%%%
%%%%%%%%%%%%%%%
\begin{proof}
Assume \(X\subseteq Y\subseteq\Gamma_s\), and let
\(\gamma\in\Phi_s(X)\). We show that \(\gamma\in\Phi_s(Y)\).

Fix \(\FCC\subseteq\FAG\). Since \(X\subseteq Y\), we have
\(\CAG_X^\FCC(\gamma)\subseteq \CAG_Y^\FCC(\gamma).\)
Thus, using \(\gamma\in\Phi_s(X)\),
\(\gamma_\FCC = \bigcup \CAG_X^\FCC(\gamma) \subseteq \bigcup \CAG_Y^\FCC(\gamma).\)

For the converse inclusion, let
\(y\in\bigcup \CAG_Y^\FCC(\gamma)\). Then
\(y\in\delta_\FAG\) for some \(\delta\in Y\) with
\(\delta\equiv^s_\FCC\gamma\). By inclusion monotonicity,
\(\delta_\FAG\subseteq\delta_\FCC,\)
and by agreement below \(\FCC\),
\(\delta_\FCC=\gamma_\FCC.\)
Hence \(y\in\gamma_\FCC\). Therefore
\(\bigcup \CAG_Y^\FCC(\gamma)\subseteq\gamma_\FCC.\)

So
\(\gamma_\FCC=\bigcup \CAG_Y^\FCC(\gamma).\)
Since \(\FCC\subseteq\FAG\) was arbitrary, \(\gamma\in\Phi_s(Y)\). Hence
\(\Phi_s(X)\subseteq\Phi_s(Y).\)
\end{proof}

%%%%%%%%%%%%%%%
%%%%%%%%%%%%%%%
\subsection{Coherent profile covers}
\label{subsec:06-03-coherent-profile-covers}

%%%%%%%%%%%%%%%
%%%%%%%%%%%%%%%
\begin{definition}[Coherent profile covers]
\label{def:24-coherent-profile-covers}

Let
\[
\FACNF=(\FST,\{\Facnei_\FCC \mid \FCC \subseteq \FAG\})
\]
be an actual neighborhood frame, where \(\FAG\) is finite and nonempty, and fix
\(s\in\FST\).

A \Fdefs{coherent profile cover} at \(s\) is a coherent set
\(\Omega_s\subseteq\Gamma_s\) such that, for every coalition
\(\FCC\subseteq\FAG\),
\(\Facnei_\FCC(s) = \{\gamma_\FCC\mid \gamma\in\Omega_s\}.\)

\end{definition}

Intuitively, a coherent profile cover \(\Omega_s\) gives a global organization
of the local actual powers at \(s\). Each profile in \(\Omega_s\) records one
coordinated assignment of powers to all coalitions, while the projection of
\(\Omega_s\) to each coalition \(\FCC\) recovers exactly the actual powers
\(\Facnei_\FCC(s)\) available to that coalition at \(s\).

%%%%%%%%%%%%%%%
%%%%%%%%%%%%%%%
\begin{definition}[Coherent coverability]
\label{def:25-coherent-coverability}

Let
\[
\FACNF=(\FST,\{\Facnei_\FCC \mid \FCC \subseteq \FAG\})
\]
be an actual neighborhood frame, where \(\FAG\) is finite and nonempty, and fix
\(s\in\FST\).

We say that \(s\) is \Fdefs{coherently coverable} if it admits a coherent
profile cover
\(\Omega_s\subseteq\Gamma_s\)
at \(s\).

We say that \(\FACNF\) is \Fdefs{coherently coverable} if every state
\(s\in\FST\) is coherently coverable.

\end{definition}

Intuitively, coherent coverability says that the local actual powers at each
state can be organized into a coherent system of power profiles.
Such a system is exhaustive in that, at each coalition coordinate, it recovers
all and only the actual powers available there. It is also disciplined by
coherence: its profiles must mutually support their coalition components
through compatible grand-coalition powers.

%%%%%%%%%%%%%%%
%%%%%%%%%%%%%%%
\begin{example}[Coherent covers]
\label{ex:03-coherent-covers}

Let
\[
\FAG=\{a,b,c\},
\qquad
W=\{x,y\},
\qquad
\FST=\{s\}\cup W.
\]
We define the actual neighborhoods at the distinguished state \(s\) by
\[
\Facnei_\FCC(s)=\{W\}
\qquad
\text{for every } \FCC\subsetneq\FAG,
\]
and
\[
\Facnei_\FAG(s)=\bigl\{\{x\},\{y\}\bigr\}.
\]
To make the actual neighborhood frame fully defined, set, for
every \(t\in W\) and every \(\FCC\subseteq\FAG\),
\[
\Facnei_\FCC(t)=\{\FST\}.
\]

Define two profiles \(\gamma^x\) and \(\gamma^y\) at \(s\) by
\[
(\gamma^x)_\FCC
=
\begin{cases}
W, & \text{if } \FCC\subsetneq\FAG,\\
\{x\}, & \text{if } \FCC=\FAG,
\end{cases}
\qquad
(\gamma^y)_\FCC
=
\begin{cases}
W, & \text{if } \FCC\subsetneq\FAG,\\
\{y\}, & \text{if } \FCC=\FAG.
\end{cases}
\]

Let
\[
\Omega_s=\{\gamma^x,\gamma^y\}.
\]
We check that \(\Omega_s\) is a coherent profile cover at \(s\).

\begin{itemize}

\item \(\gamma^x\) and \(\gamma^y\) are power profiles at \(s\).

Membership is immediate from the definition of the actual neighborhoods at
\(s\). For inclusion monotonicity, let
\(\FCC\subseteq\FDD\subseteq\FAG\). If \(\FDD\subsetneq\FAG\), then both
components are \(W\). If \(\FDD=\FAG\) and \(\FCC\subsetneq\FAG\), then the
grand-coalition component is either \(\{x\}\) or \(\{y\}\), and hence is
contained in \(W\). If \(\FDD=\FAG\) and \(\FCC = \FAG\), then inclusion monotonicity clearly holds in this case. Thus both
profiles satisfy inclusion monotonicity.

\item It is easy to see that \(\Omega_s\) covers all actual powers at \(s\).

\item \(\Omega_s\) is coherent.
We show that both profiles in \(\Omega_s\) are supported by \(\Omega_s\).

We first check that \(\gamma^x\) is supported by \(\Omega_s\). Let \(\FCC\subseteq\FAG\). It suffices to show $\bigcup \CAG_{\Omega_s}^\FCC(\gamma^x)
=
(\gamma^x)_\FCC$.

\begin{itemize}

\item 

Suppose first that \(\FCC\subsetneq\FAG\).
Then every subcoalition of \(\FCC\) is proper, and hence \(\gamma^x\) and
\(\gamma^y\) agree below \(\FCC\). Therefore
\[
\CAG_{\Omega_s}^\FCC(\gamma^x)
=
\bigl\{(\gamma^x)_\FAG,(\gamma^y)_\FAG\bigr\}
=
\bigl\{\{x\},\{y\}\bigr\}.
\]
It follows that
\[
\bigcup \CAG_{\Omega_s}^\FCC(\gamma^x)
=
\{x\}\cup\{y\}
=
W
=
(\gamma^x)_\FCC.
\]

\item 
Now suppose that \(\FCC=\FAG\). Agreement below \(\FAG\) is equality of
profiles, and \(\gamma^y\) does not agree with \(\gamma^x\) at the
grand-coalition component. Hence the only profile in \(\Omega_s\) that agrees
with \(\gamma^x\) below \(\FAG\) is \(\gamma^x\) itself. Thus
\[
\CAG_{\Omega_s}^\FAG(\gamma^x)
=
\bigl\{(\gamma^x)_\FAG\bigr\}
=
\bigl\{\{x\}\bigr\},
\]
and so
\[
\bigcup \CAG_{\Omega_s}^\FAG(\gamma^x)
=
\{x\}
=
(\gamma^x)_\FAG.
\]

\end{itemize}

We next check that \(\gamma^y\) is supported by \(\Omega_s\). Let \(\FCC\subseteq\FAG\). It suffices to show $\bigcup \CAG_{\Omega_s}^\FCC(\gamma^y)
=
(\gamma^y)_\FCC$.

\begin{itemize}

\item 

 Suppose first that \(\FCC\subsetneq\FAG\).
Again, \(\gamma^x\) and \(\gamma^y\) agree below \(\FCC\), because all
components below \(\FCC\) are proper-coalition components. Therefore
\[
\CAG_{\Omega_s}^\FCC(\gamma^y)
=
\bigl\{(\gamma^x)_\FAG,(\gamma^y)_\FAG\bigr\}
=
\bigl\{\{x\},\{y\}\bigr\}.
\]
Hence
\[
\bigcup \CAG_{\Omega_s}^\FCC(\gamma^y)
=
\{x\}\cup\{y\}
=
W
=
(\gamma^y)_\FCC.
\]

\item
Now suppose that \(\FCC=\FAG\). Agreement below \(\FAG\) is equality of
profiles, and \(\gamma^x\) does not agree with \(\gamma^y\) at the
grand-coalition component. Hence the only profile in \(\Omega_s\) that agrees
with \(\gamma^y\) below \(\FAG\) is \(\gamma^y\) itself. Thus
\[
\CAG_{\Omega_s}^\FAG(\gamma^y)
=
\bigl\{(\gamma^y)_\FAG\bigr\}
=
\bigl\{\{y\}\bigr\},
\]
and so
\[
\bigcup \CAG_{\Omega_s}^\FAG(\gamma^y)
=
\{y\}
=
(\gamma^y)_\FAG.
\]

\end{itemize}

Therefore both profiles in \(\Omega_s\) are supported by \(\Omega_s\). Hence
\[
\Omega_s\subseteq\Phi_s(\Omega_s),
\]
so \(\Omega_s\) is coherent.

\end{itemize}

\end{example}

%%%%%%%%%%%%%%%
%%%%%%%%%%%%%%%
\subsection{The coherent core}
\label{subsec:06-04-the-coherent-core}

Having defined coherent coverability, we now turn from existence to
canonicity. By definition, a state is coherently coverable when there exists
some coherent profile cover at that state. This formulation is inherently
existential: it quantifies over possible coherent systems of power profiles,
but does not yet identify a canonical one.

The aim of this subsection is to remove this existential quantification. Using
the support operator, we define a distinguished set of profiles,
\(\operatorname{Core}_s\), namely the greatest coherent set of power profiles
at \(s\). The main result then shows that coherent coverability can be tested
entirely on this canonical object: a coherent profile cover exists at \(s\) if
and only if \(\operatorname{Core}_s\) itself is a coherent profile cover at
\(s\).

%%%%%%%%%%%%%%%
%%%%%%%%%%%%%%%
\begin{definition}[Coherent core]
\label{def:26-coherent-core}

Let
\[
\FACNF=(\FST,\{\Facnei_\FCC\mid \FCC\subseteq\FAG\})
\]
be an actual neighborhood frame, where \(\FAG\) is finite and nonempty, and fix
\(s\in\FST\). By
Lemma~\ref{lem:08-monotonicity-of-support}, the support operator
\[
\Phi_s:\mathcal P(\Gamma_s)\to\mathcal P(\Gamma_s)
\]
is monotone. By the
Knaster--Tarski theorem \cite{bradfield_modal_2007}, \(\Phi_s\) has a greatest post-fixed point, that is also a greatest fixed point. We denote this fixed point by
\[
\operatorname{Core}_s.
\]

Equivalently,
\[
\operatorname{Core}_s
=
\bigcup
\{X\subseteq\Gamma_s\mid X\subseteq\Phi_s(X)\}.
\]

Note
\[
\operatorname{Core}_s
=
\Phi_s(\operatorname{Core}_s),
\]
and, for every \(X\subseteq\Gamma_s\),
\[
X\subseteq\Phi_s(X)
\Longrightarrow
X\subseteq\operatorname{Core}_s.
\]

Since coherent sets of power profiles are precisely the post-fixed points of
\(\Phi_s\), the set \(\operatorname{Core}_s\) is the greatest coherent set of
power profiles at \(s\). We call it the \Fdefs{coherent core} at \(s\).

\end{definition}

Note that the coherent core is always defined, whether or not \(s\) is coherently
coverable.

By the following result, coherent coverability is
not merely an existential condition: whenever any coherent profile cover exists,
the greatest coherent set of profiles is already a coherent cover.

%%%%%%%%%%%%%%%
%%%%%%%%%%%%%%%
\begin{proposition}[Coherent coverability via the coherent core]
\label{prop:02-coherent-coverability-via-the-coherent-core}

Let
\[
\FACNF=(\FST,\{\Facnei_\FCC \mid \FCC \subseteq \FAG\})
\]
be an actual neighborhood frame, where \(\FAG\) is finite and nonempty, and fix
\(s\in\FST\). The following are equivalent:

\begin{enumerate}

\item There is a coherent profile cover
at \(s\).

\item The coherent core \(\operatorname{Core}_s\) is a coherent profile cover
at \(s\).

\end{enumerate}

\end{proposition}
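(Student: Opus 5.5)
The direction from 2 to 1 is immediate. If \(\operatorname{Core}_s\) is a coherent profile cover at \(s\), then a coherent profile cover exists.

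For the direction from 1 to 2, fix a coherent profile cover \(\Omega_s\subseteq\Gamma_s\) at \(s\). I will show that \(\operatorname{Core}_s\) is coherent and has the correct projections.

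\emph{Coherence.} By Definition~\ref{def:26-coherent-core}, \(\operatorname{Core}_s=\Phi_s(\operatorname{Core}_s)\). So in particular \(\operatorname{Core}_s\subseteq\Phi_s(\operatorname{Core}_s)\), which is exactly coherence.

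\emph{Projections.} Since \(\Omega_s\) is coherent, it is a post-fixed point of \(\Phi_s\), so \(\Omega_s\subseteq\operatorname{Core}_s\) by the maximality clause of Definition~\ref{def:26-coherent-core}. This gives one inclusion for every coalition \(\FCC\subseteq\FAG\):
\[
\Facnei_\FCC(s)=\{\gamma_\FCC\mid\gamma\in\Omega_s\}\subseteq\{\gamma_\FCC\mid \gamma\in\operatorname{Core}_s\}.
\]
For the reverse inclusion, note that \(\operatorname{Core}_s\subseteq\Gamma_s\). The membership clause of Definition~\ref{def:20-power-profiles} gives \(\gamma_\FCC\in\Facnei_\FCC(s)\) for every \(\gamma\in\operatorname{Core}_s\). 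Hence \(\{\gamma_\FCC\mid\gamma\in\operatorname{Core}_s\}\subseteq\Facnei_\FCC(s)\), and equality holds for each \(\FCC\).

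Together these show that \(\operatorname{Core}_s\) is a coherent profile cover at \(s\). No liveness assumption is needed, since only post-fixed-pointness is used, not Lemma~\ref{lem:07-coherent-sets-are-fixed-points-under-liveness}.

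The only step needing care is the appeal to Knaster--Tarski. \(\operatorname{Core}_s\) must really be the greatest post-fixed point, i.e.\ the union of all coherent sets, and it must itself be coherent. This rests on the monotonicity of \(\Phi_s\) from Lemma~\ref{lem:08-monotonicity-of-support}. Concretely: if \(X\) is coherent, then \(X\subseteq\operatorname{Core}_s\), so monotonicity gives \(X\subseteq\Phi_s(X)\subseteq\Phi_s(\operatorname{Core}_s)\). Taking the union over all coherent \(X\) yields \(\operatorname{Core}_s\subseteq\Phi_s(\operatorname{Core}_s)\). The rest is bookkeeping.
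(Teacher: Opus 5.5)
Your proof is correct and follows essentially the same route as the paper. Maximality of the core gives \(\Omega_s\subseteq\operatorname{Core}_s\), hence one inclusion of the covering condition, and the membership clause for power profiles gives the other; your explicit check that the union of all coherent sets is itself coherent simply unpacks the Knaster--Tarski step that the paper cites.
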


%%%%%%%%%%%%%%%
%%%%%%%%%%%%%%%
\begin{proof}

We prove the two implications.

\noindent
\((1 \Rightarrow 2)\)
Assume that there is a coherent profile cover
\(\Omega_s\subseteq\Gamma_s\)
at \(s\). Since \(\Omega_s\) is coherent, and
\(\operatorname{Core}_s\) is the greatest coherent set of power profiles at
\(s\), we have
\(\Omega_s\subseteq\operatorname{Core}_s.\)

By construction, \(\operatorname{Core}_s\) is coherent. Thus, to show that it
is a coherent profile cover, it remains only to verify the covering condition.
Fix a coalition \(\FCC\subseteq\FAG\). Since \(\Omega_s\) is a coherent profile
cover at \(s\),
\[
\Facnei_\FCC(s)
=
\{\gamma_\FCC\mid \gamma\in\Omega_s\}.
\]
Together with \(\Omega_s\subseteq\operatorname{Core}_s\), this gives
\(\Facnei_\FCC(s) \subseteq \{\gamma_\FCC\mid \gamma\in\operatorname{Core}_s\}.\)

Conversely, every profile in \(\operatorname{Core}_s\) belongs to
\(\Gamma_s\). Hence, for every \(\gamma\in\operatorname{Core}_s\), the
membership condition for power profiles gives
\(\gamma_\FCC\in\Facnei_\FCC(s).\)
Therefore
\(\{\gamma_\FCC\mid \gamma\in\operatorname{Core}_s\} \subseteq \Facnei_\FCC(s).\)
Combining the two inclusions, we obtain
\(\Facnei_\FCC(s) = \{\gamma_\FCC\mid \gamma\in\operatorname{Core}_s\}.\)
Since \(\FCC\subseteq\FAG\) was arbitrary,
\(\operatorname{Core}_s\) covers all actual coalition powers at \(s\). Together
with coherence, this shows that \(\operatorname{Core}_s\) is a coherent
profile cover at \(s\).

\medskip

\noindent
\((2 \Rightarrow 1)\)
If \(\operatorname{Core}_s\) is a coherent profile cover at \(s\), then there
exists a coherent profile cover at \(s\), namely
\[
\operatorname{Core}_s
\]
itself.

\end{proof}

\subsection{Translations between coherent profile covers and abstract presentations}
\label{subsec:06-05-translations-between-coherent-profile-covers-and-abstract-presentations}

We now connect the index-free formulation developed in this section with the
indexed presentations introduced in
Section~\ref{sec:05-representing-actual-powers-via-abstract-presentations}.
We show that for every actual neighborhood frame $\FACNF$ and every state $s$ of it, (1) from an abstract actual presentation at \(s\), we can get a coherent profile cover at $s$, and (2) under actual triviality of the
empty coalition, from a nonempty coherent profile cover at $s$, we can get an abstract actual presentation at $s$.

%%%%%%%%%%%%%%%
%%%%%%%%%%%%%%%
\subsubsection{From abstract actual presentations to coherent profile covers}
\label{subsubsec:06-05-01-from-abstract-actual-presentations-to-coherent-profile-covers}

%%%%%%%%%%%%%%%
%%%%%%%%%%%%%%%
\begin{definition}[Profile sets induced by abstract actual presentations]
\label{def:27-profile-set-induced-by-an-abstract-actual-presentation}

Let
\[
\FACNF=(\FST,\{\Facnei_\FCC\mid \FCC\subseteq\FAG\})
\]
be an actual neighborhood frame, where \(\FAG\) is finite and nonempty. Fix
\(s\in\FST\), and let
\[
\mathbb I_s=(I_s,o_s,\{\equiv^s_a\}_{a\in\FAG})
\]
be an abstract actual presentation of \(\FACNF\) at \(s\).

For each \(i\in I_s\), define
\(\gamma^i=(\gamma^i_\FCC)_{\FCC\subseteq\FAG}\)
by
\(\gamma^i_\FCC=\rho^s_\FCC(i)\)
for every \(\FCC\subseteq\FAG\), where \(\rho^s_\FCC\) is the presented-power
map associated with \(\mathbb I_s\).

The \Fdefs{profile set induced by \(\mathbb I_s\)} is
\(\Omega_s = \{\,\gamma^i\mid i\in I_s\,\}.\)

\end{definition}

%%%%%%%%%%%%%%%
%%%%%%%%%%%%%%%
\begin{lemma}[Abstract presentations induce coherent profile covers]
\label{lem:09-abstract-presentations-induce-coherent-profile-covers}

Let
\[
\FACNF=(\FST,\{\Facnei_\FCC\mid \FCC\subseteq\FAG\})
\]
be an actual neighborhood frame, where \(\FAG\) is finite and nonempty. Fix
\(s\in\FST\), and let \(\mathbb I_s\) be an abstract actual presentation of
\(\FACNF\) at \(s\).

Then the induced profile set
$
\Omega_s
$
is a coherent profile cover at \(s\).

\end{lemma}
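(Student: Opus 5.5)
We verify the three requirements in turn: each \(\gamma^i\) is a power profile, \(\Omega_s\) covers every neighborhood, and \(\Omega_s\) is coherent.

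\emph{Profiles and cover.} Fix \(i\in I_s\). Membership \(\gamma^i_\FCC=\rho^s_\FCC(i)\in\Facnei_\FCC(s)\) is immediate from coverage of \(\mathbb I_s\). Inclusion monotonicity \(\gamma^i_\FDD\subseteq\gamma^i_\FCC\) for \(\FCC\subseteq\FDD\) is item 2 of Lemma~\ref{lem:01-basic-properties-of-presented-powers}. So \(\Omega_s\subseteq\Gamma_s\). The cover condition \(\Facnei_\FCC(s)=\{\gamma_\FCC\mid\gamma\in\Omega_s\}=\{\rho^s_\FCC(i)\mid i\in I_s\}\) is then literally coverage.

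\emph{Coherence.} Fix \(\gamma^i\in\Omega_s\) and a coalition \(\FCC\). We must show \(\rho^s_\FCC(i)=\bigcup\CAG^\FCC_{\Omega_s}(\gamma^i)\). By item 1 of Lemma~\ref{lem:01-basic-properties-of-presented-powers}, \(\gamma^j_\FAG=o_s(j)\), so the right-hand side is \(\bigcup\{o_s(j)\mid j\in I_s,\ \gamma^j\equiv^s_\FCC\gamma^i\}\).

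For \(\supseteq\): if \(j\equiv^s_\FCC i\) (index agreement), then item 3 of Lemma~\ref{lem:01-basic-properties-of-presented-powers} gives \(\rho^s_\FDD(j)=\rho^s_\FDD(i)\) for every \(\FDD\subseteq\FCC\). Hence \(\gamma^j\equiv^s_\FCC\gamma^i\) in the profile sense, and every \(o_s(j)\) in the union defining \(\rho^s_\FCC(i)\) also appears on the right.

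For \(\subseteq\), the step needing care: the profile agreement \(\gamma^j\equiv^s_\FCC\gamma^i\) does not imply index agreement \(j\equiv^s_\FCC i\). So we argue as in the proof of Lemma~\ref{lem:08-monotonicity-of-support}, using only inclusion monotonicity. If \(\gamma^j\equiv^s_\FCC\gamma^i\), then
\[
o_s(j)=\gamma^j_\FAG\subseteq\gamma^j_\FCC=\gamma^i_\FCC=\rho^s_\FCC(i).
\]
Thus every member of \(\CAG^\FCC_{\Omega_s}(\gamma^i)\) is contained in \(\rho^s_\FCC(i)\).

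Together these give \(\Omega_s\subseteq\Phi_s(\Omega_s)\), so \(\Omega_s\) is a coherent profile cover at \(s\). The mismatch between index agreement and profile agreement is the only delicate point, and the inclusion-monotonicity argument sidesteps it.
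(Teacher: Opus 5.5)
Your proof is correct and follows the paper's argument essentially step for step. Membership and inclusion monotonicity come from coverage and item~2 of Lemma~\ref{lem:01-basic-properties-of-presented-powers}, and the cover condition is coverage itself. Coherence is shown exactly as in the paper: index agreement together with items~1 and~3 of that lemma gives one inclusion, and inclusion monotonicity of profiles, \(\delta_\FAG\subseteq\delta_\FCC=\gamma^i_\FCC\), gives the other, which is also how the paper handles the gap between profile agreement and index agreement.
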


%%%%%%%%%%%%%%%
%%%%%%%%%%%%%%%
\begin{proof}

Let
\(\Omega_s=\{\,\gamma^i\mid i\in I_s\,\}.\)

\medskip
\noindent
\textbf{Step 1: The induced profiles are power profiles.}
Let \(i\in I_s\). We show that
\(\gamma^i=(\gamma^i_\FCC)_{\FCC\subseteq\FAG}\)
is a power profile at \(s\).

First, fix a coalition \(\FCC\subseteq\FAG\). By the coverage clause of the
abstract actual presentation,
\(\Facnei_\FCC(s) = \{\,\rho^s_\FCC(j)\mid j\in I_s\,\}.\)
Since \(i\in I_s\), it follows that
\(\gamma^i_\FCC = \rho^s_\FCC(i) \in \Facnei_\FCC(s).\)
Thus the membership condition for power profiles holds.

Second, let
\(\FCC\subseteq\FDD\subseteq\FAG.\)
By Lemma~\ref{lem:01-basic-properties-of-presented-powers},
\(\rho^s_\FDD(i)\subseteq\rho^s_\FCC(i).\)
Hence
\(\gamma^i_\FDD = \rho^s_\FDD(i) \subseteq \rho^s_\FCC(i) = \gamma^i_\FCC.\)
Thus inclusion monotonicity holds.

\medskip
\noindent
\textbf{Step 2: The induced profile set covers the actual neighborhoods.}
Let \(\FCC\subseteq\FAG\). By the definition of \(\Omega_s\), the definition of
\(\gamma^i_\FCC\), and the coverage clause of the abstract actual presentation,
\[
\{\,\gamma_\FCC\mid \gamma\in\Omega_s\,\}
=
\{\,\gamma^i_\FCC\mid i\in I_s\,\}
=
\{\,\rho^s_\FCC(i)\mid i\in I_s\,\}
=
\Facnei_\FCC(s).
\]
Thus \(\Omega_s\) satisfies the covering condition for every coalition
\(\FCC\subseteq\FAG\).

\medskip
\noindent
\textbf{Step 3: The induced profile set is coherent.}
It remains to show that every profile in \(\Omega_s\) is supported by
\(\Omega_s\). Let \(i\in I_s\) and let \(\FCC\subseteq\FAG\). We prove
\[
\gamma^i_\FCC
=
\bigcup
\{\,\delta_\FAG
\mid
\delta\in\Omega_s
\text{ and }
\delta\equiv^s_\FCC\gamma^i\,\},
\tag{APC}
\label{eq:07-abstract-to-coherent-support}
\]
where \(\equiv^s_\FCC\) on the right-hand side is agreement of power profiles
below \(\FCC\).

First, we prove the inclusion from left to right. Let
\(x\in\gamma^i_\FCC.\)
By the definition of \(\gamma^i_\FCC\), we have
\(x\in\rho^s_\FCC(i).\)
By the definition of \(\rho^s_\FCC(i)\), there exists \(j\in I_s\) such that
\(i\equiv^s_\FCC j \text{ and } x\in o_s(j),\)
where \(i\equiv^s_\FCC j\) is the coalition-equivalence relation in the
abstract actual presentation.

We claim that \(\gamma^j\equiv^s_\FCC\gamma^i\) as power profiles. Let
\(\FDD\subseteq\FCC\). Since \(i\equiv^s_\FCC j\), we have
\(i\equiv^s_\FDD j\). Hence, by
Lemma~\ref{lem:01-basic-properties-of-presented-powers},
\(\rho^s_\FDD(i)=\rho^s_\FDD(j).\)
Therefore
\(\gamma^i_\FDD = \rho^s_\FDD(i) = \rho^s_\FDD(j) = \gamma^j_\FDD.\)
Since this holds for every \(\FDD\subseteq\FCC\), we obtain
\(\gamma^j\equiv^s_\FCC\gamma^i.\)

Moreover, by Lemma~\ref{lem:01-basic-properties-of-presented-powers},
\(\gamma^j_\FAG = \rho^s_\FAG(j) = o_s(j).\)
Since \(x\in o_s(j)\), it follows that
\(x\in\gamma^j_\FAG.\)
Also \(\gamma^j\in\Omega_s\). Thus \(x\) belongs to the right-hand side of
\eqref{eq:07-abstract-to-coherent-support}. Hence
\(\gamma^i_\FCC \subseteq \bigcup \{\,\delta_\FAG \mid \delta\in\Omega_s \text{ and } \delta\equiv^s_\FCC\gamma^i\,\}.\)

Conversely, let
\(x\in \bigcup \{\,\delta_\FAG \mid \delta\in\Omega_s \text{ and } \delta\equiv^s_\FCC\gamma^i\,\}.\)
Then there exists \(\delta\in\Omega_s\) such that
\(\delta\equiv^s_\FCC\gamma^i \text{ and } x\in\delta_\FAG.\)
Since \(\delta\in\Omega_s\subseteq\Gamma_s\), the profile \(\delta\) is a power
profile. Applying inclusion monotonicity to
\(\FCC\subseteq\FAG\), we get
\(\delta_\FAG\subseteq\delta_\FCC.\)
Since \(\delta\equiv^s_\FCC\gamma^i\), agreement below \(\FCC\) gives
\(\delta_\FCC=\gamma^i_\FCC.\)
Therefore
\(x\in\delta_\FAG\subseteq\delta_\FCC=\gamma^i_\FCC.\)
This proves the reverse inclusion in
\eqref{eq:07-abstract-to-coherent-support}.

Thus \eqref{eq:07-abstract-to-coherent-support} holds for every \(i\in I_s\) and
every \(\FCC\subseteq\FAG\). Hence every profile in \(\Omega_s\) is supported by
\(\Omega_s\), so
\(\Omega_s\subseteq\Phi_s(\Omega_s).\)
Therefore \(\Omega_s\) is coherent.

Combining coherence with the covering condition proved in Step~2, we conclude
that \(\Omega_s\) is a coherent profile cover at \(s\).

\end{proof}

%%%%%%%%%%%%%%%
%%%%%%%%%%%%%%%
\subsubsection{From coherent profile covers to abstract actual presentations}
\label{subsubsec:06-05-02-from-coherent-profile-covers-to-abstract-actual-presentations}

The translation from coherent profile covers to abstract actual presentations
is less immediate. We therefore introduce an intermediate construction, the
unfolding forest of a coherent profile cover, which turns coalition-level
profile agreement into an agent-indexed abstract index structure.

Let
\[
\FACNF=(\FST,\{\Facnei_\FCC\mid \FCC\subseteq\FAG\})
\]
be an actual neighborhood frame, where \(\FAG\) is finite and nonempty. Fix
\(s\in\FST\).
Let \(\Omega_s\subseteq\Gamma_s\) be a nonempty coherent profile cover at
\(s\).

For \(\gamma\in\Omega_s\) and \(\FCC\subseteq\FAG\), put
\[
S^s_\FCC(\gamma)
=
\{\delta\in\Omega_s\mid \delta\equiv^s_\FCC\gamma\},
\]
where \(\equiv^s_\FCC\) is power-profile agreement below \(\FCC\).

The forest constructed below unravels the coalition-indexed profile-agreement structure
\[
(\Omega_s,\{S^s_\FCC\}_{\emptyset\neq\FCC\subseteq\FAG}),
\]
where \(S^s_\FCC(\gamma)\) consists of the profiles in \(\Omega_s\) that agree
with \(\gamma\) below \(\FCC\).

%%%%%%%%%%%%%%%
%%%%%%%%%%%%%%%
\begin{definition}[Unfolding forest of a coherent profile cover]
\label{def:28-unfolding-labelled-rooted-forest-of-a-coherent-profile-cover}

Let
\[
\FACNF=(\FST,\{\Facnei_\FCC\mid \FCC\subseteq\FAG\})
\]
be an actual neighborhood frame, where \(\FAG\) is finite and nonempty. Fix
\(s\in\FST\).

Let \(\Omega_s\subseteq\Gamma_s\) be a nonempty coherent profile cover at
\(s\).

We construct a \Fdefs{labelled rooted forest}
\[
\mathcal{F}_s = (I_s, E_s, \pi_s, \lambda_s)
\]
as follows.

\begin{itemize}

\item \textbf{Nodes and roots.}

Let \(I_s\) be the set of all finite sequences
\[
p=
\langle
\gamma_0,
(\FCC_1,\gamma_1),
\dots,
(\FCC_n,\gamma_n)
\rangle
\]
such that \(\gamma_0\in\Omega_s\), each
\(\emptyset\neq\FCC_m\subseteq\FAG\), and
\[
\gamma_m\in S^s_{\FCC_m}(\gamma_{m-1})
\qquad(1\leq m\leq n).
\]
The roots are the one-element sequences \(\langle\gamma\rangle\), where
\(\gamma\in\Omega_s\).

\item \textbf{Edges.}

For \(p,q\in I_s\), there is a directed edge \(p \to q\) iff \(q\) is obtained
from \(p\) by appending a pair \((\FEE,\delta)\), where
\(\emptyset\neq\FEE\subseteq\FAG\) and
\[
\delta \in S^s_\FEE(\gamma_n)
\text{for } 
p=\langle \gamma_0,(\FCC_1,\gamma_1),\dots,(\FCC_n,\gamma_n)\rangle.
\]
We denote the resulting edge set by \(E_s\).

\item \textbf{Node labeling.}

Define the node labeling function \(\pi_s : I_s \to \Omega_s\) by
\[
\pi_s(p)=\gamma_n
\]
for
\[
p=
\langle
\gamma_0,
(\FCC_1,\gamma_1),
\dots,
(\FCC_n,\gamma_n)
\rangle.
\]

\item \textbf{Edge labeling.}

Define the edge labeling function \(\lambda_s : E_s \to \mathcal{P}(\FAG)\setminus\{\emptyset\}\) by:
if \(p \to q\) is obtained by appending \((\FEE,\delta)\), then
\[
\lambda_s(p,q)=\FEE.
\]

\end{itemize}

\end{definition}

Note the following points:

\begin{itemize}

\item 

By construction, whenever two adjacent nodes \(p,q\) are connected by an edge
labelled \(\FEE\), their profiles agree below \(\FEE\):
\[
\pi_s(p)\equiv^s_\FEE \pi_s(q).
\tag{E}
\label{eq:08-edge-agreement-abstract-unfolding}
\]

\item 

For a root \(\langle\gamma\rangle\), we have \(\pi_s(\langle\gamma\rangle)=\gamma\).
Hence \(\pi_s\) is surjective onto \(\Omega_s\).

\end{itemize}

%%%%%%%%%%%%%%%
%%%%%%%%%%%%%%%
\begin{definition}[Index structure of a coherent profile cover]
\label{def:29-index-structure-of-a-coherent-profile-cover}

Let
\[
\FACNF=(\FST,\{\Facnei_\FCC\mid \FCC\subseteq\FAG\})
\]
be an actual neighborhood frame, where \(\FAG\) is finite and nonempty. Fix
\(s\in\FST\).

Let \(\Omega_s\subseteq\Gamma_s\) be a nonempty coherent profile cover at
\(s\).

Let
\[
\mathcal{F}_s = (I_s, E_s, \pi_s, \lambda_s)
\]
be the labelled rooted forest of \(\Omega_s\).

We define the structure
\[
\mathbb I_s=(I_s,o_s,\{\approx^s_a\}_{a\in\FAG}),
\]
called the \Fdefs{index structure induced by \(\Omega_s\)}, as follows.

\begin{itemize}

\item \textbf{Outcome map.}

Define the outcome function \(o_s : I_s \to \mathcal{P}(\FST)\) by
\[
o_s(p)=\pi_s(p)_\FAG,
\]
where \(\pi_s : I_s \to \Omega_s\) is the node labeling map of
\(\mathcal{F}_s\), and \(\pi_s(p)_\FAG\) denotes the \(\FAG\)-component of the profile \(\pi_s(p)\).

\item \textbf{Agent equivalence relations.}

For each agent \(a\in\FAG\), define a relation \(\approx^s_a\) on \(I_s\) by
\[
p \approx^s_a q
\]
iff \(p\) and \(q\) lie in the same tree of \(\mathcal{F}_s\), and the unique
path connecting \(p\) and \(q\) uses only edges whose labels contain \(a\).

\end{itemize}

\end{definition}

For \(\FCC\subseteq\FAG\), define
\[
p\approx^s_\FCC q
\text{iff}
p\approx^s_a q \text{ for every } a\in\FCC.
\]

If \(\FCC\neq\emptyset\), this is equivalent to saying that \(p\) and \(q\)
lie in the same tree and the unique path between them uses only edges whose
labels contain \(\FCC\). When \(\FCC=\emptyset\), the relation
\(\approx^s_\emptyset\) is universal on \(I_s\).

%%%%%%%%%%%%%%%
%%%%%%%%%%%%%%%
\begin{lemma}[Coherent profile covers unfold into abstract presentations]
\label{lem:10-coherent-profile-covers-unfold-into-abstract-presentations}

Let
\[
\FACNF=(\FST,\{\Facnei_\FCC\mid \FCC\subseteq\FAG\})
\]
be an actual neighborhood frame, where \(\FAG\) is finite and nonempty. Fix
\(s\in\FST\). Assume that actual triviality of the empty coalition holds at
\(s\), that is, if \(\Facnei_\emptyset(s)\) is nonempty, then it is a
singleton.

Let \(\Omega_s\subseteq\Gamma_s\) be a nonempty coherent profile cover at
\(s\). Then the index structure
\[
\mathbb I_s=(I_s,o_s,\{\approx^s_a\}_{a\in\FAG})
\]
induced by \(\Omega_s\) is an abstract actual presentation of \(\FACNF\) at
\(s\).

\end{lemma}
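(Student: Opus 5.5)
The plan is to verify directly the two defining requirements of an abstract actual presentation, grand compatibility and coverage, for \(\mathbb I_s=(I_s,o_s,\{\approx^s_a\}_{a\in\FAG})\). Before that, I check that each \(\approx^s_a\) is an equivalence relation, which Definition~\ref{def:15-abstract-actual-presentations} requires. The central intermediate claim is
\[
\rho^s_\FCC(p)=\pi_s(p)_\FCC\qquad\text{for every }p\in I_s\text{ and every }\FCC\subseteq\FAG,
\]
where \(\rho^s_\FCC\) is computed from \(o_s\) and \(\approx^s_\FCC\). Coverage then follows at once. Indeed, \(\pi_s\) is surjective onto \(\Omega_s\) because of the roots, so \(\{\rho^s_\FCC(p)\mid p\in I_s\}=\{\gamma_\FCC\mid\gamma\in\Omega_s\}\). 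This equals \(\Facnei_\FCC(s)\) because \(\Omega_s\) is a cover.

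\emph{Equivalence relations and grand compatibility.} Each tree of \(\mathcal F_s\) is a tree, so any two nodes in it are joined by a unique (undirected) path. Reflexivity holds via the empty path, and symmetry is immediate. For transitivity, let \(p\approx^s_a q\) and \(q\approx^s_a r\). Then \(p,q,r\) lie in one tree, and the unique \(p\)–\(r\) path uses only edges of the \(p\)–\(q\) and \(q\)–\(r\) paths. All of those edges have labels containing \(a\).

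For grand compatibility, suppose \(p\approx^s_\FAG q\). Then every edge on the path has a label containing all of \(\FAG\), so every label equals \(\FAG\). By (E), adjacent labels \(\pi_s\) agree below \(\FAG\), which means the profiles are equal. Hence \(\pi_s(p)=\pi_s(q)\), and so \(o_s(p)=o_s(q)\).

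\emph{The key claim for \(\FCC\neq\emptyset\).} For "\(\subseteq\)", let \(q\approx^s_\FCC p\). Every edge on the path has a label \(\FEE\supseteq\FCC\). By (E), agreement below \(\FEE\) implies agreement below \(\FCC\), and agreement below \(\FCC\) is transitive, so \(\pi_s(q)\equiv^s_\FCC\pi_s(p)\). Then inclusion monotonicity gives \(o_s(q)=\pi_s(q)_\FAG\subseteq\pi_s(q)_\FCC=\pi_s(p)_\FCC\).

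For "\(\supseteq\)", I use coherence of \(\Omega_s\), which gives \(\pi_s(p)_\FCC=\bigcup\{\delta_\FAG\mid\delta\in\Omega_s,\ \delta\equiv^s_\FCC\pi_s(p)\}\). For each such \(\delta\), appending \((\FCC,\delta)\) to \(p\) is legal, since \(\FCC\neq\emptyset\) and \(\delta\in S^s_\FCC(\pi_s(p))\). The resulting child \(q\) satisfies \(q\approx^s_\FCC p\) via a single edge labelled \(\FCC\), and \(o_s(q)=\delta_\FAG\). This unfolding step is exactly what the forest is built to provide.

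\emph{The case \(\FCC=\emptyset\), which I expect to be the main obstacle.} Edges with empty labels are not allowed, so the argument above does not apply, and \(\approx^s_\emptyset\) is universal across different trees. Here \(\rho^s_\emptyset(p)=\bigcup_{q\in I_s}o_s(q)=\bigcup\{\delta_\FAG\mid\delta\in\Omega_s\}\) by surjectivity of \(\pi_s\). This is where actual triviality of the empty coalition is used. Since \(\Omega_s\neq\emptyset\), the covering condition makes \(\Facnei_\emptyset(s)\) nonempty, hence a singleton \(\{Y\}\). Therefore every \(\gamma\in\Omega_s\) has \(\gamma_\emptyset=Y\), and all profiles agree below \(\emptyset\). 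Coherence for \(\FCC=\emptyset\) then reads \(\pi_s(p)_\emptyset=\bigcup\{\delta_\FAG\mid\delta\in\Omega_s\}=\rho^s_\emptyset(p)\). This completes the key claim, and with it coverage and the lemma.
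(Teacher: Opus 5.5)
Your proposal is correct and follows the paper's proof essentially step for step. You check the equivalence relations via unique paths, then prove the recovery identity $\rho^s_\FCC(p)=\pi_s(p)_\FCC$: one inclusion by path-transitivity plus inclusion monotonicity, the other by one-step unfolding via coherence, with $\FCC=\emptyset$ handled separately through actual triviality and surjectivity of $\pi_s$; grand compatibility comes from all edge labels being $\FAG$, and coverage from the recovery identity.
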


%%%%%%%%%%%%%%%
%%%%%%%%%%%%%%%
\begin{proof}

Let
\[
\mathcal{F}_s=(I_s,E_s,\pi_s,\lambda_s)
\]
be the unfolding labelled rooted forest of \(\Omega_s\). Recall that
\(\pi_s:I_s\to\Omega_s\) is the node labeling function and that
\[
o_s(p)=\pi_s(p)_\FAG
\]
for every \(p\in I_s\).

By construction, every node label belongs to \(\Omega_s\), and every profile
in \(\Omega_s\) appears as the label of a root. Hence
\[
\pi_s[I_s]=\Omega_s.
\tag{R}
\label{eq:09-surjectivity-of-pi-unfolding}
\]

Since \(\Omega_s\) is coherent, for every \(\gamma\in\Omega_s\) and every
coalition \(\FCC\subseteq\FAG\),
\[
\gamma_\FCC
=
\bigcup
\{\delta_\FAG
\mid
\delta\in S^s_\FCC(\gamma)\}.
\tag{SC}
\label{eq:10-coherent-support-for-unfolding}
\]

We verify the two requirements in the definition of an abstract actual
presentation.

\medskip
\noindent
\textbf{Step 1: The agent relations are equivalence relations.}

For each \(a\in\FAG\), the relation \(\approx^s_a\) is an equivalence relation
on \(I_s\). Reflexivity and symmetry are immediate from the definition using
the unique path in the underlying undirected forest. For transitivity, suppose
\(p\approx^s_a q\) and \(q\approx^s_a r\). Then the path from \(p\) to \(q\)
and the path from \(q\) to \(r\) use only edges whose labels contain \(a\).
The unique reduced path from \(p\) to \(r\) is contained in the union of these
two paths, and therefore also uses only edges whose labels contain \(a\). Thus
\(p\approx^s_a r\).

Hence \(\mathbb I_s\) has the required underlying form.

\medskip
\noindent
\textbf{Step 2: Local recovery of profile components.}

Let \(\rho^s_\FCC\) be the presented-power map computed from
\(\mathbb I_s\), that is,
\[
\rho^s_\FCC(p)
=
\bigcup
\{\,o_s(q)
\mid
q\in I_s \text{ and } p\approx^s_\FCC q\,\}.
\]
We first prove that, for every \(p\in I_s\) and every
\(\FCC\subseteq\FAG\),
\[
\rho^s_\FCC(p)=\pi_s(p)_\FCC.
\tag{LSA}
\label{eq:11-local-support-abstract-unfolding}
\]

First consider the case \(\FCC=\emptyset\). Since \(\Omega_s\) is nonempty and
covers the empty-coalition neighborhood,
\[
\Facnei_\emptyset(s)
=
\{\gamma_\emptyset\mid \gamma\in\Omega_s\}
\]
is nonempty. By actual triviality of the empty coalition at \(s\), this set is
a singleton. Hence all profiles in \(\Omega_s\) have the same
\(\emptyset\)-component, and therefore
\[
S^s_\emptyset(\pi_s(p))=\Omega_s.
\]
Using coherence, we obtain
\[
\pi_s(p)_\emptyset
=
\bigcup
\{\delta_\FAG\mid \delta\in\Omega_s\}.
\tag{1}
\]
On the other hand, \(\approx^s_\emptyset\) is the universal relation on
\(I_s\). Therefore, by the definition of \(o_s\) and the surjectivity of
\(\pi_s\),
\[
\rho^s_\emptyset(p)
=
\bigcup
\{\,o_s(q)\mid q\in I_s\,\}
=
\bigcup
\{\,\pi_s(q)_\FAG\mid q\in I_s\,\}
=
\bigcup
\{\delta_\FAG\mid \delta\in\Omega_s\}.
\tag{2}
\]
By (1) and (2),
\[
\rho^s_\emptyset(p)=\pi_s(p)_\emptyset.
\]

Now let \(\FCC\neq\emptyset\). We prove both inclusions.

First, suppose
\(x\in\rho^s_\FCC(p).\)
Then there is \(q\in I_s\) such that
\(p\approx^s_\FCC q \text{ and } x\in o_s(q)=\pi_s(q)_\FAG.\)
Since \(p\approx^s_\FCC q\), the unique path between \(p\) and \(q\) uses only
edges whose labels contain \(\FCC\). Along each edge on this path, the two
adjacent node labels agree below the edge label. Hence adjacent labels agree
below \(\FCC\). By transitivity of power-profile agreement below \(\FCC\),
\(\pi_s(q)\equiv^s_\FCC \pi_s(p).\)
Thus
\(\pi_s(q)_\FCC=\pi_s(p)_\FCC.\)
Since \(\pi_s(q)\in\Omega_s\subseteq\Gamma_s\), it is a power profile, and
inclusion monotonicity gives
\(\pi_s(q)_\FAG\subseteq\pi_s(q)_\FCC.\)
Hence
\(x\in \pi_s(q)_\FAG \subseteq \pi_s(q)_\FCC = \pi_s(p)_\FCC.\)
Therefore
\(\rho^s_\FCC(p)\subseteq\pi_s(p)_\FCC.\)

Conversely, suppose
\(x\in\pi_s(p)_\FCC.\)
By coherence, using \eqref{eq:10-coherent-support-for-unfolding} with
\(\gamma=\pi_s(p)\), there exists
\(\delta\in S^s_\FCC(\pi_s(p))\)
such that
\(x\in\delta_\FAG.\)
Since \(\FCC\neq\emptyset\), the sequence
\(q=p^\frown\langle(\FCC,\delta)\rangle\)
is a node of \(I_s\). The edge from \(p\) to \(q\) is labelled by \(\FCC\), so
\(p\approx^s_\FCC q.\)
Moreover,
\(\pi_s(q)=\delta.\)
Therefore
\(x\in\delta_\FAG=\pi_s(q)_\FAG=o_s(q),\)
and hence
\(x\in\rho^s_\FCC(p).\)
This proves
\(\pi_s(p)_\FCC\subseteq\rho^s_\FCC(p).\)

Thus \eqref{eq:11-local-support-abstract-unfolding} holds for every
\(p\in I_s\) and every coalition \(\FCC\subseteq\FAG\).

\medskip
\noindent
\textbf{Step 3: Grand compatibility.}

Suppose
\(p\approx^s_\FAG q.\)
Then the unique path between \(p\) and \(q\) uses only edges whose labels
contain \(\FAG\). Hence every edge on this path is labelled by \(\FAG\).
Along each such edge, adjacent node labels agree below \(\FAG\). Since
agreement below \(\FAG\) is equality of power profiles, following the path
gives
\(\pi_s(p)=\pi_s(q).\)
Therefore
\(o_s(p)=\pi_s(p)_\FAG=\pi_s(q)_\FAG=o_s(q).\)
Thus grand compatibility holds.

\medskip
\noindent
\textbf{Step 4: Coverage.}

Let \(\FCC\subseteq\FAG\). By \eqref{eq:11-local-support-abstract-unfolding},
\(\{\rho^s_\FCC(p)\mid p\in I_s\} = \{\pi_s(p)_\FCC\mid p\in I_s\}.\)
By \eqref{eq:09-surjectivity-of-pi-unfolding},
\[
\{\pi_s(p)_\FCC\mid p\in I_s\}
=
\{\gamma_\FCC\mid \gamma\in\Omega_s\}.
\]
Since \(\Omega_s\) is a coherent profile cover at \(s\),
\[
\{\gamma_\FCC\mid \gamma\in\Omega_s\}
=
\Facnei_\FCC(s).
\]
Therefore
\[
\Facnei_\FCC(s)
=
\{\rho^s_\FCC(p)\mid p\in I_s\}.
\]
This is exactly the coverage condition for coalition \(\FCC\).

Since \(\FCC\subseteq\FAG\) was arbitrary, coverage holds for every coalition.
Together with Step~3, this shows that \(\mathbb I_s\) is an abstract actual
presentation of \(\FACNF\) at \(s\).

\end{proof}

%%%%%%%%%%%%%%%
%%%%%%%%%%%%%%%
\section{Representing actual powers without assuming independence via coherent profile covers}
\label{sec:07-representing-actual-powers-without-assuming-independence-via-coherent-profile-covers}

In this section, we first define \(\PAC\)-representativeness for actual
neighborhood frames in terms of coherent profile covers, together with the
relevant seriality and determinism conditions. We then show that
\(\PAC\)-representativeness is equivalent to
\(\PAC\)-abstract-representativeness. The corresponding actual representation
theorems then follow from the abstract havingness and enoughness theorems.

%%%%%%%%%%%%%%%
%%%%%%%%%%%%%%%
\subsection{\texorpdfstring{$\PAC$}{PAC}-representativeness via coherent profile covers}
\label{subsec:07-01-finite-agent-pac-representativeness-via-coherent-profile-covers}

%%%%%%%%%%%%%%%
%%%%%%%%%%%%%%%
\begin{definition}[\(\PAC\)-representative actual neighborhood frames]
\label{def:30-finite-agent-pac-representative-actual-neighborhood-frames}

Let
\[
\FACNF = (\FST, \{\Facnei_\FCC \mid \FCC \subseteq \FAG\})
\]
be an actual neighborhood frame, where \(\FAG\) is finite and nonempty. We say
that \(\FACNF\) is \Fdefs{finite-agent \(\PAC\)-representative}, or simply
\Fdefs{\(\PAC\)-representative} in the finite-agent setting, if it satisfies
the following conditions.

\begin{enumerate}

\item \Fdefs{Actual triviality of the empty coalition.}
For every \(s \in \FST\), if \(\Facnei_\emptyset(s)\) is nonempty, then it is a
singleton.

\item \Fdefs{Liveness.}
For every \(s \in \FST\) and every \(\FCC \subseteq \FAG\),
\[
\emptyset \notin \Facnei_\FCC(s).
\]

\item \Fdefs{Coherent coverability.}
For every \(s \in \FST\), there exists a (possibly empty) coherent profile cover
\[
\Omega_s \subseteq \Gamma_s
\]
at \(s\).

Equivalently, by
Proposition~\ref{prop:02-coherent-coverability-via-the-coherent-core},
the coherent core \(\operatorname{Core}_s\) at \(s\) is a coherent profile
cover.

\end{enumerate}

\end{definition}

The liveness condition in the preceding definition can be reduced to a
condition on the grand-coalition neighborhood.

%%%%%%%%%%%%%%%
%%%%%%%%%%%%%%%
\begin{lemma}[Liveness reduces to the grand-coalition neighborhood]
\label{lem:liveness-reduces-to-the-grand-coalition-neighborhood}

Let
\[
\FACNF=(\FST,\{\Facnei_\FCC\mid\FCC\subseteq\FAG\})
\]
be a coherently coverable actual neighborhood frame, where \(\FAG\) is finite
and nonempty. Then the following are equivalent:

\begin{enumerate}

\item For every \(s\in\FST\) and every \(\FCC\subseteq\FAG\),
\[
\emptyset\notin\Facnei_\FCC(s).
\]

\item For every \(s\in\FST\),
\[
\emptyset\notin\Facnei_\FAG(s).
\]

\end{enumerate}

\end{lemma}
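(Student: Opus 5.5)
The implication from the first item to the second is immediate: take \(\FCC=\FAG\). The substance is the converse, and the plan is to propagate nonemptiness from the grand-coalition component of a profile down to all of its other components, using only the structure of a coherent profile cover.

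Assume the second item. Fix \(s\in\FST\) and \(\FCC\subseteq\FAG\), and suppose, for contradiction, that \(\emptyset\in\Facnei_\FCC(s)\). By coherent coverability there is a coherent profile cover \(\Omega_s\) at \(s\). By the covering condition we have
\(\Facnei_\FCC(s)=\{\gamma_\FCC\mid\gamma\in\Omega_s\}\),
so we may choose \(\gamma\in\Omega_s\) with \(\gamma_\FCC=\emptyset\).

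Since \(\gamma\) is a power profile, inclusion monotonicity applied to \(\FCC\subseteq\FAG\) gives
\(\gamma_\FAG\subseteq\gamma_\FCC=\emptyset\).
Hence \(\gamma_\FAG=\emptyset\). But membership gives \(\gamma_\FAG\in\Facnei_\FAG(s)\), so \(\emptyset\in\Facnei_\FAG(s)\), contradicting the second item. Therefore \(\emptyset\notin\Facnei_\FCC(s)\).

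There is no real obstacle here. Coherence of \(\Omega_s\) is not even needed; only the covering property and the profile axioms are used. The one point requiring care is that the witnessing profile with \(\gamma_\FCC=\emptyset\) must exist, and this is exactly what the covering equation provides. If \(\Omega_s\) is empty, the covering equation forces \(\Facnei_\FCC(s)=\emptyset\), so the claim holds vacuously.
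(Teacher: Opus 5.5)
Your proof is correct and follows the paper's argument essentially verbatim: you pick a covering profile $\gamma$ with $\gamma_\FCC=\emptyset$, use inclusion monotonicity to get $\gamma_\FAG=\emptyset$, and contradict item 2 via the membership condition. Your remark that coherence itself is never used, only the covering condition and the profile axioms, is also accurate for the paper's proof.
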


%%%%%%%%%%%%%%%
%%%%%%%%%%%%%%%
\begin{proof}

The implication from \(1\) to \(2\) is immediate.

Conversely, assume \(2\). Fix \(s\in\FST\) and
\(\FCC\subseteq\FAG\). Since \(\FACNF\) is coherently coverable, let
\(\Omega_s\subseteq\Gamma_s\) be a coherent profile cover at \(s\).

Suppose, toward a contradiction, that
\[
\emptyset\in\Facnei_\FCC(s).
\]
By the covering condition, there exists \(\gamma\in\Omega_s\) such that
\[
\gamma_\FCC=\emptyset.
\]
Since \(\gamma\) is a power profile and
\(\FCC\subseteq\FAG\), inclusion monotonicity gives
\[
\gamma_\FAG\subseteq\gamma_\FCC=\emptyset.
\]
Hence \(\gamma_\FAG=\emptyset\). By the membership condition for power
profiles,
\[
\emptyset=\gamma_\FAG\in\Facnei_\FAG(s),
\]
contrary to \(2\). Therefore
\[
\emptyset\notin\Facnei_\FCC(s).
\]
Since \(s\) and \(\FCC\) were arbitrary, \(1\) follows.

\end{proof}

%%%%%%%%%%%%%%%
%%%%%%%%%%%%%%%
\begin{definition}[Seriality and determinism of \(\PAC\)-representative actual neighborhood frames]
\label{def:31-seriality-and-determinism-of-pac-representative-actual-neighborhood-frames}

Let
\[
\FACNF=(\FST,\{\Facnei_\FCC \mid \FCC \subseteq \FAG\})
\]
be an \(\PAC\)-representative actual neighborhood frame.

\begin{itemize}

\item
\(\FACNF\) is \Fdefs{\(\PAC\)-serial} if, for every \(s \in \FST\) and every
\(\FCC \subseteq \FAG\),
\[
\Facnei_\FCC(s) \neq \emptyset.
\]

\item
\(\FACNF\) is \Fdefs{\(\PAC\)-deterministic} if, for every \(s \in \FST\) and
every \(X \in \Facnei_\FAG(s)\), the set \(X\) is a singleton.

\end{itemize}

\end{definition}

As before, we let the symbols \(\mathtt{S}\) and \(\mathtt{D}\) denote
seriality and determinism, respectively. We write
\[
\epsilon,\mathtt{S},\mathtt{D},\mathtt{SD}
\]
for the four possible combinations of these two properties, where \(\epsilon\)
denotes the case in which neither property is imposed.

For
\[
\FXX \in \{\epsilon,\mathtt{S},\mathtt{D},\mathtt{SD}\},
\]
an \(\PAC\)-representative actual neighborhood frame is a
\Fdefs{\(\PAC\)-representative actual neighborhood \(\FXX\)-frame} if it
satisfies every \(\PAC\)-property whose symbol occurs in \(\FXX\).

The following result gives an alternative definition for seriality that will be implicitly used later.

%%%%%%%%%%%%%%%
%%%%%%%%%%%%%%%
\begin{lemma}[Seriality reduces to the empty-coalition neighborhood]
\label{lem:11-seriality-reduces-to-the-empty-coalition-neighborhood}

Let \(\FACNF\) be an \(\PAC\)-representative actual neighborhood frame. Then
\(\FACNF\) is \(\PAC\)-serial iff, for every \(s \in \FST\),
\[
\Facnei_\emptyset(s) \neq \emptyset.
\]

\end{lemma}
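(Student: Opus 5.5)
The forward direction is immediate: if \(\FACNF\) is \(\PAC\)-serial, then \(\Facnei_\FCC(s)\neq\emptyset\) for every coalition \(\FCC\), and in particular for \(\FCC=\emptyset\).

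For the converse, fix \(s\in\FST\) and assume \(\Facnei_\emptyset(s)\neq\emptyset\). The plan is to produce a power profile lying in a coherent profile cover at \(s\). Such a profile has a component in every \(\Facnei_\FCC(s)\), which gives seriality at once.

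By coherent coverability (the third clause of Definition~\ref{def:30-finite-agent-pac-representative-actual-neighborhood-frames}), choose a coherent profile cover \(\Omega_s\subseteq\Gamma_s\) at \(s\). The covering condition for the empty coalition reads
\[
\Facnei_\emptyset(s)=\{\gamma_\emptyset\mid \gamma\in\Omega_s\}.
\]
Since the left-hand side is nonempty, \(\Omega_s\) is nonempty; pick some \(\gamma\in\Omega_s\). For an arbitrary coalition \(\FCC\subseteq\FAG\), covering again gives
\[
\gamma_\FCC\in\{\delta_\FCC\mid\delta\in\Omega_s\}=\Facnei_\FCC(s).
\]
Alternatively, the membership clause of Definition~\ref{def:20-power-profiles} gives \(\gamma_\FCC\in\Facnei_\FCC(s)\) directly, since \(\gamma\in\Gamma_s\). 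Either way \(\Facnei_\FCC(s)\neq\emptyset\). As \(s\) and \(\FCC\) were arbitrary, \(\FACNF\) is \(\PAC\)-serial.

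The only point needing care is the step from \(\Facnei_\emptyset(s)\neq\emptyset\) to \(\Omega_s\neq\emptyset\). This is exactly where coverage for the empty coalition is used, and it is needed because the definition explicitly allows the cover to be empty. Note that liveness and empty-coalition triviality are not used.
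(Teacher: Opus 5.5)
Your proof is correct and is essentially the paper's argument. Coverage for the empty coalition turns a member of $\Facnei_\emptyset(s)$ into a profile in the coherent cover, and the membership clause for power profiles then gives a power for every coalition. The paper picks $X\in\Facnei_\emptyset(s)$ and a $\gamma$ with $\gamma_\emptyset=X$, which is the same step as your observation that $\Omega_s\neq\emptyset$.
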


%%%%%%%%%%%%%%%
%%%%%%%%%%%%%%%
\begin{proof}
The left-to-right direction is immediate, since \(\emptyset\subseteq\FAG\).

Conversely, suppose that
\[
\Facnei_\emptyset(s)\neq\emptyset
\]
for every \(s\in\FST\). Fix \(s\in\FST\) and \(\FCC\subseteq\FAG\). Since
\(\FACNF\) is \(\PAC\)-representative, let
\(\Omega_s\subseteq\Gamma_s\) be a coherent profile cover at \(s\). Choose
\(X\in\Facnei_\emptyset(s)\). By coverage, there is some
\(\gamma\in\Omega_s\) such that
\[
\gamma_\emptyset=X.
\]
Since \(\gamma\in\Gamma_s\), each component of \(\gamma\) is an actual power
for the corresponding coalition. In particular,
\[
\gamma_\FCC\in\Facnei_\FCC(s).
\]
Thus \(\Facnei_\FCC(s)\neq\emptyset\). Since \(s\) and \(\FCC\) were arbitrary,
\(\FACNF\) is \(\PAC\)-serial.
\end{proof}

%%%%%%%%%%%%%%%
%%%%%%%%%%%%%%%
\subsection{Actual representation without independence}
\label{subsec:07-02-actual-representation-without-independence}

%%%%%%%%%%%%%%%
%%%%%%%%%%%%%%%
\begin{theorem}[Equivalence of finite-agent \texorpdfstring{$\PAC$}{PAC}-representativeness and \texorpdfstring{$\PAC$}{PAC}-abstract-representativeness]
\label{thm:07-equivalence-of-finite-agent-pac-representativeness-and-pac-abstract-representativeness}

Let
\[
\FACNF=(\FST,\{\Facnei_\FCC\mid \FCC\subseteq\FAG\})
\]
be an actual neighborhood frame, where \(\FAG\) is finite and nonempty. Then
\(\FACNF\) is \(\PAC\)-representative in the finite-agent sense iff
\(\FACNF\) is \(\PAC\)-abstract-representative.

\end{theorem}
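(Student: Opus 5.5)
The plan is to prove the two directions separately, each state by state, using the translation lemmas of Section~\ref{subsec:06-05-translations-between-coherent-profile-covers-and-abstract-presentations}. Liveness is literally the same clause in both definitions, so in each direction it transfers unchanged. The only work is to convert between abstract presentability and the pair consisting of actual triviality of the empty coalition and coherent coverability.

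\emph{From abstract to finite-agent representativeness.} Assume \(\FACNF\) is \(\PAC\)-abstract-representative, and fix an abstract actual presentation \(\mathbb I=(\mathbb I_s)_{s\in\FST}\).
\begin{itemize}
\item Actual triviality of the empty coalition follows directly from Lemma~\ref{lem:02-empty-coalition-triviality-from-abstract-presentability}.
\item For coherent coverability, fix \(s\). By Lemma~\ref{lem:09-abstract-presentations-induce-coherent-profile-covers}, the profile set \(\Omega_s=\{\gamma^i\mid i\in I_s\}\) induced by \(\mathbb I_s\) is a coherent profile cover at \(s\). This includes the case \(I_s=\emptyset\), where \(\Omega_s=\emptyset\) and all neighborhoods at \(s\) are empty.
\end{itemize}
So all three clauses of Definition~\ref{def:30-finite-agent-pac-representative-actual-neighborhood-frames} hold.

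\emph{From finite-agent to abstract representativeness.} Assume \(\FACNF\) is \(\PAC\)-representative in the finite-agent sense. For each \(s\), choose a coherent profile cover \(\Omega_s\); by Proposition~\ref{prop:02-coherent-coverability-via-the-coherent-core} one may take \(\operatorname{Core}_s\). Then split into two cases.
\begin{itemize}
\item \(\Omega_s\neq\emptyset\). Actual triviality of the empty coalition holds at \(s\) by assumption, so Lemma~\ref{lem:10-coherent-profile-covers-unfold-into-abstract-presentations} applies. It shows that the index structure induced by the unfolding forest of \(\Omega_s\) is an abstract actual presentation at \(s\).
\item \(\Omega_s=\emptyset\). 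This case needs separate handling, because Lemma~\ref{lem:10-coherent-profile-covers-unfold-into-abstract-presentations} assumes nonemptiness. Coverage gives \(\Facnei_\FCC(s)=\{\gamma_\FCC\mid\gamma\in\emptyset\}=\emptyset\) for every \(\FCC\). Take \(I_s=\emptyset\) with the empty grand-outcome map and empty equivalence relations. Grand compatibility holds vacuously. Coverage holds because both sides are empty for every coalition, as the remark after Definition~\ref{def:15-abstract-actual-presentations} anticipates.
\end{itemize}
Collecting the local presentations over all \(s\in\FST\), using the axiom of choice, gives an abstract actual presentation \(\mathbb I=(\mathbb I_s)_{s\in\FST}\). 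Together with liveness, \(\FACNF\) is \(\PAC\)-abstract-representative.

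All the substantive work has already been done in Lemmas~\ref{lem:09-abstract-presentations-induce-coherent-profile-covers} and~\ref{lem:10-coherent-profile-covers-unfold-into-abstract-presentations}; this theorem only assembles them. The one point that needs care is the nonemptiness hypothesis of Lemma~\ref{lem:10-coherent-profile-covers-unfold-into-abstract-presentations}, which is why the empty-cover case is treated on its own. I would also check that the triviality hypothesis of that lemma is supplied pointwise at each \(s\) by clause~1 of Definition~\ref{def:30-finite-agent-pac-representative-actual-neighborhood-frames}, so nothing global is needed.
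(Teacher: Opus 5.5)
Your proposal is correct and follows the paper's proof essentially step for step. Lemma~\ref{lem:02-empty-coalition-triviality-from-abstract-presentability} and Lemma~\ref{lem:09-abstract-presentations-induce-coherent-profile-covers} give one direction, and Lemma~\ref{lem:10-coherent-profile-covers-unfold-into-abstract-presentations} plus a separate empty-index presentation for the empty-cover case give the other. The only cosmetic difference is that the paper works directly with the canonical cover \(\operatorname{Core}_s\), so no appeal to choice is needed when assembling the family \((\mathbb I_s)_{s\in\FST}\).
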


%%%%%%%%%%%%%%%
%%%%%%%%%%%%%%%
\begin{proof}

We prove the two directions separately.

\medskip
\noindent
\((\Rightarrow)\)
Assume that \(\FACNF\) is \(\PAC\)-representative in the finite-agent sense.
Then \(\FACNF\) satisfies actual triviality of the empty coalition, liveness,
and coherent coverability.

It remains to show abstract presentability. Fix \(s\in\FST\). By
Proposition~\ref{prop:02-coherent-coverability-via-the-coherent-core},
the coherent core \(\operatorname{Core}_s\) is a coherent profile cover at
\(s\).

There are two cases.

First suppose that
\[
\operatorname{Core}_s=\emptyset.
\]
Since \(\operatorname{Core}_s\) is a coherent profile cover at \(s\), its
covering condition gives, for every coalition \(\FCC\subseteq\FAG\),
\[
\Facnei_\FCC(s)
=
\{\gamma_\FCC\mid \gamma\in\operatorname{Core}_s\}
=
\emptyset.
\]
Define
\[
I_s=\emptyset,
\]
let \(o_s:I_s\to\mathcal P(\FST)\) be the empty map, and, for each
\(a\in\FAG\), let \(\equiv^s_a\) be the empty relation on \(I_s\). Then each
\(\equiv^s_a\) is an equivalence relation on \(I_s\). Grand compatibility is
vacuous, and for every coalition \(\FCC\subseteq\FAG\),
\[
\{\rho^s_\FCC(i)\mid i\in I_s\}
=
\emptyset
=
\Facnei_\FCC(s).
\]
Thus
\[
\mathbb I_s=(I_s,o_s,\{\equiv^s_a\}_{a\in\FAG})
\]
is an abstract actual presentation of \(\FACNF\) at \(s\).

Now suppose that
\[
\operatorname{Core}_s\neq\emptyset.
\]
Since actual triviality of the empty coalition holds at \(s\), and
\(\operatorname{Core}_s\) is a nonempty coherent profile cover at \(s\),
Lemma~\ref{lem:10-coherent-profile-covers-unfold-into-abstract-presentations}
yields an abstract actual presentation
\[
\mathbb I_s
\]
of \(\FACNF\) at \(s\).

Since \(s\in\FST\) was arbitrary, we obtain a family
\[
\mathbb I=(\mathbb I_s)_{s\in\FST}
\]
which is an abstract actual presentation of \(\FACNF\). Therefore \(\FACNF\)
satisfies abstract presentability. Together with liveness, this shows that \(\FACNF\) is
\(\PAC\)-abstract-representative.

\medskip
\noindent
\((\Leftarrow)\)
Assume that \(\FACNF\) is \(\PAC\)-abstract-representative. Then \(\FACNF\)
satisfies liveness. Let
\[
\mathbb I=(\mathbb I_s)_{s\in\FST}
\]
be an abstract actual presentation of \(\FACNF\). By
Lemma~\ref{lem:02-empty-coalition-triviality-from-abstract-presentability},
\(\FACNF\) is trivial for the empty coalition.

For each \(s\in\FST\), Lemma~\ref{lem:09-abstract-presentations-induce-coherent-profile-covers}
implies that the profile set induced by \(\mathbb I_s\) is a coherent profile
cover at \(s\). Hence \(\FACNF\) is coherently coverable.

Thus \(\FACNF\) satisfies actual triviality of the empty coalition, liveness,
and coherent coverability. Therefore \(\FACNF\) is \(\PAC\)-representative in
the finite-agent sense.

\end{proof}

The equivalence theorem lets us convert the abstract actual representation theorem
into a purely neighborhood-level representation theorem for the four classes
that do not impose independence of agents.

%%%%%%%%%%%%%%%
%%%%%%%%%%%%%%%
\begin{corollary}[Actual representation theorem for finite-agent frames without independence]
\label{cor:01-actual-representation-theorem-for-finite-agent-frames-without-independence}

Let \(\FAG\) be finite and nonempty, and let
\[
\FXX\in\{\epsilon,\mathtt{S},\mathtt{D},\mathtt{SD}\}.
\]
Over \(\FAG\), the general concurrent game \(\FXX\)-frames are characterized,
up to \(\PAC\)-representability, by the \(\PAC\)-representative actual
neighborhood \(\FXX\)-frames.

\end{corollary}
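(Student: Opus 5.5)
The corollary should follow by chaining Theorem~\ref{thm:07-equivalence-of-finite-agent-pac-representativeness-and-pac-abstract-representativeness} with the abstract havingness and enoughness theorems (Theorems~\ref{thm:05-abstract-actual-havingness-theorem} and~\ref{thm:06-abstract-actual-enoughness-theorem}). Following the havingness/enoughness pattern of Section~\ref{subsec:02-04-representation-of-action-frames-by-alpha-and-actual-neighborhood-frames}, I will prove two statements. First, every actual neighborhood frame that \(\PAC\)-represents a general concurrent game \(\FXX\)-frame is a \(\PAC\)-representative actual neighborhood \(\FXX\)-frame. Second, every \(\PAC\)-representative actual neighborhood \(\FXX\)-frame \(\PAC\)-represents some general concurrent game \(\FXX\)-frame.

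The one preliminary observation I need is that, for \(\FXX\in\{\epsilon,\mathtt S,\mathtt D,\mathtt{SD}\}\), the extra conditions agree on both sides. \(\PAC\)-seriality in Definition~\ref{def:31-seriality-and-determinism-of-pac-representative-actual-neighborhood-frames} and \(\PAC\)-abstract-seriality in Definition~\ref{def:18-seriality-independence-and-determinism-at-the-abstract-level} are literally the same condition on the neighborhood functions, and likewise for determinism. Independence, the only index-level condition, does not occur in these four labels. Hence, by Theorem~\ref{thm:07-equivalence-of-finite-agent-pac-representativeness-and-pac-abstract-representativeness}, a frame is a \(\PAC\)-representative \(\FXX\)-frame iff it is a \(\PAC\)-abstract-representative \(\FXX\)-frame.

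\emph{Havingness.} Suppose \(\FACNF\) \(\PAC\)-represents a general concurrent game \(\FXX\)-frame \(\FGCGF\). Then \(\Facnei_\FCC=\FACEF_\FCC\) for every \(\FCC\), so \(\FACNF\) is exactly the induced frame of Theorem~\ref{thm:05-abstract-actual-havingness-theorem}. That theorem makes \(\FACNF\) a \(\PAC\)-abstract-representative \(\FXX\)-frame, and the observation above transfers this to a \(\PAC\)-representative \(\FXX\)-frame.

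\emph{Enoughness.} Suppose \(\FACNF\) is a \(\PAC\)-representative \(\FXX\)-frame. By the equivalence and the observation, it is a \(\PAC\)-abstract-representative \(\FXX\)-frame. Theorem~\ref{thm:06-abstract-actual-enoughness-theorem} then supplies a general concurrent game \(\FXX\)-frame \(\PAC\)-represented by \(\FACNF\).

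No step here is a genuine obstacle; the substantive work is already contained in Theorem~\ref{thm:07-equivalence-of-finite-agent-pac-representativeness-and-pac-abstract-representativeness}. The only point needing care is confirming that the seriality and determinism clauses coincide verbatim across the two definitions. This is precisely what fails for \(\mathtt I\), and it explains why the corollary is restricted to the four labels without independence.
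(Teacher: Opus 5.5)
Your proposal is correct and follows the paper's proof essentially step for step. Both directions chain the equivalence theorem (Theorem~\ref{thm:07-equivalence-of-finite-agent-pac-representativeness-and-pac-abstract-representativeness}) with the abstract havingness and enoughness theorems, relying on the fact that \(\PAC\)-seriality and \(\PAC\)-determinism are literally the same neighborhood-level conditions in the abstract and coherent-cover formulations.
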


%%%%%%%%%%%%%%%
%%%%%%%%%%%%%%%
\begin{proof}

We prove the two directions.

\medskip
\noindent
\textbf{Havingness.}
Let \(\FGCGF\) be a general concurrent game \(\FXX\)-frame, and let
\(\FACNF\) be the actual neighborhood frame induced by \(\FGCGF\). By
Theorem~\ref{thm:05-abstract-actual-havingness-theorem}, \(\FACNF\) is a
\(\PAC\)-abstract-representative actual neighborhood \(\FXX\)-frame. In
particular, \(\FACNF\) is \(\PAC\)-abstract-representative. Hence, by
Theorem~\ref{thm:07-equivalence-of-finite-agent-pac-representativeness-and-pac-abstract-representativeness},
\(\FACNF\) is \(\PAC\)-representative.

Since \(\FXX\) contains no independence requirement, its only possible
additional requirements are seriality and determinism. These are the same
frame-level conditions in the abstract and coherent-cover formulations.
Therefore \(\FACNF\) is an \(\PAC\)-representative actual neighborhood
\(\FXX\)-frame.

\medskip
\noindent
\textbf{Enoughness.}
Let \(\FACNF\) be an \(\PAC\)-representative actual neighborhood
\(\FXX\)-frame. By
Theorem~\ref{thm:07-equivalence-of-finite-agent-pac-representativeness-and-pac-abstract-representativeness},
\(\FACNF\) is \(\PAC\)-abstract-representative. Again, since \(\FXX\) contains
no independence requirement, the only possible additional requirements are
seriality and determinism, and these transfer unchanged. Hence \(\FACNF\) is a
\(\PAC\)-abstract-representative actual neighborhood \(\FXX\)-frame.

By Theorem~\ref{thm:06-abstract-actual-enoughness-theorem}, there exists a general
concurrent game \(\FXX\)-frame \(\FGCGF\) such that \(\FGCGF\) is
\(\PAC\)-representable by \(\FACNF\).

\end{proof}

%%%%%%%%%%%%%%%
%%%%%%%%%%%%%%%
\subsubsection*{On finite actions}

Even when the state space is finite and the chosen coherent profile covers are
finite, the construction above does not in general yield a finite action set.
Indeed, the unfolding forest contains all finite labelled paths generated from
a coherent profile cover and may therefore contain infinitely many nodes.
More importantly, these nodes may determine infinitely many equivalence
classes for an agent. For example, repeatedly extending a path along edges
whose labels do not contain an agent \(a\) can produce nodes that are pairwise
inequivalent under \(\approx^s_a\).

Since the action set in the abstract enoughness construction is defined by
\[
\FAC^{\mathbb I}
=
\{*\}
\cup
\{\,(s,a,[i]^s_a)
\mid
s\in\FST,\ a\in\FAG,\ i\in I_s\,\},
\]
infinitely many agent-equivalence classes may give rise to infinitely many
actions. Thus, the representation theorem does not guarantee that the
representing action set can be chosen finite.

%%%%%%%%%%%%%%%
%%%%%%%%%%%%%%%
\subsubsection*{Comparison with weak and two-agent \texorpdfstring{$\PAC$}{PAC}-representativeness}

We briefly record the relationship between the finite-agent notion introduced
above and the weak finite-agent conditions of
Definition~\ref{def:14-weak-finite-agent-pac-representativeness}.

First, finite-agent \(\PAC\)-representativeness implies weak finite-agent
\(\PAC\)-representativeness. Indeed, by
Corollary~\ref{cor:01-actual-representation-theorem-for-finite-agent-frames-without-independence}
with \(\FXX=\epsilon\), every finite-agent \(\PAC\)-representative actual
neighborhood frame \(\PAC\)-represents some general concurrent game frame.
By Proposition~\ref{prop:01-necessity-of-the-weak-finite-agent-conditions}, every actual
neighborhood frame induced in this way satisfies the weak finite-agent
conditions.

Second, the converse fails already for three agents. The frame in
Example~\ref{ex:02-example} is weak finite-agent
\(\PAC\)-representative but does not \(\PAC\)-represent any general concurrent
game frame. Hence, again by
Corollary~\ref{cor:01-actual-representation-theorem-for-finite-agent-frames-without-independence},
it cannot be finite-agent \(\PAC\)-representative.

Third, for two agents the two notions coincide. When \(|\FAG|=2\), the weak
finite-agent conditions are exactly the two-agent \(\PAC\)-representative
conditions of
Definition~\ref{def:10-two-agent-pac-representative-actual-neighborhood-frames}.
By the two-agent actual enoughness theorem,
Theorem~\ref{thm:02-two-agent-actual-enoughness-theorem}, every such frame
\(\PAC\)-represents some general concurrent game frame. By the havingness
direction of
Corollary~\ref{cor:01-actual-representation-theorem-for-finite-agent-frames-without-independence},
it is finite-agent \(\PAC\)-representative. The converse is the first point.

%%%%%%%%%%%%%%%
%%%%%%%%%%%%%%%
\section{Alpha representation without assuming independence via actual representation}
\label{sec:08-alpha-representation-without-assuming-independence-via-actual-representation}

In this section, we show that, for alpha powers, the four classes of general
concurrent game frames that do not involve independence are represented by the
corresponding classes of \(\PAL\)-representative alpha neighborhood frames.
Here, \(\PAL\)-representativeness is understood as in
Definition~\ref{def:12-finite-agent-pal-representative-alpha-neighborhood-frames}.
The frame conditions used below, namely \(\PAL\)-seriality and
\(\PAL\)-determinism, are understood as in
Definition~\ref{def:13-seriality-independence-and-determinism-of-pal-representative-alpha-neighborhood-frames}.

The havingness direction for alpha powers over finite-agent general concurrent
game frames has already been recalled in
Theorem~\ref{thm:03-finite-agent-alpha-havingness-theorem}. Hence, it remains
to prove the enoughness direction.
The proof proceeds by reduction to the actual representation theorem obtained
in Subsection~\ref{subsec:07-02-actual-representation-without-independence}.
Given an \(\PAL\)-representative alpha neighborhood frame, we first extract an
actual basis whose upward closure recovers the alpha neighborhoods. We then show
that this actual basis is \(\PAC\)-representative, and apply the enoughness
direction of
Corollary~\ref{cor:01-actual-representation-theorem-for-finite-agent-frames-without-independence}.

We first record a simple consequence of the basic \(\PAL\)-representativeness
conditions. It shows that, at a fixed state, the emptiness of alpha
neighborhoods is independent of the coalition.

%%%%%%%%%%%%%%%
%%%%%%%%%%%%%%%
\begin{fact}[Emptiness transfer for alpha neighborhoods]
\label{fact:02-emptiness-transfer-for-alpha-neighborhoods}

Let
\[
\FALNF=(\FST,\{\Falnei_\FCC\mid \FCC\subseteq\FAG\})
\]
be an \(\PAL\)-representative alpha neighborhood frame. For all \(s\in\FST\)
and \(\FCC\subseteq\FAG\), we have
\[
\Falnei_\FCC(s)=\emptyset
\quad\text{if and only if}\quad
\Falnei_\emptyset(s)=\emptyset.
\]

\end{fact}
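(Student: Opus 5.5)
The fact follows from coalition monotonicity and groundedness alone; I would prove each direction separately.

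For the direction from right to left, assume \(\Falnei_\emptyset(s)=\emptyset\) and suppose, for contradiction, that \(X\in\Falnei_\FCC(s)\) for some coalition \(\FCC\). By groundedness of alpha powers there is \(Y\in\Falnei_\FCC(s)\) with
\[
Y\subseteq\bigcup\Falneinc_\emptyset(s)\text{ and } Y\subseteq X.
\]
Since \(\Falnei_\emptyset(s)=\emptyset\), its nonmonotonic core is empty. Hence \(\bigcup\Falneinc_\emptyset(s)=\emptyset\), and so \(Y=\emptyset\). Then \(\emptyset\in\Falnei_\FCC(s)\), which contradicts liveness. Therefore \(\Falnei_\FCC(s)=\emptyset\).

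For the direction from left to right, assume \(\Falnei_\FCC(s)=\emptyset\). Coalition monotonicity applied to \(\emptyset\subseteq\FCC\) gives \(\Falnei_\emptyset(s)\subseteq\Falnei_\FCC(s)=\emptyset\). Hence \(\Falnei_\emptyset(s)=\emptyset\).

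Both directions are one-line arguments. The only point needing care is the observation that an empty neighborhood has an empty core, so the union of the core is empty. This is immediate from Definition~\ref{def:03-nonmonotonic-cores}, because the core is a subset of the neighborhood. Neither alpha triviality of the empty coalition nor finiteness of \(\FST\) is needed.
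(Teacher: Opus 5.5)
Your proof is correct and is essentially the paper's own argument: coalition monotonicity along \(\emptyset\subseteq\FCC\) for one direction, and, for the other, the empty core making \(\bigcup\Falneinc_\emptyset(s)=\emptyset\), so that groundedness yields \(\emptyset\in\Falnei_\FCC(s)\), contradicting liveness. One small correction: your opening sentence says monotonicity and groundedness ``alone'' suffice, but the argument, like the paper's, also needs liveness.
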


%%%%%%%%%%%%%%%
%%%%%%%%%%%%%%%
\begin{proof}

Let \(s\in\FST\) and \(\FCC\subseteq\FAG\).

First suppose that \(\Falnei_\FCC(s)=\emptyset\). Since
\(\emptyset\subseteq\FCC\), coalition monotonicity of alpha powers gives
\[
\Falnei_\emptyset(s)\subseteq\Falnei_\FCC(s),
\]
and hence \(\Falnei_\emptyset(s)=\emptyset\).

Conversely, suppose that \(\Falnei_\emptyset(s)=\emptyset\). Then the
nonmonotonic core \(\Falneinc_\emptyset(s)\) is empty, and hence
\[
\bigcup\Falneinc_\emptyset(s)=\emptyset.
\]
If, toward a contradiction, \(\Falnei_\FCC(s)\neq\emptyset\), choose
\(X\in\Falnei_\FCC(s)\). By groundedness of alpha powers, there exists
\(Y\in\Falnei_\FCC(s)\) such that
\[
Y\subseteq\bigcup\Falneinc_\emptyset(s)
\quad\text{and}\quad
Y\subseteq X.
\]
Thus \(Y=\emptyset\), contradicting liveness. Therefore
\(\Falnei_\FCC(s)=\emptyset\).

\end{proof}

The key step is to extract an actual basis from a given alpha neighborhood
frame.

\begin{lemma}[Actual-basis lemma for finite-agent alpha frames without assuming independence]
\label{lem:12-actual-basis-lemma-for-finite-agent-alpha-frames-without-assuming-independence}

Let \(\FAG\) be a finite nonempty set of agents, and let
\[
\FXX\in\{\epsilon,\mathtt{S},\mathtt{D},\mathtt{SD}\}.
\]
Let
\[
\FALNF=(\FST,\{\Falnei_\FCC\mid \FCC\subseteq\FAG\})
\]
be an \(\PAL\)-representative alpha neighborhood \(\FXX\)-frame. Then there
exists an \(\PAC\)-representative actual neighborhood \(\FXX\)-frame
\[
\FACNF=(\FST,\{\Facnei_\FCC\mid \FCC\subseteq\FAG\})
\]
such that, for every \(s\in\FST\) and every \(\FCC\subseteq\FAG\),
\[
\Falnei_\FCC(s)
=
\{Y\subseteq\FST
\mid
\text{there exists }X\in\Facnei_\FCC(s)\text{ such that }X\subseteq Y
\}.
\tag{U}
\label{eq:12-alpha-upward-closure-of-basis}
\]

\end{lemma}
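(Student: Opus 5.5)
The plan is to build the actual basis directly as an abstract actual presentation at each state. I would then transfer it to the coherent-cover formulation. Concretely, for each $s$ I construct $\mathbb I_s=(I_s,o_s,\{\equiv^s_a\}_{a\in\FAG})$ and let $\Facnei_\FCC(s)=\{\rho^s_\FCC(i)\mid i\in I_s\}$. This makes coverage automatic. After checking liveness, $\FACNF$ is $\PAC$-abstract-representative, hence $\PAC$-representative by Theorem~\ref{thm:07-equivalence-of-finite-agent-pac-representativeness-and-pac-abstract-representativeness}. Seriality and determinism are then checked directly on neighborhoods.

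If $\Falnei_\emptyset(s)=\emptyset$, then by Fact~\ref{fact:02-emptiness-transfer-for-alpha-neighborhoods} every $\Falnei_\FCC(s)$ is empty, and I take $I_s=\emptyset$. Otherwise, let $R=\bigcup\Falneinc_\emptyset(s)$; this is the unique core element, so $R\in\Falnei_\emptyset(s)$. Let $G_\FCC=\{Y\in\Falnei_\FCC(s)\mid Y\subseteq R\}$, which is nonempty by groundedness. Liveness gives $\emptyset\notin G_\FCC$, and coalition monotonicity gives $G_\FCC\subseteq G_\FDD$ for $\FCC\subseteq\FDD$.

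The indices are agent-wise choices $i=(E_a,Y_a)_{a\in\FAG}$ with $a\in E_a$ and $Y_a\in G_{E_a}$. Here $(\{a\},R)$ plays the role of an ``idle'' choice. Call $E$ \emph{formed} in $i$ if every $b\in E$ chooses the same pair $(E,Y)$. I restrict $I_s$ to indices whose formed coalitions are linearly ordered by $\subseteq$ and whose attached sets decrease along the chain. The outcome $o_s(i)$ is the set $Y$ attached to the largest formed coalition, or $R$ if none is formed. The relation $\equiv^s_a$ is equality of $a$'s component, so grand compatibility is trivial.

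The key computation is $\rho^s_\FCC(i)$. I expect it to equal the attached set $Y$ of the largest coalition that is formed in $i$ and contained in $\FCC$, or $R$ if there is none.
\begin{itemize}
\item The inclusion $\supseteq$ holds because $i$ itself agrees with itself on $\FCC$.
\item The inclusion $\subseteq$ holds because any index agreeing with $i$ on $\FCC$ still forms that coalition. By the chain and decreasing conditions, its outcome is then a subset of $Y$.
\end{itemize}
Such a $Y$ lies in $G_E\subseteq G_\FCC\subseteq\Falnei_\FCC(s)$, and $R\in\Falnei_\emptyset(s)\subseteq\Falnei_\FCC(s)$. So every presented power lies in $\Falnei_\FCC(s)$ and is nonempty, which gives liveness and the inclusion $\supseteq$ in (U) via upward closure. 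For the inclusion $\subseteq$ in (U), take $X\in\Falnei_\FCC(s)$ with $\FCC\neq\emptyset$. Groundedness yields $Y\in G_\FCC$ with $Y\subseteq X$. The index in which all of $\FCC$ chooses $(\FCC,Y)$ and everyone else idles has $\rho^s_\FCC=Y$. For $\FCC=\emptyset$ the single power is $R$. Seriality is immediate, since $I_s\neq\emptyset$ whenever $\Falnei_\emptyset(s)\neq\emptyset$.

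The main obstacle is the $\subseteq$ half of the key computation. When agents outside $\FCC$ deviate, they may form larger coalitions, or coalitions overlapping $\FCC$, without violating the chain constraint, and such coalitions could carry sets not contained in $Y$. I may therefore have to sharpen the admissibility condition, for instance by requiring every formed coalition's set to be contained in the sets of all formed coalitions it contains. Failing that, I would replace chains by the rule that a formed coalition can only be enlarged by a superset carrying a subset of its $Y$.

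Determinism is the second delicate point. Under $\PAL$-determinism I would refine each outcome $Y$ to a single state: an index additionally carries a ``resolving'' state $t\in Y$, chosen jointly by the agents. The grand outcome becomes $\{t\}$, using that $\{t\}\in\Falneinc_\FAG(s)$ for $t\in R$ by condition (b). I then need to recheck that the union over resolutions recovers exactly $Y$ for every proper coalition.
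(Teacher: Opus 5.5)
Your route differs from the paper's. You build an abstract actual presentation, let $\Facnei$ be its presented powers, and then invoke Theorem~\ref{thm:07-equivalence-of-finite-agent-pac-representativeness-and-pac-abstract-representativeness}. The paper instead defines the basis directly as $\{X\in\Falnei_\FCC(s)\mid X\subseteq T_s\}$, with $\Falneinc_\FAG(s)$ for $\FAG$ under $\mathtt D$, and exhibits explicit coherent profile covers.

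In the non-deterministic case your construction can be repaired, and it then yields exactly the paper's basis. Several points need fixing:
\begin{itemize}
\item Each agent makes a single choice, so formed coalitions are automatically pairwise disjoint. Your chain condition therefore just says ``at most one formed coalition'', and the decreasing condition is vacuous.
\item Read literally, the idle pair $(\{a\},R)$ forms $\{a\}$. Then the all-idle index (when $|\FAG|\geq 2$) and your (U)-witness (when $\FCC\neq\FAG$) are inadmissible. Make idling a separate choice that forms nothing.
\item Once that is done, the obstacle you flag does not arise. An outsider cannot form a coalition containing a member of $E$, because that member's choice is fixed. Any second formed coalition makes the index inadmissible. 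So $\rho^s_\FCC(i)=Y$ exactly.
\item Your argument for $\supseteq$ in the ``$R$'' case is wrong as stated. The index $i$ contributes only $o(i)$, which is a proper subset of $R$ whenever $i$ forms some $E'\nsubseteq\FCC$ carrying a smaller set. You need the index that copies $i$ on $\FCC$ and idles elsewhere; it forms nothing and so has outcome $R$.
\end{itemize}

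The genuine gap is determinism. ``A resolving state chosen jointly by the agents'' is not yet a construction, and the natural readings fail. Suppose all agents must name the same $t$, with a default $d\in Y$ otherwise. Then a nonempty proper coalition whose members name $t_0$ realizes only $\{t_0,d\}$, not $Y$. If instead a designated agent names $t$, any proper coalition containing him pins the outcome down to a single state.

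What you need is a resolution $f(i)\in Y(i)$ that no proper coalition can control. For example, let each agent also choose $n_a\in\{0,\dots,|\FAG|-1\}$ and $t_a\in R$. Put $k=\sum_a n_a \bmod |\FAG|$, and let $f(i)=t_{a_k}$ if this lies in $Y(i)$, and a fixed element of $Y(i)$ otherwise. Any agent outside a proper $\FCC$ can point $k$ at himself and name any $y\in Y(i)$ without changing coarse choices. Hence $\rho^s_\FCC$ equals the coarse value for proper $\FCC$, while $\rho^s_\FAG(i)=\{f(i)\}$. You also need both clauses (a) and (b) of $\PAL$-determinism, not just (b), to get $\{t\}\in\Falneinc_\FAG(s)$ for $t\in R$. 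And you need groundedness plus liveness for (U) at $\FAG$.

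The paper never has to design such a mechanism. At the profile level, the profiles $\gamma^{\FCC_0,B,y}$ for different $y\in B$ agree below every proper coalition, because agreement never inspects the $\FAG$-component. So coherence is immediate, and the unfolding forest of Lemma~\ref{lem:10-coherent-profile-covers-unfold-into-abstract-presentations} produces the needed ``uncontrollable resolution'' indices automatically. That is precisely what working with coherent covers buys over building presentations by hand.
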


%%%%%%%%%%%%%%%
%%%%%%%%%%%%%%%
\begin{proof}

We construct the actual basis state by state. Fix \(s\in\FST\) temporarily.

\medskip
\noindent
\textbf{1. Definition of the actual basis.}

Suppose first that
\[
\Falnei_\emptyset(s)=\emptyset.
\]
By Fact~\ref{fact:02-emptiness-transfer-for-alpha-neighborhoods},
\[
\Falnei_\FCC(s)=\emptyset
\qquad
\text{for every }\FCC\subseteq\FAG.
\]
In this case define
\[
\Facnei_\FCC(s)=\emptyset
\qquad
\text{for every }\FCC\subseteq\FAG.
\]

Now suppose that
\[
\Falnei_\emptyset(s)\neq\emptyset.
\]
By alpha triviality of the empty coalition, the nonmonotonic core
\(\Falneinc_\emptyset(s)\) is a singleton. Write its unique element as
\(T_s\):
\[
\Falneinc_\emptyset(s)=\{T_s\}.
\]
Since \(T_s\in\Falnei_\emptyset(s)\), liveness gives
\[
T_s\neq\emptyset.
\]

If \(\mathtt{D}\notin\FXX\), define, for every \(\FCC\subseteq\FAG\),
\[
\Facnei_\FCC(s)
=
\{X\in\Falnei_\FCC(s)\mid X\subseteq T_s\}.
\tag{B$_0$}
\label{eq:13-alpha-basis-nondet}
\]

If \(\mathtt{D}\in\FXX\), define
\[
\Facnei_\FCC(s)
=
\begin{cases}
\Falneinc_\FAG(s),
&
\text{if }\FCC=\FAG,\\[1mm]
\{X\in\Falnei_\FCC(s)\mid X\subseteq T_s\},
&
\text{if }\FCC\subsetneq\FAG.
\end{cases}
\tag{B$_1$}
\label{eq:14-alpha-basis-det}
\]

In the deterministic case we shall use the following claim.

\smallskip
\noindent
\emph{Claim.}
Suppose that \(\mathtt{D}\in\FXX\) and
\(\Falnei_\emptyset(s)\neq\emptyset\). Then
\[
T_s=\bigcup\Falneinc_\FAG(s),
\tag{D}
\label{eq:15-alpha-det-successors}
\]
and consequently, for every \(x\in T_s\),
\[
\{x\}\in\Falneinc_\FAG(s).
\tag{D$'$}
\label{eq:16-alpha-det-singletons}
\]
Moreover, every member of \(\Falneinc_\FAG(s)\) is contained in \(T_s\).

\smallskip
\noindent
\emph{Proof of the claim.}
Since \(\FALNF\) is \(\PAL\)-deterministic and
\(\Falneinc_\emptyset(s)=\{T_s\}\), determinism gives
\[
T_s
=
\bigcup\Falneinc_\emptyset(s)
\subseteq
\bigcup\Falneinc_\FAG(s).
\]
For the converse inclusion, let \(Z\in\Falneinc_\FAG(s)\). Then
\(Z\in\Falnei_\FAG(s)\). By groundedness, there is
\(Y\in\Falnei_\FAG(s)\) such that
\[
Y\subseteq T_s
\text{ and }
Y\subseteq Z.
\]
Since \(Z\) is \(\subseteq\)-minimal in \(\Falnei_\FAG(s)\), the inclusion
\(Y\subseteq Z\) forces \(Y=Z\). Hence \(Z\subseteq T_s\). As \(Z\) was
arbitrary,
\[
\bigcup\Falneinc_\FAG(s)\subseteq T_s.
\]
This proves \eqref{eq:15-alpha-det-successors}, and also shows that every
member of \(\Falneinc_\FAG(s)\) is contained in \(T_s\). Finally, if
\(x\in T_s\), then by \eqref{eq:15-alpha-det-successors} there is
\(Z\in\Falneinc_\FAG(s)\) with \(x\in Z\). By \(\PAL\)-determinism, \(Z\) is a
singleton, so \(Z=\{x\}\). This proves \eqref{eq:16-alpha-det-singletons}.
\hfill\(\triangleleft\)

\medskip
\noindent
\textbf{2. The basis generates exactly the given alpha neighborhoods.}

We prove \eqref{eq:12-alpha-upward-closure-of-basis}. Let \(s\in\FST\) and
\(\FCC\subseteq\FAG\).

If \(\Falnei_\emptyset(s)=\emptyset\), then both sides of
\eqref{eq:12-alpha-upward-closure-of-basis} are empty by construction and
Fact~\ref{fact:02-emptiness-transfer-for-alpha-neighborhoods}.

Assume, then, that \(\Falnei_\emptyset(s)\neq\emptyset\). Every member of
\(\Facnei_\FCC(s)\) belongs to \(\Falnei_\FCC(s)\): this is immediate from
\eqref{eq:13-alpha-basis-nondet} and from the proper-coalition part of
\eqref{eq:14-alpha-basis-det}, while for \(\FCC=\FAG\) in the deterministic
case it follows from
\(\Falneinc_\FAG(s)\subseteq\Falnei_\FAG(s)\). Since each
\(\Falnei_\FCC(s)\) is closed under supersets, the right-hand side of
\eqref{eq:12-alpha-upward-closure-of-basis} is contained in
\(\Falnei_\FCC(s)\).

For the converse inclusion, let \(Y\in\Falnei_\FCC(s)\).
First suppose either \(\mathtt{D}\notin\FXX\), or
\(\FCC\subsetneq\FAG\). By groundedness, there exists
\(X\in\Falnei_\FCC(s)\) such that
\[
X\subseteq T_s
\text{ and }
X\subseteq Y.
\]
By the relevant clause of \eqref{eq:13-alpha-basis-nondet} or
\eqref{eq:14-alpha-basis-det}, this means \(X\in\Facnei_\FCC(s)\). Hence
\(Y\) belongs to the right-hand side of
\eqref{eq:12-alpha-upward-closure-of-basis}.

It remains to consider the case
\[
\mathtt{D}\in\FXX
\text{ and }
\FCC=\FAG.
\]
Again by groundedness, choose \(X\in\Falnei_\FAG(s)\) such that
\[
X\subseteq T_s
\text{ and }
X\subseteq Y.
\]
Liveness gives \(X\neq\emptyset\). Pick \(x\in X\). Then \(x\in T_s\), so by
\eqref{eq:16-alpha-det-singletons},
\[
\{x\}\in\Falneinc_\FAG(s)=\Facnei_\FAG(s).
\]
Since \(\{x\}\subseteq Y\), the set \(Y\) again belongs to the right-hand
side of \eqref{eq:12-alpha-upward-closure-of-basis}. Thus
\eqref{eq:12-alpha-upward-closure-of-basis} holds for all coalitions.

\medskip
\noindent
\textbf{3. The basis is \(\PAC\)-representative.}

We verify actual triviality of the empty coalition, liveness, and coherent
coverability.

\smallskip
\noindent
\emph{Actual triviality of the empty coalition.}
Fix \(s\in\FST\). If \(\Facnei_\emptyset(s)=\emptyset\), there is nothing to
prove. Otherwise \(\Falnei_\emptyset(s)\neq\emptyset\), so
\(\Falneinc_\emptyset(s)=\{T_s\}\). Since \(\FAG\neq\emptyset\), we have
\(\emptyset\subsetneq\FAG\); hence even in the deterministic case,
\[
\Facnei_\emptyset(s)
=
\{X\in\Falnei_\emptyset(s)\mid X\subseteq T_s\}.
\]
The set \(T_s\) itself belongs to the right-hand side. Conversely, if
\(X\in\Falnei_\emptyset(s)\) and \(X\subseteq T_s\), then the minimality of
\(T_s\) in \(\Falnei_\emptyset(s)\) rules out \(X\subset T_s\). Hence
\(X=T_s\), and therefore
\[
\Facnei_\emptyset(s)=\{T_s\}.
\]

\smallskip
\noindent
\emph{Liveness.}
Every member of every \(\Facnei_\FCC(s)\) is a member of the corresponding
alpha neighborhood \(\Falnei_\FCC(s)\). This is clear from the construction,
including the grand-coalition deterministic clause because
\(\Falneinc_\FAG(s)\subseteq\Falnei_\FAG(s)\). Since \(\FALNF\) satisfies
liveness, \(\emptyset\notin\Facnei_\FCC(s)\) for every \(s\) and \(\FCC\).

\smallskip
\noindent
\emph{Coherent coverability.}
Fix \(s\in\FST\). If \(\Falnei_\emptyset(s)=\emptyset\), then
\(\Facnei_\FCC(s)=\emptyset\) for every \(\FCC\subseteq\FAG\). Hence
\(\Gamma_s=\emptyset\), since no power profile can choose an
empty-coalition component. Taking \(\Omega_s=\emptyset\), we have
\(\Omega_s\subseteq\Gamma_s\), coherence holds trivially, and for every
\(\FCC\subseteq\FAG\),
\[
\{\gamma_\FCC\mid \gamma\in\Omega_s\}
=
\emptyset
=
\Facnei_\FCC(s).
\]
Thus \(\Omega_s\) is a coherent profile cover at \(s\).

Assume from now on that \(\Falnei_\emptyset(s)\neq\emptyset\), so that
\(T_s\) is defined.

\smallskip
\noindent
\emph{Case 1: \(\mathtt{D}\notin\FXX\).}
In this case, two elementary membership facts will be used repeatedly. First,
for every \(\FDD\subseteq\FAG\),
\[
T_s\in\Facnei_\FDD(s),
\]
because \(T_s\in\Falnei_\emptyset(s)\), coalition monotonicity gives
\(T_s\in\Falnei_\FDD(s)\), and \(T_s\subseteq T_s\). Second, if
\(B\in\Facnei_{\FCC_0}(s)\) and \(\FCC_0\subseteq\FDD\), then
\[
B\in\Facnei_\FDD(s),
\]
because \(B\in\Falnei_{\FCC_0}(s)\), coalition monotonicity gives
\(B\in\Falnei_\FDD(s)\), and the definition of the basis gives
\(B\subseteq T_s\).

For every \(\FCC_0\subseteq\FAG\) and every
\(B\in\Facnei_{\FCC_0}(s)\), define a family
\[
\gamma^{\FCC_0,B}
=
(\gamma^{\FCC_0,B}_\FDD)_{\FDD\subseteq\FAG}
\]
by
\[
\gamma^{\FCC_0,B}_\FDD
=
\begin{cases}
B,
&
\text{if }\FCC_0\subseteq\FDD,\\[1mm]
T_s,
&
\text{if }\FCC_0\nsubseteq\FDD.
\end{cases}
\tag{ND}
\label{eq:17-nondet-profile}
\]
Let
\[
\Omega_s
=
\{
\gamma^{\FCC_0,B}
\mid
\FCC_0\subseteq\FAG,
B\in\Facnei_{\FCC_0}(s)
\}.
\]

We first show that each \(\gamma^{\FCC_0,B}\) is a power profile at \(s\).
The membership condition follows from the two membership facts just proved:
for a coalition \(\FDD\), the \(\FDD\)-component is \(B\) if
\(\FCC_0\subseteq\FDD\), and is \(T_s\) otherwise. For inclusion monotonicity,
let \(\FDD\subseteq\FEE\). If \(\FCC_0\subseteq\FDD\), then
\(\FCC_0\subseteq\FEE\), and both components are \(B\). If
\(\FCC_0\nsubseteq\FEE\), then also \(\FCC_0\nsubseteq\FDD\), and both
components are \(T_s\). The only remaining possibility is
\(\FCC_0\nsubseteq\FDD\) and \(\FCC_0\subseteq\FEE\), in which case
\[
\gamma^{\FCC_0,B}_\FEE=B\subseteq T_s=
\gamma^{\FCC_0,B}_\FDD.
\]
Thus \(\gamma^{\FCC_0,B}\in\Gamma_s\), and so \(\Omega_s\subseteq\Gamma_s\).

Coverage holds. Fix \(\FDD\subseteq\FAG\). If
\(\gamma=\gamma^{\FCC_0,B}\in\Omega_s\), then the explicit definition
\eqref{eq:17-nondet-profile} and the membership facts above give
\(\gamma_\FDD\in\Facnei_\FDD(s)\). Hence
\[
\{\gamma_\FDD\mid \gamma\in\Omega_s\}
\subseteq
\Facnei_\FDD(s).
\]
Conversely, if \(B\in\Facnei_\FDD(s)\), then
\(\gamma^{\FDD,B}\in\Omega_s\) and
\[
\gamma^{\FDD,B}_\FDD=B.
\]
Therefore
\[
\Facnei_\FDD(s)
=
\{\gamma_\FDD\mid \gamma\in\Omega_s\}
\qquad
\text{for every }\FDD\subseteq\FAG.
\]

It remains to prove coherence. Let
\(\gamma=\gamma^{\FCC_0,B}\in\Omega_s\) and let \(\FDD\subseteq\FAG\). We show
\[
\gamma_\FDD
=
\bigcup
\{
\delta_\FAG
\mid
\delta\in\Omega_s
\text{ and }
\delta\equiv^s_\FDD\gamma
\}.
\tag{ND-Coh}
\label{eq:18-nondet-coherent}
\]
For the inclusion from right to left, suppose that
\(\delta\in\Omega_s\) and \(\delta\equiv^s_\FDD\gamma\). Since
\(\delta\) is a power profile and \(\FDD\subseteq\FAG\), inclusion
monotonicity gives
\[
\delta_\FAG\subseteq\delta_\FDD=\gamma_\FDD.
\]
Thus the right-hand side of \eqref{eq:18-nondet-coherent} is contained in
\(\gamma_\FDD\).

For the reverse inclusion, first suppose \(\FCC_0\subseteq\FDD\). Then
\(\gamma_\FDD=B\), and also \(\gamma_\FAG=B\). Since
\(\gamma\equiv^s_\FDD\gamma\), the profile \(\gamma\) itself witnesses that
all elements of \(B\) lie in the union on the right-hand side of
\eqref{eq:18-nondet-coherent}.

Now suppose \(\FCC_0\nsubseteq\FDD\). Then \(\gamma_\FDD=T_s\). For every
\(\FEE\subseteq\FDD\), we also have \(\FCC_0\nsubseteq\FEE\), and hence
\(\gamma_\FEE=T_s\). The profile \(\gamma^{\emptyset,T_s}\) belongs to
\(\Omega_s\), because \(T_s\in\Facnei_\emptyset(s)\). Moreover, for every
\(\FEE\subseteq\FDD\),
\[
\gamma^{\emptyset,T_s}_\FEE=T_s=\gamma_\FEE,
\]
so \(\gamma^{\emptyset,T_s}\equiv^s_\FDD\gamma\). Since
\[
\gamma^{\emptyset,T_s}_\FAG=T_s=\gamma_\FDD,
\]
the whole set \(T_s\) is contained in the union on the right-hand side of
\eqref{eq:18-nondet-coherent}. This proves
\eqref{eq:18-nondet-coherent}. Hence \(\Omega_s\) is a coherent profile cover
at \(s\) in the non-deterministic case.

\smallskip
\noindent
\emph{Case 2: \(\mathtt{D}\in\FXX\).}
We again record the membership facts that will be used below. For every proper
coalition \(\FDD\subsetneq\FAG\),
\[
T_s\in\Facnei_\FDD(s),
\]
by the same empty-coalition and coalition-monotonicity argument as above. If
\(\FDD\subsetneq\FAG\), \(\FCC_0\subseteq\FDD\), and
\(B\in\Facnei_{\FCC_0}(s)\), then necessarily
\(\FCC_0\subsetneq\FAG\), and
\[
B\in\Facnei_\FDD(s)
\]
by coalition monotonicity and \(B\subseteq T_s\). Finally,
\eqref{eq:16-alpha-det-singletons} says that, for every \(y\in T_s\),
\[
\{y\}\in\Facnei_\FAG(s).
\]
Also, whenever \(B\in\Facnei_\FAG(s)=\Falneinc_\FAG(s)\), the claim above
shows \(B\subseteq T_s\), and \(\PAL\)-determinism makes \(B\) a singleton.

For every \(\FCC_0\subseteq\FAG\), every
\(B\in\Facnei_{\FCC_0}(s)\), and every \(x\in B\), define
\[
\gamma^{\FCC_0,B,x}
=
(\gamma^{\FCC_0,B,x}_\FDD)_{\FDD\subseteq\FAG}
\]
by
\[
\gamma^{\FCC_0,B,x}_\FDD
=
\begin{cases}
\{x\},
&
\text{if }\FDD=\FAG,\\[1mm]
B,
&
\text{if }\FDD\subsetneq\FAG
\text{ and }
\FCC_0\subseteq\FDD,\\[1mm]
T_s,
&
\text{if }\FDD\subsetneq\FAG
\text{ and }
\FCC_0\nsubseteq\FDD.
\end{cases}
\tag{Det}
\label{eq:19-det-profile}
\]
Let
\[
\Omega_s
=
\{
\gamma^{\FCC_0,B,x}
\mid
\FCC_0\subseteq\FAG,
B\in\Facnei_{\FCC_0}(s),
x\in B
\}.
\]

We show that each \(\gamma^{\FCC_0,B,x}\) is a power profile at \(s\). For
proper coalitions, membership follows from the first two membership facts just
stated. For the grand coalition, if \(\FCC_0=\FAG\), then
\(B\in\Facnei_\FAG(s)=\Falneinc_\FAG(s)\), so \(B\) is a singleton; since
\(x\in B\), we have \(\{x\}=B\in\Facnei_\FAG(s)\). If
\(\FCC_0\subsetneq\FAG\), then \(B\subseteq T_s\), so \(x\in T_s\); hence
\(\{x\}\in\Facnei_\FAG(s)\) by \eqref{eq:16-alpha-det-singletons}.

For inclusion monotonicity, let \(\FDD\subseteq\FEE\). If
\(\FEE\subsetneq\FAG\), the same three-case argument used in Case~1 gives
\[
\gamma^{\FCC_0,B,x}_\FEE
\subseteq
\gamma^{\FCC_0,B,x}_\FDD.
\]
It remains to consider \(\FEE=\FAG\). If \(\FDD=\FAG\), the inclusion is
trivial. If \(\FDD\subsetneq\FAG\), then
\(\gamma^{\FCC_0,B,x}_\FAG=\{x\}\). The \(\FDD\)-component is either \(B\),
which contains \(x\), or \(T_s\). In the latter case \(x\in T_s\): if
\(\FCC_0\subsetneq\FAG\), this follows from \(B\subseteq T_s\) and
\(x\in B\); if \(\FCC_0=\FAG\), it follows from
\(B\in\Falneinc_\FAG(s)\) and the claim that every such \(B\) is contained in
\(T_s\). Hence \(\{x\}\subseteq\gamma^{\FCC_0,B,x}_\FDD\). Thus
\(\gamma^{\FCC_0,B,x}\in\Gamma_s\), and so \(\Omega_s\subseteq\Gamma_s\).

Coverage holds. Fix \(\FDD\subseteq\FAG\). The inclusion
\[
\{\gamma_\FDD\mid \gamma\in\Omega_s\}
\subseteq
\Facnei_\FDD(s)
\]
follows from the membership verification above: if \(\FDD=\FAG\), the
component is a singleton \(\{x\}\in\Facnei_\FAG(s)\); if
\(\FDD\subsetneq\FAG\), the component is either \(B\in\Facnei_\FDD(s)\) or
\(T_s\in\Facnei_\FDD(s)\).

Conversely, let \(B\in\Facnei_\FDD(s)\). If \(\FDD\subsetneq\FAG\), then
liveness gives \(B\neq\emptyset\). Choose \(x\in B\). Then
\(\gamma^{\FDD,B,x}\in\Omega_s\) and
\[
\gamma^{\FDD,B,x}_\FDD=B.
\]
If \(\FDD=\FAG\), then
\(B\in\Facnei_\FAG(s)=\Falneinc_\FAG(s)\), so by \(\PAL\)-determinism
\(B=\{x\}\) for some \(x\). Hence
\(\gamma^{\FAG,B,x}\in\Omega_s\) and
\[
\gamma^{\FAG,B,x}_\FAG=\{x\}=B.
\]
Therefore
\[
\Facnei_\FDD(s)
=
\{\gamma_\FDD\mid \gamma\in\Omega_s\}
\qquad
\text{for every }\FDD\subseteq\FAG.
\]

It remains to prove coherence. Let
\(\gamma=\gamma^{\FCC_0,B,x}\in\Omega_s\) and let
\(\FDD\subseteq\FAG\). We prove
\[
\gamma_\FDD
=
\bigcup
\{
\delta_\FAG
\mid
\delta\in\Omega_s
\text{ and }
\delta\equiv^s_\FDD\gamma
\}.
\tag{Det-Coh}
\label{eq:20-det-coherent}
\]
As in Case~1, if \(\delta\equiv^s_\FDD\gamma\), then
\[
\delta_\FAG\subseteq\delta_\FDD=\gamma_\FDD,
\]
because \(\delta\) is a power profile. Hence the right-hand side of
\eqref{eq:20-det-coherent} is contained in \(\gamma_\FDD\).

For the reverse inclusion, distinguish three cases.

First suppose \(\FDD=\FAG\). Then \(\gamma_\FAG=\{x\}\). Since
\(\gamma\equiv^s_\FAG\gamma\), the profile \(\gamma\) itself places
\(x\) in the union on the right-hand side of
\eqref{eq:20-det-coherent}. Hence \(\gamma_\FAG\) is contained in that union.

Second suppose \(\FDD\subsetneq\FAG\) and \(\FCC_0\subseteq\FDD\). Then
\(\gamma_\FDD=B\). Let \(y\in B\). The profile
\(\gamma^{\FCC_0,B,y}\) belongs to \(\Omega_s\). Moreover, for every
\(\FEE\subseteq\FDD\), the \(\FEE\)-component of
\(\gamma^{\FCC_0,B,y}\) is determined only by \(\FCC_0\), \(B\), and whether
\(\FCC_0\subseteq\FEE\); it is independent of the chosen element \(x\) or
\(y\). Therefore
\[
\gamma^{\FCC_0,B,y}\equiv^s_\FDD\gamma.
\]
Its grand-coalition component is \(\{y\}\), so every \(y\in B\) belongs to the
union on the right-hand side of \eqref{eq:20-det-coherent}. Thus
\(B\) is contained in that union.

Finally suppose \(\FDD\subsetneq\FAG\) and \(\FCC_0\nsubseteq\FDD\). Then
\(\gamma_\FDD=T_s\). Let \(y\in T_s\). By
\eqref{eq:16-alpha-det-singletons},
\[
\{y\}\in\Facnei_\FAG(s).
\]
Hence \(\gamma^{\FAG,\{y\},y}\in\Omega_s\). For every
\(\FEE\subseteq\FDD\), we have \(\FAG\nsubseteq\FEE\), and also
\(\FCC_0\nsubseteq\FEE\) because \(\FEE\subseteq\FDD\). Therefore
\[
\gamma^{\FAG,\{y\},y}_\FEE=T_s=\gamma_\FEE.
\]
Thus \(\gamma^{\FAG,\{y\},y}\equiv^s_\FDD\gamma\). Its grand-coalition
component is \(\{y\}\), so \(y\) belongs to the union on the right-hand side of
\eqref{eq:20-det-coherent}. Since \(y\in T_s\) was arbitrary, \(T_s\) is
contained in that union.

This proves \eqref{eq:20-det-coherent}. Hence \(\Omega_s\) is a coherent
profile cover at \(s\) in the deterministic case.

Since \(s\) was arbitrary, \(\FACNF\) is coherently coverable. Together with
actual triviality of the empty coalition and liveness, this proves that
\(\FACNF\) is \(\PAC\)-representative.

\medskip
\noindent
\textbf{4. Preservation of seriality and determinism.}

Assume first that \(\mathtt{S}\in\FXX\). Since \(\FALNF\) is
\(\PAL\)-serial, \(\Falnei_\emptyset(s)\neq\emptyset\) for every
\(s\in\FST\). Hence \(T_s\) is defined at every state.

If \(\mathtt{D}\notin\FXX\), then for every \(s\in\FST\) and every
\(\FCC\subseteq\FAG\), the witness \(T_s\) belongs to \(\Facnei_\FCC(s)\), as
shown in Case~1 above. Thus every \(\Facnei_\FCC(s)\) is nonempty.

If \(\mathtt{D}\in\FXX\), the same witness \(T_s\) shows that
\(\Facnei_\FCC(s)\neq\emptyset\) for every proper coalition
\(\FCC\subsetneq\FAG\). For \(\FCC=\FAG\), equation
\eqref{eq:15-alpha-det-successors} and \(T_s\neq\emptyset\) imply
\(\Falneinc_\FAG(s)\neq\emptyset\). Since
\(\Facnei_\FAG(s)=\Falneinc_\FAG(s)\), the grand-coalition actual
neighborhood is nonempty as well. Therefore \(\FACNF\) is \(\PAC\)-serial
whenever \(\mathtt{S}\in\FXX\).

Assume next that \(\mathtt{D}\in\FXX\). If
\(\Falnei_\emptyset(s)=\emptyset\), then \(\Facnei_\FAG(s)=\emptyset\). If
\(\Falnei_\emptyset(s)\neq\emptyset\), then by construction
\[
\Facnei_\FAG(s)=\Falneinc_\FAG(s).
\]
Every member of \(\Falneinc_\FAG(s)\) is a singleton by
\(\PAL\)-determinism. Hence every member of \(\Facnei_\FAG(s)\) is a
singleton, and \(\FACNF\) is \(\PAC\)-deterministic.

No other frame condition occurs in
\(\{\epsilon,\mathtt{S},\mathtt{D},\mathtt{SD}\}\). Hence \(\FACNF\) is a
\(\PAC\)-representative actual neighborhood \(\FXX\)-frame, and the proof is
complete.

\end{proof}

We can now prove the enoughness direction for alpha powers. The argument applies
the actual representation theorem to the actual basis just constructed, and
then uses upward closure to recover the original alpha neighborhoods.

%%%%%%%%%%%%%%%
%%%%%%%%%%%%%%%
\begin{theorem}[Alpha enoughness theorem for finite-agent frames without assuming independence]
\label{thm:08-alpha-enoughness-theorem-for-finite-agent-frames-without-assuming-independence}

Let \(\FAG\) be a finite nonempty set of agents, and let
\[
\FXX\in\{\epsilon,\mathtt{S},\mathtt{D},\mathtt{SD}\}.
\]
Let
\[
\FALNF=(\FST,\{\Falnei_\FCC\mid \FCC\subseteq\FAG\})
\]
be an \(\PAL\)-representative alpha neighborhood \(\FXX\)-frame. Then there
exists a general concurrent game \(\FXX\)-frame
\[
\FGCGF=
(\FST,\FAC,\{\Fav_\FCC\mid\FCC\subseteq\FAG\},
\{\Fout_\FCC\mid\FCC\subseteq\FAG\})
\]
such that \(\FGCGF\) is \(\PAL\)-representable by \(\FALNF\).

\end{theorem}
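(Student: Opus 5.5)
The plan is to combine Lemma~\ref{lem:12-actual-basis-lemma-for-finite-agent-alpha-frames-without-assuming-independence} with the enoughness direction of Corollary~\ref{cor:01-actual-representation-theorem-for-finite-agent-frames-without-independence}. First, apply the actual-basis lemma to \(\FALNF\). This gives a \(\PAC\)-representative actual neighborhood \(\FXX\)-frame \(\FACNF=(\FST,\{\Facnei_\FCC\mid\FCC\subseteq\FAG\})\) whose upward closure is the given alpha frame, in the sense of (U). Then apply the enoughness part of Corollary~\ref{cor:01-actual-representation-theorem-for-finite-agent-frames-without-independence}, via Theorem~\ref{thm:07-equivalence-of-finite-agent-pac-representativeness-and-pac-abstract-representativeness} and Theorem~\ref{thm:06-abstract-actual-enoughness-theorem}, to \(\FACNF\). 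This yields a general concurrent game \(\FXX\)-frame \(\FGCGF\) such that \(\FACEF_\FCC(s)=\Facnei_\FCC(s)\) for every \(s\) and every \(\FCC\). Seriality and determinism of \(\FGCGF\) come for free, since they are part of the \(\FXX\)-frame conclusion. Independence is never required because \(\mathtt I\notin\FXX\).

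It remains to check \(\FALEF_\FCC=\Falnei_\FCC\). This is a direct unfolding of Definition~\ref{def:05-alpha-and-actual-effectivity-functions-of-action-frames}. We have \(Y\in\FALEF_\FCC(s)\) iff \(\Fout_\FCC(s,\sigma_\FCC)\subseteq Y\) for some \(\sigma_\FCC\in\Fav_\FCC(s)\). That holds iff some \(X\in\FACEF_\FCC(s)=\Facnei_\FCC(s)\) satisfies \(X\subseteq Y\), and by (U) this holds iff \(Y\in\Falnei_\FCC(s)\). Hence \(\FGCGF\) is \(\PAL\)-representable by \(\FALNF\).

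I expect no real obstacle, because all the substantive work has been done beforehand. In particular, the construction of coherent covers from the alpha data, and the deterministic case where the grand-coalition basis must consist of singletons covering \(T_s\), are handled inside the actual-basis lemma. The only point needing care is bookkeeping. The enoughness theorem is phrased as ``\(\FGCGF\) is \(\PAC\)-representable by \(\FACNF\)'', so I must unpack it as the equality of actual effectivity functions with \(\Facnei\) before passing to upward closures.
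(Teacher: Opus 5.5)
Your proposal is correct and follows essentially the same route as the paper's proof. You apply the actual-basis lemma to obtain a \(\PAC\)-representative actual \(\FXX\)-frame satisfying (U), invoke the enoughness direction of Corollary~\ref{cor:01-actual-representation-theorem-for-finite-agent-frames-without-independence} to get a general concurrent game \(\FXX\)-frame with \(\FACEF_\FCC=\Facnei_\FCC\), and then read off \(\FALEF_\FCC=\Falnei_\FCC\) as the upward closure of \(\FACEF_\FCC\).
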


%%%%%%%%%%%%%%%
%%%%%%%%%%%%%%%
\begin{proof}

By Lemma~\ref{lem:12-actual-basis-lemma-for-finite-agent-alpha-frames-without-assuming-independence},
there exists an \(\PAC\)-representative actual neighborhood \(\FXX\)-frame
\[
\FACNF=(\FST,\{\Facnei_\FCC\mid \FCC\subseteq\FAG\})
\]
such that, for every \(s\in\FST\) and every \(\FCC\subseteq\FAG\),
\[
\Falnei_\FCC(s)
=
\{Y\subseteq\FST
\mid
\text{there exists }X\in\Facnei_\FCC(s)\text{ such that }X\subseteq Y
\}.
\tag{*}
\label{eq:21-alpha-recovery-from-actual-basis}
\]

By the enoughness direction of
Corollary~\ref{cor:01-actual-representation-theorem-for-finite-agent-frames-without-independence},
there exists a general concurrent game \(\FXX\)-frame
\[
\FGCGF=
(\FST,\FAC,\{\Fav_\FCC\mid \FCC\subseteq\FAG\},
\{\Fout_\FCC\mid \FCC\subseteq\FAG\})
\]
such that \(\FGCGF\) is \(\PAC\)-representable by \(\FACNF\). Thus the actual
effectivity function \(\FACEF_\FCC\) induced by \(\FGCGF\) satisfies
\[
\FACEF_\FCC(s)=\Facnei_\FCC(s)
\]
for every \(s\in\FST\) and every \(\FCC\subseteq\FAG\).

Let \(\FALEF_\FCC\) be the alpha effectivity function induced by the same game.
By definition of alpha effectivity,
\[
\FALEF_\FCC(s)
=
\{Y\subseteq\FST
\mid
\text{there exists }X\in\FACEF_\FCC(s)
\text{ such that }X\subseteq Y
\}.
\]
Since \(\FACEF_\FCC(s)=\Facnei_\FCC(s)\), equation
\eqref{eq:21-alpha-recovery-from-actual-basis} gives
\[
\FALEF_\FCC(s)=\Falnei_\FCC(s)
\]
for every \(s\in\FST\) and every \(\FCC\subseteq\FAG\).

Hence \(\FGCGF\) is \(\PAL\)-representable by \(\FALNF\). Since \(\FGCGF\) is
also a general concurrent game \(\FXX\)-frame, the theorem is proved.

\end{proof}

%%%%%%%%%%%%%%%
%%%%%%%%%%%%%%%
\subsubsection*{On finite actions}

The alpha representation theorem likewise does not guarantee that the
representing action set can be chosen finite, even when the state space is
finite.

The proof of alpha enoughness proceeds by first extracting a
\(\PAC\)-representative actual basis from the given alpha neighborhood frame
and then applying the actual representation theorem to that basis. When
\(\FST\) is finite, the extracted actual basis, as well as the coherent profile
covers constructed for it, is finite. The possible infinitude arises only in
the subsequent actual representation step: unfolding a finite coherent profile
cover may still produce infinitely many agent-equivalence classes, which become
actions in the abstract enoughness construction.

Thus, the proof does not establish a finite-action representation theorem.
This concerns the construction used here and does not show that a different
finite-action representation is impossible.

%%%%%%%%%%%%%%%
%%%%%%%%%%%%%%%
\section{Amalgamation of power profiles is insufficient for independence}
\label{sec:09-amalgamation-of-power-profiles-is-insufficient-for-independence}

The abstract representation theorem for the classes with independence uses an index-level condition.  At that level, independence is
expressed by abstract amalgamation: two abstract witnesses for disjoint
coalitions must have a common extension.

It is natural to get a similar amalgamation condition at the level of power profiles.

%%%%%%%%%%%%%%%
%%%%%%%%%%%%%%%
\begin{definition}[Coherent covers with power-profile amalgamation]
\label{def:32-coherent-covers-with-power-profile-amalgamation}

Let
\[
\FACNF=(\FST,\{\Facnei_\FCC\mid \FCC\subseteq\FAG\})
\]
be an actual neighborhood frame, where \(\FAG\) is finite and nonempty, and fix
\(s\in\FST\). Let \(\Gamma_s\) be the set of power profiles at \(s\).

A coherent profile cover
\[
\Omega_s\subseteq\Gamma_s
\]
at \(s\) has \Fdefs{power-profile amalgamation} if, for all disjoint coalitions
\(\FCC,\FDD\subseteq\FAG\) and all profiles
\(\gamma,\delta\in\Omega_s\), there exists a profile
\(\lambda\in\Omega_s\) such that
\[
\gamma\equiv^s_\FCC\lambda
\text{ and }
\delta\equiv^s_\FDD\lambda.
\]

We say that \(s\) is \Fdefs{coherently coverable with power-profile
amalgamation} if there exists a coherent profile cover at \(s\) with
power-profile amalgamation.  We say that \(\FACNF\) is \Fdefs{coherently
coverable with power-profile amalgamation} if every state is coherently
coverable with power-profile amalgamation.

\end{definition}

The natural question is whether this profile-level strengthening can replace
abstract amalgamation in the representation theorem for the independence
classes
\[
\mathtt I,\qquad \mathtt{SI},\qquad \mathtt{ID},\qquad \mathtt{SID}.
\]
More precisely, suppose one defines a candidate \(\PAC\)-representative actual
neighborhood \(\mathtt I\)-frame by requiring actual triviality of the empty
coalition, liveness, and coherent coverability with power-profile amalgamation,
and then adds the usual \(\PAC\)-seriality and/or \(\PAC\)-determinism for the
other three classes. Does this yield the corresponding actual representation
theorems for independent game frames? The answer is no. Power-profile
amalgamation is necessary, but not sufficient.

We first show that this condition is necessary.

%%%%%%%%%%%%%%%
%%%%%%%%%%%%%%%
\begin{lemma}[Abstract amalgamation induces power-profile amalgamation]
\label{lem:13-abstract-amalgamation-induces-power-profile-amalgamation}

Let
\[
\FACNF=(\FST,\{\Facnei_\FCC\mid \FCC\subseteq\FAG\})
\]
be an actual neighborhood frame, where \(\FAG\) is finite and nonempty.  Let
\[
\mathbb I=(\mathbb I_s)_{s\in\FST}
\]
be an amalgamating abstract actual presentation of \(\FACNF\).  For each
\(s\in\FST\) and each \(i\in I_s\), let \(\widehat{i}\) be the power profile at
\(s\) defined by
\[
(\widehat{i})_\FCC=\rho^s_\FCC(i)
\qquad
\text{for every }\FCC\subseteq\FAG.
\]
Put
\[
\Omega_s=\{\widehat{i}\mid i\in I_s\}.
\]
Then \(\Omega_s\) is a coherent profile cover at \(s\) with power-profile
amalgamation.  Hence \(\FACNF\) is coherently coverable with power-profile
amalgamation.

\end{lemma}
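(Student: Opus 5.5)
The plan is to reuse Lemma~\ref{lem:09-abstract-presentations-induce-coherent-profile-covers} for everything except amalgamation. Note that \(\widehat{i}\) is exactly the profile \(\gamma^i\) of Definition~\ref{def:27-profile-set-induced-by-an-abstract-actual-presentation}, so \(\Omega_s\) is the profile set induced by \(\mathbb I_s\). That lemma therefore already shows that \(\Omega_s\) is a coherent profile cover at \(s\). It remains only to verify power-profile amalgamation, and then to conclude coherent coverability with amalgamation by letting \(s\) range over \(\FST\).

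For amalgamation, fix disjoint coalitions \(\FCC,\FDD\subseteq\FAG\) and profiles \(\gamma,\delta\in\Omega_s\). By definition of \(\Omega_s\), choose indices \(i,j\in I_s\) with \(\gamma=\widehat{i}\) and \(\delta=\widehat{j}\). Since \(\mathbb I_s\) is amalgamating, there is \(k\in I_s\) with \(i\equiv^s_\FCC k\) and \(j\equiv^s_\FDD k\), where these are the abstract-index relations. Put \(\lambda=\widehat{k}\in\Omega_s\).

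The one point needing care is passing from index agreement to profile agreement below a coalition. I will check that \(i\equiv^s_\FCC k\) implies \(\widehat{i}\equiv^s_\FCC\widehat{k}\) as power profiles. By Definition~\ref{def:21-agreement-of-power-profiles-below-a-coalition}, this requires \(\rho^s_\FEE(i)=\rho^s_\FEE(k)\) for every \(\FEE\subseteq\FCC\). That is exactly item~3 of Lemma~\ref{lem:01-basic-properties-of-presented-powers}, the same step already used in Step~3 of the proof of Lemma~\ref{lem:09-abstract-presentations-induce-coherent-profile-covers}. The identical argument gives \(\widehat{j}\equiv^s_\FDD\widehat{k}\), so \(\lambda\) witnesses amalgamation for \(\gamma,\delta\).

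I expect no real obstacle. The only subtlety is keeping the two meanings of \(\equiv^s_\FCC\) apart, namely index agreement versus agreement of power profiles below \(\FCC\), and noting that the latter is weaker and follows from the former by item~3. Disjointness of \(\FCC\) and \(\FDD\) is used only to invoke the amalgamation hypothesis on \(\mathbb I_s\). Finally, every state has such a cover, so \(\FACNF\) is coherently coverable with power-profile amalgamation.
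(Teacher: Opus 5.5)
Your proposal is correct and follows the paper's proof essentially step for step. You obtain coherence and coverage from Lemma~\ref{lem:09-abstract-presentations-induce-coherent-profile-covers}, take the amalgamating index $k$ with $\lambda=\widehat{k}$, and use item~3 of Lemma~\ref{lem:01-basic-properties-of-presented-powers} to turn index agreement on $\FCC$ and $\FDD$ into profile agreement below those coalitions.
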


%%%%%%%%%%%%%%%
%%%%%%%%%%%%%%%
\begin{proof}

By Lemma~\ref{lem:09-abstract-presentations-induce-coherent-profile-covers},
\(\Omega_s\) is a coherent profile cover at \(s\).  It remains to verify
power-profile amalgamation.

Let \(\gamma,\delta\in\Omega_s\), and let
\(\FCC,\FDD\subseteq\FAG\) be disjoint.  Choose \(i,j\in I_s\) such that
\[
\gamma=\widehat{i}
\text{ and }
\delta=\widehat{j}.
\]
Since \(\mathbb I\) is amalgamating, there exists \(k\in I_s\) such that
\[
i\equiv^s_\FCC k
\text{ and }
j\equiv^s_\FDD k.
\]
Let
\[
\lambda=\widehat{k}.
\]
If \(\FEE\subseteq\FCC\), then
Lemma~\ref{lem:01-basic-properties-of-presented-powers} gives
\[
\gamma_\FEE
=
\rho^s_\FEE(i)
=
\rho^s_\FEE(k)
=
\lambda_\FEE.
\]
Thus \(\gamma\equiv^s_\FCC\lambda\).  The same argument, using
\(j\equiv^s_\FDD k\), gives
\[
\delta\equiv^s_\FDD\lambda.
\]
Hence \(\Omega_s\) has power-profile amalgamation.

\end{proof}

We now show that the condition is not sufficient. The following example gives
an actual neighborhood frame which is coherently coverable with power-profile
amalgamation, live, trivial for the empty coalition, \(\PAC\)-serial, and
\(\PAC\)-deterministic, but which represents no independent general concurrent
game frame.

%%%%%%%%%%%%%%%
%%%%%%%%%%%%%%%
\begin{example}[Power-profile amalgamation does not suffice for game independence]
\label{ex:04-power-profile-amalgamation-does-not-suffice-for-game-independence}

Let
\[
\FAG=\{a,b,c\},
\qquad
\FST=\{s,u_0,u_1\},
\qquad
O=\{u_0,u_1\}.
\]
We define an actual neighborhood frame whose neighborhoods are the same at
every state \(t\in\FST\).  They are given by
\[
\begin{array}{c|c}
\FCC & \Facnei_\FCC(t)\\
\hline
\emptyset,\{a\},\{b\},\{c\}
&
\{O\}
\\[1mm]
\{a,b\},\{a,c\},\{b,c\},\{a,b,c\}
&
\{\{u_0\},\{u_1\}\}.
\end{array}
\]
Thus the empty coalition and each singleton have only the coarse actual power
\(O\), while every two-agent coalition and the grand coalition have exactly the
two singleton powers \(\{u_0\}\) and \(\{u_1\}\).

The frame is trivial for the empty coalition, since
\(\Facnei_\emptyset(t)=\{O\}\) for every \(t\). It is live, since all listed
powers are nonempty. It is also \(\PAC\)-serial, since every neighborhood is nonempty, and
\(\PAC\)-deterministic, since every grand-coalition power is a singleton.

\medskip

\noindent
\emph{Step 1: Coherent coverability with power-profile amalgamation.}
Fix \(t\in\FST\).  Since the neighborhoods do not depend on \(t\), the same
construction works at every state.  Define two power profiles
\(\gamma,\delta\in\Gamma_t\) as follows:
\[
\begin{array}{c|c|c|c|c}
&
\emptyset
&
\{a\},\{b\},\{c\}
&
\{a,b\},\{a,c\},\{b,c\}
&
\{a,b,c\}
\\
\hline
\gamma
&
O
&
O
&
\{u_0\}
&
\{u_0\}
\\[1mm]
\delta
&
O
&
O
&
\{u_1\}
&
\{u_1\}.
\end{array}
\]
Put
\[
\Omega_t=\{\gamma,\delta\}.
\]

We first check that \(\Omega_t\) is a coherent profile cover at \(t\).
Membership and inclusion monotonicity are immediate from the table of
neighborhoods.  Coverage is also immediate: the empty coalition and the
singletons have the unique power \(O\), while every two-agent coalition and the
grand coalition have exactly the two powers \(\{u_0\}\) and \(\{u_1\}\).

It remains to check coherence.  Let \(\lambda\in\Omega_t\) and
\(\FCC\subseteq\FAG\).

If \(|\FCC|\leq 1\), then \(\gamma\) and \(\delta\) agree below \(\FCC\).  Hence
the grand-coalition components of all profiles in \(\Omega_t\) agreeing with
\(\lambda\) below \(\FCC\) are \(\{u_0\}\) and \(\{u_1\}\).  Their union is
\(O\), and therefore
\[
\lambda_\FCC
=
O
=
\{u_0\}\cup\{u_1\}.
\]

If \(|\FCC|\geq 2\), then the \(\FCC\)-component distinguishes \(\gamma\) from
\(\delta\).  Hence the only profile in \(\Omega_t\) agreeing with \(\lambda\)
below \(\FCC\) is \(\lambda\) itself.  Since, in both \(\gamma\) and \(\delta\),
the \(\FCC\)-component and the grand-coalition component coincide, we have
\[
\lambda_\FCC=\lambda_{\{a,b,c\}}.
\]
Thus every profile in \(\Omega_t\) is supported by \(\Omega_t\), and
\(\Omega_t\) is coherent.

We next check that \(\Omega_t\) has power-profile amalgamation.  Let
\(\eta,\theta\in\Omega_t\), and let \(\FCC,\FDD\subseteq\FAG\) be disjoint.  We
need some \(\lambda\in\Omega_t\) such that \(\lambda\) agrees with \(\eta\)
below \(\FCC\) and agrees with \(\theta\) below \(\FDD\).

If both \(\FCC\) and \(\FDD\) have size at most one, then all profiles in
\(\Omega_t\) agree below both coalitions.  We may take \(\lambda=\eta\).

If \(|\FCC|\geq 2\), then \(|\FDD|\leq 1\), since \(\FCC\) and \(\FDD\) are
disjoint subsets of the three-agent set \(\FAG\).  Take \(\lambda=\eta\).  Then
\(\lambda\) agrees with \(\eta\) below \(\FCC\) trivially, and it agrees with
\(\theta\) below \(\FDD\), because all profiles in \(\Omega_t\) agree below
every coalition of size at most one.

The case \(|\FDD|\geq 2\) is symmetric.  Then \(|\FCC|\leq 1\), and we take
\(\lambda=\theta\).

Hence \(\Omega_t\) has power-profile amalgamation.  Since \(t\in\FST\) was
arbitrary, the frame is coherently coverable with power-profile amalgamation at
every state.  Moreover, it satisfies actual triviality of the empty coalition
and liveness.

\medskip

\noindent
\emph{Step 2: A constraint forced by independent game representation.}
Suppose, toward a contradiction, that there exists an independent general
concurrent game frame \(\FGCGF\) such that \(\FGCGF\) is
\(\PAC\)-representable by the actual neighborhood frame above.

Since
\[
\Facnei_{\{a\}}(s)
=
\Facnei_{\{b\}}(s)
=
\Facnei_{\{c\}}(s)
=
\{O\},
\]
every available individual action of \(a\), \(b\), or \(c\) has exact outcome
\(O\).  Since
\[
\Facnei_{\{a,b\}}(s)
=
\Facnei_{\{a,c\}}(s)
=
\Facnei_{\{b,c\}}(s)
=
\{\{u_0\},\{u_1\}\},
\]
every available two-agent action has exact outcome either \(\{u_0\}\) or
\(\{u_1\}\).

We record the key consequence.  For every available full action
\[
(\alpha,\beta,\chi)
\]
of \(a,b,c\) at \(s\), the three two-agent restrictions must have the same exact
outcome:
\begin{equation}
\tag{*}
\label{eq:22-three-pair-outcomes-must-agree}
\Fout_{\{a,b\}}(s,\alpha,\beta)
=
\Fout_{\{a,c\}}(s,\alpha,\chi)
=
\Fout_{\{b,c\}}(s,\beta,\chi).
\end{equation}
Indeed, by the ODA-condition,
\[
\Fout_{\{a,b,c\}}(s,\alpha,\beta,\chi)\neq\emptyset.
\]
By outcome monotonicity, the grand-coalition outcome is contained in the outcome
of each two-agent restriction.  Hence each two-agent restriction has nonempty
outcome and is available, again by the ODA-condition.  By
\(\PAC\)-representability, each of the three two-agent outcomes is one of
\(\{u_0\}\) and \(\{u_1\}\).  Since a nonempty set cannot be contained in both
\(\{u_0\}\) and \(\{u_1\}\), the three two-agent outcomes must coincide.

\medskip

\noindent
\emph{Step 3: The independence contradiction.}
By representability of the \(\{a,b\}\)-neighborhood, choose available
\(\{a,b\}\)-actions
\[
(\alpha_0,\beta_0)
\text{ and }
(\alpha_1,\beta_1)
\]
such that
\[
\Fout_{\{a,b\}}(s,\alpha_0,\beta_0)=\{u_0\},
\qquad
\Fout_{\{a,b\}}(s,\alpha_1,\beta_1)=\{u_1\}.
\]
Also choose an available action \(\chi\) of \(c\), which exists because
\(O\in\Facnei_{\{c\}}(s)\).

By independence, both full actions
\[
(\alpha_0,\beta_0,\chi)
\text{ and }
(\alpha_1,\beta_1,\chi)
\]
are available at \(s\).  Applying
\eqref{eq:22-three-pair-outcomes-must-agree} to the first full action gives
\[
\Fout_{\{a,c\}}(s,\alpha_0,\chi)=\{u_0\},
\qquad
\Fout_{\{b,c\}}(s,\beta_0,\chi)=\{u_0\}.
\]
Applying \eqref{eq:22-three-pair-outcomes-must-agree} to the second full action
gives
\[
\Fout_{\{a,c\}}(s,\alpha_1,\chi)=\{u_1\},
\qquad
\Fout_{\{b,c\}}(s,\beta_1,\chi)=\{u_1\}.
\]

Now consider the mixed tuple
\[
(\alpha_1,\beta_0,\chi).
\]
We first show that it is available.  Since
\[
\Fout_{\{a,b\}}(s,\alpha_1,\beta_1)=\{u_1\},
\]
outcome monotonicity gives
\[
\Fout_{\{a\}}(s,\alpha_1)\neq\emptyset.
\]
Hence \(\alpha_1\) is available for \(a\) by the ODA-condition.  Also,
\[
\Fout_{\{b,c\}}(s,\beta_0,\chi)=\{u_0\},
\]
so \((\beta_0,\chi)\) is available for \(\{b,c\}\), again by the
ODA-condition.  Since \(\{a\}\) and \(\{b,c\}\) are disjoint, independence
yields that
\[
(\alpha_1,\beta_0,\chi)
\]
is an available full action.

But the \(\{a,c\}\)-restriction of this full action has outcome
\[
\Fout_{\{a,c\}}(s,\alpha_1,\chi)=\{u_1\},
\]
whereas its \(\{b,c\}\)-restriction has outcome
\[
\Fout_{\{b,c\}}(s,\beta_0,\chi)=\{u_0\}.
\]
This contradicts \eqref{eq:22-three-pair-outcomes-must-agree}.  Therefore no
independent general concurrent game frame can be \(\PAC\)-representable by the
actual neighborhood frame constructed above.

\medskip

\noindent
\emph{Step 4: Separation from index-level abstract amalgamation.}
The same example also separates the profile-level condition from the index-level
condition.  The frame satisfies actual triviality of the empty coalition and
liveness, and, as shown above, it is coherently coverable with power-profile
amalgamation.

If it had an amalgamating abstract actual presentation, then it would be a
\(\PAC\)-abstract-representative actual neighborhood \(\mathtt{I}\)-frame.  By
Theorem~\ref{thm:06-abstract-actual-enoughness-theorem}, there would then exist an
independent general concurrent game frame \(\PAC\)-representable by this actual
neighborhood frame.  This contradicts the non-representability result proved
above.

Hence power-profile amalgamation does not imply index-level abstract
amalgamation.

\end{example}

The example shows that independence requires more crucial information than power-profile amalgamation can provide. The reason is that power profiles record only extensional coalition powers.
They do not remember the identity of the underlying individual action
components across different profiles.

This leaves a genuine open problem.  Does a purely neighborhood-theoretic
characterization of the independence classes exist, or must any complete
characterization reconstruct an index-like witness structure?

%%%%%%%%%%%%%%%
%%%%%%%%%%%%%%%
\section{Concluding remarks}
\label{sec:10-concluding-remarks}

This paper has studied the representation of actual and alpha powers in general
concurrent game frames by neighborhood frames. The general aim was to identify
conditions on neighborhood frames that are both necessary and sufficient for
representing powers of general concurrent game frames.

For actual powers, we considered representation conditions at two main levels.
At the index level, we introduced abstract actual presentations and the
corresponding notions of \(\PAC\)-abstract-representativeness. We showed that, for each of the eight classes of general
concurrent game frames determined by seriality, independence of agents, and
determinism, the corresponding notion of
\(\PAC\)-abstract-representativeness is necessary and sufficient for actual
representation.

At the power-profile level, we introduced coherent profile covers and the corresponding
notions of \(\PAC\)-representativeness not involving independence. For the four classes of general concurrent game
frames that do not impose independence, we showed that the corresponding
notions of \(\PAC\)-representativeness are equivalent to the corresponding
notions of \(\PAC\)-abstract-representativeness. Consequently, they are necessary and sufficient for actual
representation.

For alpha powers, we used the actual representation results to obtain
representation theorems for the corresponding four classes of general concurrent game frames not involving independence. In
particular, the previously introduced notions of \(\PAL\)-representativeness at
the level of alpha neighborhoods are necessary and sufficient.

Finally, we examined independence from the power-profile perspective. We
introduced a natural power-profile amalgamation condition and showed that it is
necessary but not sufficient.

Several questions remain open.
First, it remains to find an intrinsic
condition on actual neighborhood frames to represent actual powers arising from independent general concurrent game frames.
A second direction concerns finiteness. It would be useful to characterize, at
the neighborhood-frame level, the powers arising from general concurrent game
frames with finite state sets and finite action sets.
Finally, coherent coverability gives an exact index-free condition for actual
representability without independence, but it is still not a simple power-level condition in the two-agent style. It
remains open whether coherent coverability admits an equivalent
power-level reformulation.

%%%%%%%%%%%%%%%
%%%%%%%%%%%%%%%
\subsection*{Declaration of AI assistance in the research and writing process}

During the preparation of this work, the authors used several AI-assisted tools as part of an iterative research and writing
process. These tools were used to discuss definitions, examples, proof strategies, and presentations, and to assist with language
polishing and revision of \LaTeX{} code. The authors critically
reviewed, edited, and verified the resulting content, including all
definitions, examples, theorems, proofs, and references.
The authors take full responsibility for the whole paper.

%\subsection*{Acknowledgements}

\bibliographystyle{alpha}
\bibliography{Strategic-reasoning}

\end{document}